\documentclass[11pt]{article}
\usepackage{style}
\usepackage[left=1in,bottom=1in,top=1in,right=1in]{geometry}
\hypersetup{%
	colorlinks   = true, %
	urlcolor     = blue, %
	linkcolor    = blue, %
	citecolor    = blue,  %
}

\title{Optimal Unambiguous DNFs and Alon-Saks-Seymour}
\author{Chirag Pabbaraju\thanks{Stanford University. Email: \texttt{cpabbara@cs.stanford.edu.}}}
\date{\today}

\begin{document}

\maketitle

\begin{abstract}
    We construct unambiguous DNFs having width $O(n)$ but $0$-certificate complexity $\Omega(n^2)$. By utilizing the special structure of these DNFs, we prove a lifting theorem with a constant-sized gadget that lifts the DNF to a communication problem, while losslessly translating the separation in certificate complexity to a separation in communication complexity. This leads to an optimal refutation of the Alon-Saks-Seymour conjecture, as well as an optimal communication lower bound for the Clique versus Independent Set problem, improving the previous results of Balodis, Ben-David, G\"{o}\"{o}s, Jain and Kothari (FOCS 2021, SICOMP 2023) by several doubly logarithmic factors. As further applications of our construction to query complexity and learning theory, we exhibit: (a) a family of Boolean functions that has an optimal quartic separation between certificate complexity and approximate degree, and (b) a sample compression lower bound of $\Omega(\sqrt{\log c})$ for multiclass concept classes over $c$ labels. 
\end{abstract}

\newpage

\section{Introduction}
\label{sec:intro}

We begin by describing three puzzles studied in the work of \citet*{balodis2023unambiguous} to set up the context.

\paragraph{Unambiguous DNFs.} A DNF (Disjunctive Normal Form) $f=C_1 \vee \dots \vee C_m$ is a Boolean function that is the disjunction of terms $C_i$, where each $C_i=x_{i_1} \wedge \dots \wedge \neg x_{i_\ell}$ is a conjunction of literals (i.e., a variable or its negation). The DNF has term-width $\ell$ if every $C_i$ is a conjunction on at most $\ell$ literals. The DNF is \textit{unambiguous} if whenever $f=1$, exactly one term evaluates to $1$.

For any Boolean function on $n$ variables, a partial assignment $\rho \in \{0,1,*\}^n$ is a $0$-certificate for $x \in f^{-1}(0)$ if $\rho$ is consistent with $x$ (i.e., $\rho_i = x_i$ wherever $\rho_i \neq *$), and for any $x' \in \{0,1\}^n$ consistent with $\rho$, it holds that $f(x')=0$. We define the domain of $\rho$ to be $\dom(\rho)=\{i: \rho_i \neq *\}$, and the size of the certificate $\rho$ to be $|\rho|=|\dom(\rho)|$. The $0$-certificate complexity of $f$ at $x$, denoted $\C_0(f, x)$, is the size of the smallest 0-certificate for $x$. The $0$-certificate complexity of $f$, denoted $\C_0(f)$, is defined to be $\max_{x \in \{0,1\}^n}\C_0(f, x)$. The definitions for $1$-certificates and $1$-certificate complexity $\C_1(f)$ are analogous. Finally, the certificate complexity of $f$, denoted $\C(f)$, is defined as $\C(f)=\max(\C_0(f),\C_1(f))$.

The unambiguous $1$-certificate complexity of a function $f$, denoted $\UC_1(f)$, is the smallest $\ell$ such that $f$ is equivalent to an unambiguous DNF having term-width $\ell$. It was shown by \citet{goos2015lower} that for any function $f$, it holds that $\C_0(f) \le \UC_1(f)^2$. The first puzzle in \citet{balodis2023unambiguous} asks how large the separation can be in the opposite direction:

\paragraph{Puzzle I.} For $\alpha > 1$, does there exist a family of Boolean functions $f$ with $\C_0(f) \ge \Omega(\UC_1(f)^\alpha)$?
\\

Motivations for studying this puzzle arise from its implications to the Alon-Saks-Seymour conjecture \citep{kahn1991recent} and the classical Clique versus Independent Set problem \citep{yannakakis1988expressing} in communication complexity (see \Cref{sec:alon-saks-seymour-clique-independent-set} below). In light of the upper bound on $\C_0(f)$ above, the best possible exponent $\alpha$ for Puzzle I can be $\alpha=2$. Indeed, the exponent was confirmed to be 2 --- up to polylogarithmic factors --- by \citet{balodis2023unambiguous}, who constructed a family of functions $f$ satisfying $\C_0(f) \ge \widetilde{\Omega}(\UC_1(f)^2)$. More precisely, the separation shown in their work was (see Theorem 3.2 in \citet*{cheung2023online}):
\begin{align}
    \label{eqn:unambiguous-dnfs-previous-best-separation}
    \C_0(f) \ge {\Omega}\left(\frac{\UC_1(f)^2}{\log^6(n)}\right).
\end{align}

\paragraph{Partial Functions.} The notion of certificates described above naturally extends to \textit{partial functions} $f$, where $f(x)$ may be undefined at some values of $x$ (i.e., $f(x)=*$). Namely, a (Boolean) partial function is a map $f:\{0,1\}^n \to \{0,1,*\}$. In addition to the $0$ and $1$ certificate complexities of $f$ defined above, we can also define the $\overline{0}$ (read ``not-0'') and $\overline{1}$ (read ``not-1'') certificate complexities of $f$. Namely, a partial input $\rho \in \{0,1,*\}^n$ is a $\overline{0}$-certificate for $f$ at $x$, if $\rho$ is consistent with $x$, and for every $x'$ consistent with $\rho$, it holds that $f(x') \neq 0$. The $\overline{0}$-certificate complexity of $f$ at $x$, denoted $\C_{\overline{0}}(f, x)$, is the size of the smallest $\overline{0}$-certificate for $x$. The $\overline{0}$-certificate complexity of $f$, denoted $\C_{\overline{0}}(f)$, is defined as $\max_{x \in \{0,1\}^n\,:\, f(x) \neq 0} \C_{\overline{0}}(f, x)$. We similarly define the $\overline{1}$-certificate complexity $\C_{\overline{1}}(f)$. 

The second puzzle studied in \citet{balodis2023unambiguous} pertains to the separation between $\C_{\overline{0}}(f), \C_{\overline{1}}(f)$ and $\C(f)$:

\paragraph{Puzzle II.} For $\alpha > 1$, does there exist a family of partial functions $f$ together with an $x \in f^{-1}(*)$ such that both $\C_{\overline{0}}(f, x)$ and $\C_{\overline{1}}(f, x)$ are at least $\Omega(\C(f)^\alpha)$?
\\

\citet{balodis2023unambiguous} constructed a family of partial functions $f$, for which the separation above is witnessed up to polylogarithmic factors with $\alpha=2$. Concretely, they showed
\begin{align}
    \label{eqn:partial-functions-previous-best-separation}
    \min\left\{\C_{\overline{0}}(f, x), \C_{\overline{1}}(f, x)\right\} \ge \Omega\left(\frac{\C(f)^2}{\log^2(n)}\right).
\end{align}

\paragraph{Intersecting Hypergraphs.} The third and final puzzle from \citet{balodis2023unambiguous} is purely graph-theoretic. First, the definitions again: a hypergraph $G=(V,E)$ is said to be \textit{intersecting} if $e \cap e' \neq \emptyset$ for every $e, e' \in E$. The rank $r(G)$ of a hypergraph is the size of its largest edge. A subset $U \subseteq V$ is said to be $c$-monochromatic with respect to a coloring function $c:V \to \{0,1\}$ if $c(u)$ is the same for every $u \in U$. Finally, a set $S \subseteq V$ is a \textit{hitting set} for $E$ if $S \cap e \neq \emptyset$ for every $e \in E$.

\paragraph{Puzzle III.} For $\alpha > 1$, does there exist a family of intersecting hypergraphs $G=(V,E)$ together with a coloring $c:V \to \{0,1\}$ such that every $c$-monochromatic hitting set has size at least $\Omega(r(G)^\alpha)$?
\\

For this problem, \citet{balodis2023unambiguous} constructed a family of intersecting hypergraphs $G$ and coloring functions $c$ such that %
every $c$-monochromatic hitting set has size at least 
\begin{align}
    \label{eqn:intersecting-hypergraphs-previous-best-separation}
    \Omega\left(\frac{r(G)^2}{\log^2(n)}\right).
\end{align}

In summary, \cite{balodis2023unambiguous} give solutions to each of Puzzles I, II and III with exponent $\alpha=2$, up to polylogarithmic factors. In fact, \citet{balodis2023unambiguous} explicitly only give a solution for Puzzle II, and thereafter derive implications showing that all the three puzzles are equivalent up to polylogarithmic factors. Therefore, since the exponent $\alpha=2$ is optimal for Puzzle I, it is also the optimal exponent for Puzzles II and III, at least up to polylogarithmic factors. To our knowledge, the bounds in \eqref{eqn:unambiguous-dnfs-previous-best-separation}, \eqref{eqn:partial-functions-previous-best-separation} and \eqref{eqn:intersecting-hypergraphs-previous-best-separation} are the best-known bounds for the three puzzles.

\subsection{Optimal Solutions to Puzzles I, II and III}
\label{sec:puzzles-optimality}

Our first contribution is to give optimal solutions to all the three puzzles above which are completely devoid of any polylogarithmic factors.

\begin{restatable}[Unambiguous DNFs]{theorem}{UnambiguousDNF}
    \label{thm:unambigous-dnf}
    For any $n \ge 2$, there exists an unambiguous DNF $f:\{0,1\}^{n^2} \to \{0,1\}$ satisfying 
    \begin{align*}
        \UC_1(f)=O(n), \qquad \C_0(f)=\Omega(n^2).
    \end{align*}
\end{restatable}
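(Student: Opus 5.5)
The approach is to exhibit one explicit family of pairwise-conflicting terms over the $n\times n$ grid of variables $x_{a,b}$, and to take $f$ to be their disjunction. Unambiguity is then automatic: if every two terms $T\neq T'$ contain complementary literals on some variable, no input satisfies both, so exactly one term is satisfied whenever $f=1$. If every term has $O(n)$ literals, we get $\UC_1(f)=O(n)$ immediately. The whole content is therefore the lower bound $\C_0(f)\ge\C_0(f,x^*)=\Omega(n^2)$ at a single well-chosen input $x^*\in f^{-1}(0)$; I will take $x^*=0^{n^2}$.

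\textbf{Reduction to a hitting-set bound.} For a term $T$, let $D_T$ be the set of variables on which $T$ requires a value different from $x^*$; here $D_T$ is the set of positive literals of $T$. A partial assignment $\rho$ consistent with $x^*$ falsifies $T$ exactly when $\dom(\rho)\cap D_T\neq\emptyset$. Otherwise $\rho$ extends to an input satisfying $T$, and that input has $f=1$. Hence
\[
\C_0(f,x^*)=\min\{|H| : H\cap D_T\neq\emptyset \text{ for all terms } T\},
\]
and I need to show that the hypergraph $\{D_T\}$ has no hitting set of size $o(n^2)$. I will lower-bound this via a distribution $\mu$ over terms: any hitting set $H$ satisfies $1\le\sum_{c\in H}\Pr_{T\sim\mu}[c\in D_T]$, so $|H|\ge 1/\max_c\Pr_\mu[c\in D_T]$.

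Since $\sum_c\Pr[c\in D_T]=\mathbb{E}|D_T|$, achieving $\Omega(n^2)$ this way forces two things:
\begin{itemize}
\item each term has only $O(1)$ positive literals;
\item the positive literals are spread near-uniformly over all $n^2$ cells.
\end{itemize}
Meanwhile, up to $O(n)$ negative literals per term are available to create the pairwise conflicts.

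\textbf{The construction.} I would first try terms indexed by a cell $(i,j)$ plus a constant amount of auxiliary data. Each such term asserts $x_{i,j}=1$ and a few other $1$'s, together with $O(n)$ zeros placed along row $i$ and column $j$. The zeros are arranged in an ordered (lexicographic) fashion. For example, $T_{i,j}$ asks for $0$ at $(i,j')$ for $j'<j$; this already makes two terms in the same row conflict. The additional constant number of $1$'s, placed in row/column positions determined by $(i,j)$, should force a conflict between terms whose cells lie in different rows and columns. This is the standard way to make a short unambiguous certificate system: each term points to a witness and rules out earlier witnesses locally. With $\mu$ uniform over the index set and each cell appearing as a positive literal in an $O(1/n^2)$ fraction of terms, the bound above gives $\C_0(f,x^*)=\Omega(n^2)$. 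It also remains to check that $x^*$ itself satisfies no term, which holds because every term has a positive literal.

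\textbf{Main obstacle.} The hard step is the pairwise-conflict requirement under these constraints. A naive counting check shows that single-$1$ terms cannot work. There are $\Theta(n^4)$ pairs of terms to separate, but only $O(n^3)$ (term, zero-literal) incidences, and a conflict between two single-$1$ terms requires one term's zero to sit exactly on the other's unique $1$. So each term needs a constant number $k\ge2$ of positive literals chosen with enough structure that a zero of one term lands on some $1$ of the other. I would search for this structure recursively or algebraically: for instance, identify rows with $\mathbb{F}_q$ and place the $1$'s on lines or low-degree curves, so that two distinct curves meet or nearly meet in a controlled row, and put the zeros along the row/column where they diverge. If a direct design fails, the fallback is to take the near-optimal construction of \citet{balodis2023unambiguous} and remove its polylogarithmic losses, by replacing its pointer/indexing gadgets with the constant-overhead ordering trick above.
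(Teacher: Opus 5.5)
Your framing matches the paper: pairwise-conflicting terms give unambiguity, and at $x^*=0^{n^2}$ a partial assignment consistent with $x^*$ is a $0$-certificate exactly when its domain hits every positive set $D_T$. The gap is in how you plan to lower-bound that hitting number. The bound $|H|\ge 1/\max_c\Pr_\mu[c\in D_T]$ is a \emph{fractional} hitting bound. For any family of pairwise-conflicting terms of width $w$, it never exceeds $4w$, so it cannot give $\Omega(n^2)$ when $w=O(n)$. To see this, draw $T,T'$ independently from $\mu$.
If $\mu$ has an atom $T_0$ of mass above $1/2$, then any $c\in D_{T_0}$ (nonempty, since $x^*$ is a $0$-input) has $\Pr_\mu[c\in D_T]>1/2$.
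Otherwise $\Pr[T\neq T']\ge 1/2$. Each conflict between $T\neq T'$ puts a variable of $D_T$ among the at most $w$ variables of $T'$, or vice versa. A union bound then gives $\tfrac12\le 2w\max_c\Pr_\mu[c\in D_T]$.

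So the object you are searching for does not exist, whatever algebraic structure you impose: $t=O(1)$ positive literals per term, spread near-uniformly, $O(n)$ zeros, all pairs conflicting. With $N$ terms and the spread your bound needs, each cell is positive in $O(tN/n^2)$ terms. Hence the number of coincidences between a zero of one term and a one of another is $O(N\cdot n\cdot tN/n^2)=O(tN^2/n)$. This is far below the $\binom N2$ conflicts required unless $t=\Omega(n)$. Your single-$1$ counting obstruction therefore improves only by a factor $t$; taking $t\ge 2$ does not remove it. The proposal also never pins down the terms, and the fallback of stripping the logarithms from \citet{balodis2023unambiguous} is not an argument.

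The missing idea is to exploit the \emph{integrality gap} of the hitting number. Positive sets must have size $\Theta(n)$, and the quadratic bound must be proved integrally, not fractionally.

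The paper splits the variables into buckets $B_1,\dots,B_n$ of size $n$. It takes one term for each $(i,S)$ with $S\subseteq B_i$, $|S|=k=\lfloor n/2\rfloor$, whose positive set is $S$. The fractional hitting number of these sets is at most $n^2/k\approx 2n$. Yet any hitting set must contain $n-k+1$ elements of every bucket, so it has size $\Omega(n^2)$.

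Conflicts within a bucket come from putting $B_i\setminus S$ into the negative set. Across buckets, the paper fixes random bijections $\lambda^{\{i,j\}}_i:B_i\to[n]$ and $\lambda^{\{i,j\}}_j:B_j\to[n]$. The negative set of $(i,S)$ also contains the prefix $A_{i,j}(S)=\{y\in B_j:\lambda^{\{i,j\}}_j(y)\le m_{i,j}(S)\}$, where $m_{i,j}(S)=\min_{x\in S}\lambda^{\{i,j\}}_i(x)$. Comparing $m_{i,j}(S)$ with $m_{j,i}(S')$, the term whose minimum is not larger has a positive literal inside the other's prefix.

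Finally, the minimum of a random $k$-subset of $[n]$ has an exponential tail. An exponential-moment bound plus a union bound over the $n\binom nk$ terms gives $\sum_{j\neq i}m_{i,j}(S)=O(n)$ for every term simultaneously, and hence width $O(n)$.
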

By reparameterizing, we obtain that $\C_0(f)=\Omega(\UC_1(f)^2)$, which improves over the bound in \eqref{eqn:unambiguous-dnfs-previous-best-separation} by a $\log^6(n)$ factor, and gives an optimal resolution of Puzzle I up to constant factors, given that the upper bound $\C_0(f) \le \UC_1(f)^2$ of \citet{goos2015lower} holds for every $f$.

Using the implication Puzzle I $\to$ Puzzle II  given in \cite[Section 5.2]{balodis2023unambiguous}, together with \Cref{thm:unambigous-dnf} above, we get:
\begin{corollary}[Partial Functions]
    \label{corollary:partial-functions}
    For any $n \ge 2$, there exists a partial function $f:\{0,1\}^{2n^2} \to \{0,1,*\}$, such that there exists an input $x \in f^{-1}(*)$ satisfying
    \begin{align*}
        \min\left\{\C_{\overline{0}}(f, x), \C_{\overline{1}}(f, x)\right\} \ge \Omega(n^2), \qquad \C(f) \le O(n).
    \end{align*}
\end{corollary}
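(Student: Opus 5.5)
The plan is to apply the Puzzle I $\to$ Puzzle II reduction of \cite[Section 5.2]{balodis2023unambiguous} to the unambiguous DNF $f:\{0,1\}^{n^2}\to\{0,1\}$ from \Cref{thm:unambigous-dnf}. I would use the standard form of that reduction, reconstructed below, and check that it loses only constant factors.

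\textbf{Construction.} Write $f=C_1\vee\dots\vee C_m$ unambiguously, with every term of width $O(n)$. Fix a $0$-input $z\in f^{-1}(0)$ with $\C_0(f,z)=\Omega(n^2)$. Define a partial function $g$ on $2n^2$ bits, with input $(x,y)\in\{0,1\}^{n^2}\times\{0,1\}^{n^2}$:
\begin{itemize}
\item $g(x,y)=1$ when some term $C_i$ is satisfied by $x$, and the same term $C_i$ is satisfied by $y$;
\item $g(x,y)=0$ when $f(x)=1$ and $f(y)=1$ but the unique satisfied terms $C_i$ (for $x$) and $C_j$ (for $y$) are different and \emph{conflict} as partial assignments;
\item $g$ is undefined otherwise.
\end{itemize}
The witness point is $(z,z)\in g^{-1}(*)$, undefined because $f(z)=0$.

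\textbf{Upper bound $\C(g)\le O(n)$.} A $1$-input is certified by fixing $C_i$ on both halves, which costs $O(n)$ bits. Any input consistent with this has $C_i$ true on both $x$ and $y$, so by unambiguity it is a $1$-input or undefined, never a $0$-input. A $0$-input is certified by fixing $C_i$ on $x$ and $C_j$ on $y$. By unambiguity these are the unique satisfied terms on any consistent input, and since they conflict, the value is $0$.

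\textbf{Lower bound on $\C_{\overline{1}}(g,(z,z))$.} Take $\rho$ with $|\rho|<\C_0(f,z)$. Restricted to the $x$-half, $\rho$ is not a $0$-certificate for $z$, so some $x'$ consistent with it has $C_i(x')=1$. We want to complete the $y$-half so that $C_i(y)=1$, which would produce a $1$-input consistent with $\rho$. This can fail if the $y$-part of $\rho$ conflicts with $C_i$. That is the symmetrization issue.

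\textbf{Main obstacle.} The difficulty is exactly this lower bound argument, and I expect to handle it with a union-of-restrictions argument. Consider the combined restriction $\rho_x\cup\rho_y$, viewed on a single copy where both agree with $z$. It has size at most $|\rho|<\C_0(f,z)$, so some term $C_i$ is consistent with both $\rho_x$ and $\rho_y$ simultaneously. Setting $C_i$ on both halves and filling the rest from $z$ gives a $1$-input consistent with $\rho$. Hence $\C_{\overline{1}}(g,(z,z))\ge \C_0(f,z)=\Omega(n^2)$.

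\textbf{Lower bound on $\C_{\overline{0}}(g,(z,z))$.} This needs two conflicting terms, both consistent with $\rho$. Here I would rely on the specific structure in \cite{balodis2023unambiguous}, which obtains it from the \cite{goos2015lower}-type argument: if every term consistent with a small restriction pairwise agreed, one could extract a small $0$-certificate for $z$. Any slack lost in that step is only a constant factor, since it is used just once. The result is $\min\{\C_{\overline{0}},\C_{\overline{1}}\}\ge\Omega(n^2)$ with $\C(g)\le O(n)$ on $2n^2$ variables.
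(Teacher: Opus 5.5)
Your construction differs from the paper's and works, but one step is unproved. Your bound $\C(g)\le O(n)$ is correct. So is your $\overline{1}$ argument: $\rho_x\cup\rho_y$ is a restriction consistent with $z$ of size $<\C_0(f,z)$, some term is consistent with it, and putting the same input on both halves gives $g=1$. The unproved step is the $\overline{0}$ bound. You defer it to ``the specific structure'' of \cite{balodis2023unambiguous} and a \cite{goos2015lower}-type argument. That argument is the one behind $\C_0(f)\le\UC_1(f)^2$, not what is needed here, and no constant-factor slack is involved.

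Since $f$ is unambiguous, any two distinct terms automatically conflict. So your $0$-case is just ``the unique terms of $x$ and $y$ differ,'' and you need only a term $C_j$ consistent with $\rho_y$ and a \emph{different} term $C_i$ consistent with $\rho_x$. Here is the direct argument.
\begin{itemize}
\item Take $|\rho|\le \C_0(f,z)-2$. Since $\rho_y$ is not a $0$-certificate for $z$, some $C_j$ is consistent with $\rho_y$.
\item Suppose no term other than $C_j$ were consistent with $\rho_x$. Pick a variable $v$ at which $z$ falsifies a literal of $C_j$; it exists because $f(z)=0$.
\item Then $\rho_x$ extended by $v\mapsto z_v$ is consistent with $z$ and consistent with no term. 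It is therefore a $0$-certificate for $z$ of size at most $\C_0(f,z)-1$, a contradiction.
\item So some $C_i\neq C_j$ is consistent with $\rho_x$. Completing the $x$-half to satisfy $C_i$ and the $y$-half to satisfy $C_j$ gives a $0$-input of $g$ consistent with $\rho$.
\end{itemize}
Hence $\C_{\overline{0}}(g,(z,z))\ge \C_0(f,z)-1=\Omega(n^2)$, and your proof goes through.

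For comparison, the paper proves the corollary by citing the Puzzle I $\to$ Puzzle II reduction of \cite{balodis2023unambiguous}. It makes the construction explicit in \Cref{sec:approx-degree-quartic-separation} as $H(x,y)=\bigvee_{u\in P_T}y_u$ when $x$ satisfies the unique term $T$, and $H=*$ otherwise, with hard point $(0^{n^2},0^{n^2})$. There the $y$-half is only a flag on the positive set of the term. Both lower bounds then follow from the same hitting-set argument, with no loss: some $P_T$ is missed by $R_x$, respectively by $R_x\cup R_y$. Its certificates (the term $T$ plus either $y|_{P_T}=0$ or a single $y_u=1$) are exactly what the paper's $O(n)$-local-check verifier handles in the cheat-sheet construction.

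Your symmetric ``do $x$ and $y$ have the same unique term?'' function is a genuinely different encoding. It uses nothing about positive sets or the all-zeros input, and works verbatim for any unambiguous DNF and any hard $0$-input. The price is an additive $1$ in the $\overline{0}$ bound.
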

Reparameterizing, we obtain that $\min\left\{\C_{\overline{0}}(f, x), \C_{\overline{1}}(f, x)\right\} \ge \Omega(\C(f)^2)$, which improves over the bound in \eqref{eqn:partial-functions-previous-best-separation} by a $\log^2(n)$ factor. While a quadratic separation here is optimal due to the equivalence of the three puzzles, the reductions between the puzzles obtained in \citet{balodis2023unambiguous} lose polylogarithmic factors; thus, it is a priori not clear that a separation better than the one in \Cref{corollary:partial-functions} by polylogarithmic factors cannot exist. Nevertheless, in \Cref{sec:optimality-partial-functions-intersecting-hypergraphs}, we show that a direct upper bound $\min\left\{\C_{\overline{0}}(f, x), \C_{\overline{1}}(f, x)\right\} \le \C(f)^2$ always holds. Thus, our result above is an optimal resolution of Puzzle II up to constant factors. 

Finally, using the implication Puzzle II $\to$ Puzzle III given in \cite[Section 5.3]{balodis2023unambiguous}, together with \Cref{corollary:partial-functions} above, we also have the following:
\begin{corollary}[Intersecting Hypergraphs]
    \label{corollary:intersecting-hypergraphs}
    For any $n \ge 2$, there exists an intersecting hypergraph $G=(V, E)$ with $|V|=4n^2+2$, $r(G)=O(n)$ and a coloring $c:V \to \{0,1\}$ such that every $c$-monochromatic hitting set has size at least $\Omega(n^2)$.
\end{corollary}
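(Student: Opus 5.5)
We obtain \Cref{corollary:intersecting-hypergraphs} by feeding the partial function of \Cref{corollary:partial-functions} into the Puzzle II $\to$ Puzzle III reduction of \cite[Section 5.3]{balodis2023unambiguous}. We check that this reduction loses only constant factors, and then track the parameters.

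Let $f:\{0,1\}^{N}\to\{0,1,*\}$ with $N=2n^2$ be the partial function from \Cref{corollary:partial-functions}, and let $x^\star\in f^{-1}(*)$ be the witness input. The natural hypergraph has one vertex for each literal, that is, for each pair $(i,b)$ with $i\in[N]$ and $b\in\{0,1\}$. This gives $2N=4n^2$ vertices, and we add two auxiliary vertices to reach $4n^2+2$.

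\textbf{Edges.} For every $0$-certificate $\rho$ and every $1$-certificate $\sigma$ of $f$ of size at most $\C(f)$, we take the edge
\[
\{(i,\rho_i): i\in\dom(\rho)\}\cup\{(i,\sigma_i): i\in\dom(\sigma)\}.
\]
Since $|\rho|,|\sigma|\le \C(f)=O(n)$, the rank is $O(n)$. The auxiliary vertices are used to form the intersecting family as follows:
\begin{itemize}
\item Each $0$-certificate part is padded with auxiliary vertex $a_0$, and each $1$-certificate part with $a_1$.
\item Two $0$-edges meet at $a_0$, and two $1$-edges meet at $a_1$.
\item A $0$-certificate and a $1$-certificate must conflict on some coordinate, so their literal sets contain complementary literals. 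This alone is not an intersection, so the literals must be encoded so that a conflict becomes a shared vertex. The fix is to represent the $1$-certificates by their complemented literals.
\end{itemize}
I would follow the paper's cited construction on this point rather than re-derive it. The intended outcome is that the edges are $E_0$ (from $0$-certificates, through literals $(i,\rho_i)$) and $E_1$ (from $1$-certificates, through the flipped literals $(i,1-\sigma_i)$). Then a $0$-certificate $\rho$ and a $1$-certificate $\sigma$ conflict exactly when they share a vertex.

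\textbf{Coloring.} Color a vertex $(i,b)$ by whether $b=x^\star_i$. A monochromatic hitting set $S$ then corresponds to a partial assignment consistent with $x^\star$, or with its complement-literal view. Hitting every $0$-edge means the partial assignment conflicts with every $0$-certificate, so it is a $\overline{0}$-certificate at $x^\star$. Symmetrically, the other color class yields a $\overline{1}$-certificate. Hence $|S|\ge \min\{\C_{\overline{0}}(f,x^\star),\C_{\overline{1}}(f,x^\star)\}-O(1)=\Omega(n^2)$. The $-O(1)$ accounts for the auxiliary vertices, which a hitting set may or may not use.

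The main obstacle is the precise gadgetry that makes $E_0\cup E_1$ intersecting while preserving the correspondence between monochromatic hitting sets and $\overline{0}/\overline{1}$-certificates. This is exactly the content of \cite[Section 5.3]{balodis2023unambiguous}. I would cite that argument verbatim and only verify that it is lossless, with rank $O(\C(f))$ and hitting-set bound $\min\{\C_{\overline 0},\C_{\overline 1}\}$. Any polylog loss in that paper comes from the other reductions, not this step. Plugging in $\C(f)=O(n)$ and the $\Omega(n^2)$ lower bound then gives the corollary.
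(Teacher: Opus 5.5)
Your top-level route matches the paper's. The corollary follows by plugging the partial function of \Cref{corollary:partial-functions} (on $2n^2$ variables, hence $2\cdot 2n^2+2$ vertices) into the Puzzle II $\to$ Puzzle III reduction of \cite[Section 5.3]{balodis2023unambiguous}. The hypergraph you finally settle on is intersecting with rank at most $\C(f)+1$: its $0$-edges are $\{a_0\}\cup\{(i,\rho_i)\}$ and its $1$-edges are $\{a_1\}\cup\{(i,1-\sigma_i)\}$. Your first edge definition (one edge per pair $(\rho,\sigma)$) is not intersecting and should just be dropped.

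The gap is in the verification you sketch, at exactly the point where the content lies. You never color $a_0,a_1$, and you treat them as a $-O(1)$ correction. That cannot be right, because one auxiliary vertex hits half of all edges at once. If $a_0,a_1$ share a color, $\{a_0,a_1\}$ is a monochromatic hitting set of size $2$. If instead $a_0$ gets the color of the literals disagreeing with $x^\star$, a hitting set of that color uses $a_0$ to dispose of all $0$-edges. It then only has to meet, for every $1$-certificate, the coordinates where it \emph{agrees} with $x^\star$ (and symmetrically for the other color). That quantity is not controlled by $\C_{\overline 0}$ or $\C_{\overline 1}$.

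The missing step is to give $a_0$ the color of the agreeing literals $(i,x^\star_i)$ and $a_1$ the color of the disagreeing literals $(i,1-x^\star_i)$.
\begin{itemize}
\item A disagree-colored hitting set $S$ cannot use $a_0$, so it hits every $0$-edge via some $(i,\rho_i)$ with $i\in D_{x^\star}(\rho)$. Its coordinate set $T$ therefore meets every $D_{x^\star}(\rho)$, so $x^\star|_T$ conflicts with every $0$-certificate of size at most $\C(f)$. Since these certificates cover all $0$-inputs, $x^\star|_T$ is a $\overline{0}$-certificate.
\item An agree-colored hitting set cannot use $a_1$, so it hits every $1$-edge via $(i,1-\sigma_i)=(i,x^\star_i)$, i.e.\ with $i\in D_{x^\star}(\sigma)$. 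This gives a $\overline{1}$-certificate.
\end{itemize}
Hence every monochromatic hitting set has size at least $\min\{\C_{\overline 0}(f,x^\star),\C_{\overline 1}(f,x^\star)\}=\Omega(n^2)$, with no additive loss at all. This is the lossless bound you wanted to confirm.
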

By converting parameters, we obtain a family of intersecting hypergraphs $G$ and colorings $c$ for which every $c$-monochromatic hitting set has size at least $\Omega(r(G)^2)$, improving the bound in \eqref{eqn:intersecting-hypergraphs-previous-best-separation} by a $\log^2(n)$ factor. Again, we show that this separation is optimal up to constant factors, by giving a direct matching upper bound in \Cref{sec:optimality-partial-functions-intersecting-hypergraphs}. %

\subsection{Application: Alon-Saks-Seymour and Clique versus Independent Set}
\label{sec:alon-saks-seymour-clique-independent-set}

As mentioned above, Puzzle I is motivated (among other reasons) by its application to the Alon-Saks-Seymour conjecture and the Clique versus Independent Set problem. The Alon-Saks-Seymour conjecture, first referenced in \citet{kahn1991recent}, is concerned with two combinatorial quantities of a graph $G$: the chromatic number $\chi(G)$ and biclique partition number $\bp(G)$. The chromatic number of $G$ is the smallest number of colors that can be assigned to the vertices in $G$, such that no two adjacent vertices are assigned the same color. The biclique partition number of $G$ is the smallest number of bicliques that the edge set of $G$ can be partitioned into. Alon, Saks and Seymour conjectured that these quantities are related as $\chi(G) \le \bp(G)+1$. While \citet*{GMNS} confirmed this conjecture for all graphs having $\bp(G) \le 9$, in a celebrated result, \citet{huang2012counterexample} disproved it by constructing a family of graphs satisfying $\chi(G) \ge \Omega(\bp(G)^{6/5})$. This separation has since been improved in a series of works, namely: \citet{goos2015lower}, who showed $\chi(G) \ge \exp(\Omega(\log^{1.12}(\bp(G))))$, \citet*{ben2017low}, who showed $\chi(G) \ge \exp(\Omega(\log^{1.22}(\bp(G))))$, and finally \citet{balodis2023unambiguous}, who showed $\chi(G) \ge \exp(\widetilde{\Omega}(\log^{2}(\bp(G))))$. In particular, the precise bound shown by \citet{balodis2023unambiguous} (which is made explicit in \citet[Theorem 1.11]{cheung2023online}\footnote{At first glance, Theorem 1.11 in \citet{cheung2023online} appears to imply the weaker separation $    \chi(G) \ge \exp\left(\Omega\left(\frac{\log^2(\bp(G))}{(\log \log (\bp(G)))^8}\right)\right)$. However, combining their Theorems 3.2 and 3.3 gives an additional $\Omega(\log n)$ factor in $\log\Cov_0$ beyond the bound stated in their Corollary 3.4, yielding the bound shown in \eqref{eqn:alon-saks-seymour-previous-best}.}, and to our knowledge, is the best-known separation so far) was
\begin{align}
    \label{eqn:alon-saks-seymour-previous-best}
    \chi(G) \ge \exp\left(\Omega\left(\frac{\log^2(\bp(G))}{(\log \log (\bp(G)))^7}\right)\right).
\end{align}
Importantly, the bound in \eqref{eqn:alon-saks-seymour-previous-best} is optimal up to the doubly logarithmic factors, since on the upper bound side, it is known that $\chi(G) \le \exp(O(\log^{2}(\bp(G))))$ for all graphs \citep{mubayi2009bipartite,fox2026note}.

The way that \citet{balodis2023unambiguous} obtain the near-optimal separation in \eqref{eqn:alon-saks-seymour-previous-best} is by applying a \textit{lifting theorem} of \citet*{goos2016rectangles} to the family of functions they construct that witnesses $\C_0(f) \ge \widetilde{\Omega}(\UC_1(f)^2)$. The lifting operation converts a Boolean function into a two-party communication problem $h$ by composing through a gadget $g$. The classical result of \citet{goos2016rectangles} constructs a gadget of size $k=\Theta(\log n)$ that translates a lower bound on the $0$-certificate complexity $\C_0(f)$ of any Boolean function $f$ into a lower bound on the 0-monochromatic rectangle covering number $\Cov_0(h)$ of the lifted communication problem $h$ (see \Cref{sec:lifting} for a definition of $\Cov_0(h)$). Directly applying this all-purpose lifting theorem to our optimal unambiguous DNFs constructed in \Cref{thm:unambigous-dnf} improves the bound in \eqref{eqn:alon-saks-seymour-previous-best} by a $(\log \log (\bp(G)))^6$ factor in the exponent to $ \chi(G) \ge \exp\left(\Omega\left(\frac{\log^2(\bp(G))}{\log \log (\bp(G))}\right)\right)$. However, this still leaves the gap of a $\log \log (\bp(G))$ factor to the upper bound.

While the lifting theorem of \cite{goos2016rectangles} is extremely powerful, in that it applies to \textit{every} Boolean function, the fact that it uses a gadget of size $k=\Theta(\log n)$ is precisely what leads to the superfluous doubly logarithmic factor in the final separation between chromatic number and biclique partition number. One way to shave this last remaining factor is to derive a lifting theorem that uses only a \textit{constant-sized gadget}, while retaining the guarantee of the lifting theorem of \citet{goos2016rectangles} --- not necessarily for all Boolean functions, but even just for the unambiguous DNFs from \Cref{thm:unambigous-dnf}. Such constant-gadget lifting theorems are notoriously hard to come by; see for example the recent work by \citet{alekseev2025lifting}, which provides a detailed examination of lifting theorems.

Nevertheless, by exploiting the special structure of our unambiguous DNFs, we are able to derive precisely the required constant-gadget lifting theorem.

\begin{restatable}[Constant-gadget Lifting]{theorem}{Lifting}
	\label{thm:constant-gadget-lifting}
    There exists a gadget function $g:\{0,1\}^{3} \times \{0,1\}^{3} \to \{0,1\}$, such that for all large enough $n$, for the DNF $f$ constructed in \Cref{thm:unambigous-dnf}, denoting by $h$ the lifted communication problem $f \circ g^{n^2}: \{0,1\}^{3n^2} \times \{0,1\}^{3n^2} \to \{0,1\}$, it holds that
    \begin{align*}
        \log \Cov_0(h) = \Omega(n^2).
    \end{align*}
\end{restatable}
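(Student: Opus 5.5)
The plan is to prove the bound through a hard distribution on the $0$-inputs of $h$, rather than by invoking a general-purpose lifting theorem. I would show that every $0$-monochromatic rectangle has exponentially small measure under this distribution. Concretely, I fix a hard $0$-input $x^\star$ of $f$, or more usefully a structured family $X_0 \subseteq f^{-1}(0)$ of such inputs, coming from the construction in \Cref{thm:unambigous-dnf}. I then let $\mu$ be the distribution that picks $x \in X_0$ and, independently in each of the $n^2$ coordinates, a uniformly random pair $(a_i,b_i) \in g^{-1}(x_i)$. Since every $0$-rectangle cover of $h$ must cover $\mathrm{supp}(\mu)$, it suffices to prove
\begin{align*}
\mu(A\times B) \le 2^{-\Omega(n^2)} \quad \text{for every rectangle } A\times B \subseteq h^{-1}(0),
\end{align*}
which immediately gives $\log \Cov_0(h)=\Omega(n^2)$.

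First I would isolate the property of $f$ that drives the certificate lower bound in \Cref{thm:unambigous-dnf}. I want a robust form of it: for every $x \in X_0$ and every set $S$ of $o(n^2)$ coordinates, there should be \emph{many} $1$-terms of $f$, each of width $O(n)$, that agree with $x$ on $S$. Ideally these terms come in a large family whose supports are nearly disjoint outside $S$, for instance one term per row or column of the $n\times n$ grid structure. Next I choose a $3$-bit gadget $g$ with a combinatorial ``hitting'' property. It should be balanced, and any product set $A_i\times B_i \subseteq \{0,1\}^3\times\{0,1\}^3$ with sufficiently large density on both sides should contain inputs of both $g$-values. It should also have a uniform-fibre property, so that conditioning on a constant number of coordinates costs only a constant factor in density. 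A small inner-product-like or indexing-like function on $3$ bits should work, and I would verify its properties by finite case analysis.

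The core of the argument is a density-increment procedure in the style of Göös--Pitassi--Watson, carried out with constant blocks. Suppose $\mu(A\times B) \ge 2^{-\epsilon n^2}$. I iteratively restrict to subrectangles, at each step fixing a coordinate on which $A$ or $B$ is ``biased'', that is, has a low-density fibre. Each fixing raises the density by a constant factor, so the total density budget bounds the number of fixed coordinates by $O(\epsilon n^2)$. The fixed coordinates determine a partial assignment $\rho$ with $|\rho| = o(n^2)$ that is consistent with some $x \in X_0$. By the robustness property, there is a $1$-term $T$ of width $O(n)$ that is consistent with $\rho$ and whose free coordinates are all unbiased in the restricted rectangle. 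On each such coordinate the hitting property lets me choose gadget inputs that realize the required bit. I then need to realize all $O(n)$ bits of $T$ simultaneously inside the rectangle, which produces a $1$-input of $h$ in $A\times B$ and contradicts monochromaticity.

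I expect this simultaneous realization to be the main obstacle. With a $\Theta(\log n)$ gadget one gets near-uniform marginals on all unfixed blocks jointly, but a constant gadget only gives constant-factor control per block. The errors across $O(n)$ blocks can therefore compound to $2^{-O(n)}$, and correlations between blocks are not controlled. I would first try to absorb this loss using the gap between $n$ and $n^2$. The budget allows $\Omega(n^2)$ density loss while a single term costs only $O(n)$ coordinates, so I would fix extra coordinates greedily along the support of a candidate term, paying a constant factor per coordinate. If some coordinate in the term fails, I would switch to another term from the large, nearly disjoint family, and charge the failure to a density increment. This yields a potential argument: every failed term costs $\Omega(1)$ increments, the family contains $\Omega(n)$ terms, and at most $O(\epsilon n^2)$ increments are available in total. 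Hence some term succeeds whenever $\epsilon$ is a small enough constant.
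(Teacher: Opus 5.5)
\paragraph{There is a genuine gap.} It is the step you yourself flag: realizing all $\Theta(n)$ literals of a term simultaneously inside the restricted rectangle. The charging argument you propose does not close it, for two reasons.

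\emph{The counting gives no contradiction.} Suppose every failed term costs $\Omega(1)$ increments and your family has $\Omega(n)$ members. Then all of them can fail while consuming only $O(n)$ increments, far below an $O(\epsilon n^2)$ budget, for every constant $\epsilon$. You cannot enlarge the family either. Positive sets have size $\lfloor n/2\rfloor$ inside a block of size $n$, so a family of terms with (nearly) disjoint positive parts has only $O(n)$ members. The $\Omega(n^2)$ in this problem comes only from the hitting number of \emph{all} positive sets, and your potential never uses it.

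\emph{The number of increments is not bounded by $O(\epsilon n^2)$ to begin with.} Increments are limited only because density cannot exceed $1$. So the bound is $\epsilon n^2$ plus the total density already lost. Each attempted term of width $\Theta(n)$, fixed greedily at a constant-factor cost per coordinate, loses $\Theta(n)$ and thereby funds $\Theta(n)$ further biased fixings. If you instead backtrack after a failure, the increment that caused it is undone: the coordinate may only have become biased because of the conditioning. Then there is nothing left to charge.

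Underneath both issues is the known obstruction to constant-size gadgets. A per-coordinate hitting property of $g$ says nothing about the joint law of $(g(a_u,b_u))_{u\in T}$ over $|T|=\Theta(n)$ correlated blocks. The standard full-range/XOR-lemma estimate loses roughly $|T|\cdot 2^{-\Omega(b)}$, which forces $b=\Omega(\log n)$. You also never fix a concrete gadget, so none of the required properties is actually checked.

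\paragraph{What the paper does instead.} It avoids density increments altogether.
\begin{enumerate}
\item It uses the cyclic gadget $g(x,y)=0\iff y-x\in\{-1,0,1\}\bmod 8$. Then $g(x,x)=0$, and since $f(0^{n^2})=0$, the entire diagonal of $h$ consists of $0$-entries.
\item The diagonal part $X\cap Y$ of any $0$-monochromatic rectangle is an independent set in the graph of $M=\sum_{i,S}\bigotimes_{u\in U}W_{b^{(i,S)}_u}$. The support of $M$ lies in $h^{-1}(1)$, because $b^{(i,S)}$ satisfies exactly the term $(i,S)$.
\item $M$ is a sum of Kronecker products of circulant matrices, so its eigenvalues are computed exactly in the Fourier basis. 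They are normalized elementary symmetric polynomials of degree $\lfloor n/2\rfloor$ in the quantities $q_{\xi_u}$. These equal $1$ at zero frequencies and have absolute value at most $3/5$ otherwise.
\item The symmetric-polynomial lemma, together with the good/bad block argument, shows that a negative eigenvalue forces $\Omega(n^2)$ nonzero frequencies. Hence $\lambda_{\min}\ge -d\rho^{n^2/5000}$.
\item A Hoffman-type bound then caps every independent set at a $2^{-\Omega(n^2)}$ fraction of the diagonal, giving $\log\Cov_0(h)=\Omega(n^2)$.
\end{enumerate}
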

The gadget $g$ that we use in the lifting theorem above is a simple cyclic gadget 
$$g(x,y) = 0 \iff y-x \in \{-1,0,1\} \mod 8,$$ 
where we identify $x,y \in \{0,1\}^3$ by elements in $\Z_8$ in the natural way. We defer a detailed discussion of the gadget to the overview of results (\Cref{sec:overview}).

Applying this lifting theorem to our unambiguous DNFs, and following the rest of the reduction given in \citet{cheung2023online}, we obtain an optimal refutation of the Alon-Saks-Seymour conjecture.

\begin{restatable}[Optimal Refutation of Alon-Saks-Seymour]{theorem}{AlonSaksSeymour}
    \label{thm:alon-saks-seymour}
    For infinitely many $n$, there exists a graph $G$ having $2^{\Theta(n^2)}$ vertices that admits a biclique partition of size $2^{O(n)}$ but has chromatic number $2^{\Omega(n^2)}$. Consequently, there exists an infinite family of graphs $G$ satisfying
    \begin{align}
        \label{eqn:alon-saks-seymour-improved}
        \chi(G) \ge \exp\left(\Omega\left(\log^2(\bp(G))\right)\right).
    \end{align}
\end{restatable}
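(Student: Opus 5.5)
The plan is to follow the standard reduction from unambiguous DNFs to Alon-Saks-Seymour, as used by \citet{goos2015lower}, \citet{balodis2023unambiguous} and \citet{cheung2023online}. The input is the lifted problem $h=f\circ g^{n^2}$ from \Cref{thm:constant-gadget-lifting}. No new combinatorics are needed beyond \Cref{thm:unambigous-dnf} and \Cref{thm:constant-gadget-lifting}.

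\textbf{Step 1: a partition of $h^{-1}(1)$.} Since $f$ is an unambiguous DNF of width $O(n)$, every term $C$ of $f$ lifts to a union of rectangles. A literal on coordinate $i$ becomes the set $g^{-1}(b)$, and $g^{-1}(b)\subseteq\{0,1\}^3\times\{0,1\}^3$ partitions into at most $2^6$ rectangles. Taking products, each lifted term is a disjoint union of at most $2^{O(n)}$ rectangles.

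The number of terms could be as large as $\binom{n^2}{O(n)}2^{O(n)} = 2^{O(n\log n)}$. To avoid losing a $\log n$ factor, I would use the structure of the DNF, which I expect has $2^{O(n)}$ terms. If not, I would instead directly write $h^{-1}(1)$ as a disjoint union of rectangles, one per consistent choice of term and gadget rectangles. Unambiguity of $f$ makes the lifted terms disjoint, so in either case $h^{-1}(1)$ is partitioned into $N=2^{O(n)}$ $1$-rectangles.

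\textbf{Step 2: the graph.} Here I follow \citet{cheung2023online}, essentially the Yannakakis/Huang–Sudakov style correspondence. Build $G$ on vertex set the rows $X=\{0,1\}^{3n^2}$, and make $x\sim x'$ adjacent when some structured condition derived from the rectangle partition holds. The goal has two parts.
\begin{itemize}
\item[(a)] $\bp(G)\le N^{O(1)}=2^{O(n)}$, obtained by turning each rectangle of the partition into $O(1)$ bicliques. This is where the partition, rather than a mere cover, is essential.
\item[(b)] Any proper colouring with $k$ colours yields a $0$-rectangle cover of $h$ of size $\mathrm{poly}(k)$, so $\log\chi(G)\ge\Omega(\log\Cov_0(h))$.
\end{itemize}
Combined with \Cref{thm:constant-gadget-lifting}, part (b) gives $\chi(G)\ge 2^{\Omega(n^2)}$. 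The vertex count is $|X|\cdot(\text{const})=2^{\Theta(n^2)}$.

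\textbf{Step 3: parameter conversion and the main obstacle.} From $\log\bp(G)=O(n)$ and $\log\chi(G)=\Omega(n^2)$ we get $\chi(G)\ge\exp(\Omega(\log^2\bp(G)))$, which is \eqref{eqn:alon-saks-seymour-improved}, for all $n$ large enough that \Cref{thm:constant-gadget-lifting} applies.

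The main obstacle is bookkeeping: checking that every step of the reduction of \citet{cheung2023online} loses only polynomial factors in $N$ and in the cover size, never $\log n$ factors. In particular, Step 1 must give $N=2^{O(n)}$, which relies on the constant gadget size (so each lifted literal costs $O(1)$ rectangles) and on the term count. If the DNF from \Cref{thm:unambigous-dnf} had $2^{\omega(n)}$ terms, the first fix I would try is to bound $\log N$ by $O(\UC_1)$ through a more careful count of the distinct lifted terms.
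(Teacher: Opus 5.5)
Your proposal is correct and matches the paper's argument: \Cref{thm:constant-gadget-lifting} gives $\log\Cov_0(h)=\Omega(n^2)$, the unambiguous width-$O(n)$ DNF lifted through a constant-size gadget gives $\log\Par_1(h)=O(n)$, and Lemma~3.5 of \citet{cheung2023online} finishes the proof after reparameterizing, since it yields $\bp(G)\le\Par_1(h)^2$ and $\chi(G)\ge\sqrt{\Cov_0(h)}$ on $2^{\Theta(n^2)}$ vertices. The term-count worry in your Step~1 is unnecessary, because the terms of any unambiguous DNF of width $w$ on $m$ variables are pairwise disjoint subcubes of size at least $2^{m-w}$, so there are at most $2^{w}$ of them (here there are exactly $n\binom{n}{\lfloor n/2\rfloor}\le n2^n$); this is why the general bound $\log\Par_1(f\circ g^{m})\le O(k\cdot\UC_1(f))$ that the paper cites loses no $\log n$ factor.
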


This matches the upper bounds of \citet{mubayi2009bipartite,fox2026note} up to constants in the exponent.

Notably, the theorem above also \textit{uniformly} improves all the parameters from Theorem 1.11 in \cite{cheung2023online}. In particular, the form of their result is similar to \Cref{thm:alon-saks-seymour} above, but in addition to a weaker separation, the number of vertices in their graph is $2^{\Theta(n^4\log^3n)}$, while our result only requires $2^{\Theta(n^2)}$ vertices. Observe that any graph having chromatic number $\chi$ has at least $\chi$ vertices. This means that the size of our graph in \Cref{thm:alon-saks-seymour} which witnesses the optimal refutation of the Alon-Saks-Seymour conjecture is \textit{itself} optimal up to constant factors in the exponent. We find it remarkable that an optimal separation between the chromatic number and biclique partition number can be witnessed by graphs that effectively have the smallest number of vertices required to witness the separation!

We next describe the Clique versus Independent Set problem \citep{yannakakis1988expressing,bousquet2014clique}. This is a two-party communication problem, wherein there is an underlying graph $G=(V,E)$ on $n$ vertices, Alice is given a clique $x \subseteq V$, Bob gets an independent set $y \subseteq V$, and their goal is output $\CIS_G(x,y) :=\Ind[x \cap y \neq \emptyset]$. \citet{yannakakis1988expressing} showed that the co-nondeterministic communication complexity of this problem is $O(\log^2(n))$.

With respect to lower bounds, a lower bound of $\Omega(\log(n))$ on the co-nondeterministic communication complexity was obtained in the works of \citet{huang2012counterexample,amano2014some,shigeta2015ordered}. This was later improved to $\Omega(\log^{1.12}(n))$ in \citet{goos2015lower}, $\Omega(\log^{1.22}(n))$ in \citet{ben2017low}, and finally to a near-optimal $\widetilde{\Omega}(\log^2(n))$ in \citet{balodis2023unambiguous}. In fact, the Alon-Saks-Seymour problem and the Clique versus Independent Set problem are deeply related \citep{bousquet2014clique}: a separation $\chi(G) \ge \exp\left(\Omega\left(\frac{(\log(\bp(G)))^a}{(\log \log (\bp(G)))^b}\right)\right)$ implies a co-nondeterministic lower bound $\Omega\left(\frac{(\log(\bp(G)))^a}{(\log \log (\bp(G)))^b}\right)$ for some $\CIS_G$ on at most $\bp(G)$ vertices. So, keeping \eqref{eqn:alon-saks-seymour-previous-best} in mind, the precise form of the lower bound of \citet{balodis2023unambiguous} is $\Omega\left(\frac{\log^2(n)}{(\log \log (n))^7}\right)$.

As a direct implication of \eqref{eqn:alon-saks-seymour-improved}, we obtain an optimal $\Omega(\log^2(n))$ lower bound for the Clique versus Independent Set problem, matching the upper bound of \citet{yannakakis1988expressing} to constant factors.
\begin{corollary}[Clique versus Independent Set]
    \label{corollary:clique-independent-set}
    There exists a family of $\CIS_G$ on $n$ vertices for which the co-nondeterministic communication complexity of $\CIS_G$ is at least $\Omega\left(\log^2(n)\right)$.
\end{corollary}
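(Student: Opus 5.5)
We derive \Cref{corollary:clique-independent-set} from \Cref{thm:alon-saks-seymour} through the standard reduction of \citet{bousquet2014clique} (see also \citet{yannakakis1988expressing}). Let $H$ be one of the graphs from \Cref{thm:alon-saks-seymour}: it has a biclique partition $B_1,\dots,B_m$ with $m=\bp(H)=2^{O(n)}$, and $\chi(H)=2^{\Omega(n^2)}$. Define an auxiliary graph $G$ whose vertex set is the set of bicliques, so $|V(G)|=m$. For each biclique $B_i=(L_i,R_i)$, fix an orientation in which $L_i$ is the ``left'' side. Each vertex $v$ of $H$ is assigned the set $x_v$ of bicliques in which $v$ lies on the left side, and the set $y_v$ of bicliques in which $v$ lies on the right side. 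The edges of $G$ are chosen so that every $x_v$ is a clique and every $y_v$ is an independent set. Concretely, join $i$ and $j$ whenever some vertex lies on the left of both $B_i$ and $B_j$. One must then check that no vertex lies on the right of two bicliques that are adjacent in this sense. That this is arranged correctly is exactly the content of the Bousquet--Lagoutte--Thomassé correspondence, and I will invoke it in its stated form. That form says a separation $\chi \ge \exp(\Omega(\log^a \bp))$ gives a CIS instance on at most $\bp$ vertices with co-nondeterministic complexity $\Omega(\log^a \bp)$.

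The key property is the following. For $u\ne v$, the sets $x_u$ and $y_v$ intersect exactly when $u$ and $v$ lie on opposite sides of a common biclique. Since the $B_i$ partition the edges, this happens exactly when $uv$ is an edge of $H$. On the diagonal, $x_v\cap y_v=\emptyset$, since no vertex lies on both sides of one biclique. Hence any $0$-monochromatic rectangle $X\times Y$ of $\CIS_G$, restricted to inputs of the form $(x_u,y_v)$, gives a set $S=\{v: x_v\in X,\ y_v\in Y\}$ with no edges of $H$, that is, an independent set of $H$. If $\CIS_G$ has co-nondeterministic complexity $c$, its $0$-inputs are covered by $2^c$ such rectangles. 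Every diagonal pair $(x_v,y_v)$ is a $0$-input, so these rectangles yield a proper coloring of $H$ with $2^c$ colors. Therefore $c\ge \log\chi(H)=\Omega(n^2)$.

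Finally, we convert parameters. $G$ has $N=m=2^{O(n)}$ vertices, so $\log N=O(n)$ and $c=\Omega(n^2)=\Omega(\log^2 N)$. We may pad $G$ with isolated vertices to reach any desired $N$ in the range. This gives the family claimed in the corollary. The only delicate step is the construction of $G$ ensuring that the $x_v$ are cliques and the $y_v$ are independent sets. I would handle it by citing \citet{bousquet2014clique} rather than re-deriving it. The loss there is only constant factors, so the optimal exponent from \Cref{thm:alon-saks-seymour} carries over.
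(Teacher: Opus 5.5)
Your proposal is correct and follows the paper, which gets the corollary in one line by applying the Bousquet--Lagoutte--Thomass\'e correspondence to Theorem~\ref{thm:alon-saks-seymour}; your explicit reduction (bicliques as vertices, with $i\sim j$ iff $L_i\cap L_j\neq\emptyset$) is the standard one, and it works. Two small fixes: first, $x_u\cap y_v\neq\emptyset$ holds exactly when $u$ is on the \emph{left} and $v$ on the \emph{right} of a common biclique, so an edge $uv$ of $H$ only forces $x_u\cap y_v\neq\emptyset$ or $x_v\cap y_u\neq\emptyset$ (this still suffices, since a rectangle containing $(x_u,y_u)$ and $(x_v,y_v)$ contains both cross pairs); second, the step you call delicate needs no citation, because $u\in L_i\cap L_j$ and $w\in R_i\cap R_j$ would place the edge $uw$ in both $B_i$ and $B_j$, contradicting the partition.
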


Finally, we note that our constant-gadget lifting theorem also helps shave all the log factors from the quadratic lower bound of \citet{goos2018deterministic} for the log-rank conjecture; see \Cref{corollary:log-rank}.

\subsection{Application: Query Complexity}
\label{sec:query-complexity}

We now describe the implications of our optimal unambiguous DNF separation from \Cref{thm:unambigous-dnf} to query complexity. In particular, we derive improved separations between the certificate complexity of a Boolean function and other standard complexity measures.

We start with the most significant result: an optimal separation between the certificate complexity and \textit{approximate degree} of Boolean functions. The $\eps$-approximate degree $\widetilde{\Deg}_\eps(f)$ of a Boolean function on $n$ variables is the smallest degree of an $n$-variate polynomial $p:\R^n \to \R$ such that $p(x) \in f(x) \pm \eps$ for all $x \in \{0,1\}^n$. Define $\widetilde{\Deg}(f) := \widetilde{\Deg}_{1/3}(f)$. It is known that $\C(f) \le \widetilde{O}(\widetilde{\Deg}(f)^4)$ for every $f$ (see Table 1 in \cite{aaronson2021degree}), whereas \cite{balodis2023unambiguous} showed that there exists a family of functions $f$ for which $\C(f) \ge \widetilde{\Omega}(\widetilde{\Deg}(f)^3)$. We close the gap between the upper and lower bounds (up to polylogarithmic factors).
\begin{restatable}[Certificate Complexity vs.\ Approximate Degree]{theorem}{ApproximateDegree}
    \label{thm:approximate-degree}
    There exists a family of Boolean functions $f$ with $\C(f) \ge \widetilde{\Omega}(\widetilde{\Deg}(f)^4)$.
\end{restatable}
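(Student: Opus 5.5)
We follow the route of \citet{balodis2023unambiguous}, who obtained $\C(f) \ge \widetilde{\Omega}(\widetilde{\Deg}(f)^3)$ from their near-quadratic unambiguous DNF separation. We substitute the optimal separation of \Cref{thm:unambigous-dnf} and insert one extra composition step to pick up the missing factor. Let $f$ be the unambiguous DNF of \Cref{thm:unambigous-dnf}, with $\UC_1(f)=O(n)$ and $\C_0(f)=\Omega(n^2)$.

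\textbf{Step 1: low approximate degree for $f$.} An unambiguous DNF of width $w$ is, as a real function, exactly the sum of its terms, $f=\sum_i C_i$, and each term is a product of $w$ literals. Hence $\Deg(f)\le \UC_1(f)=O(n)$ exactly, while $\C_0(f)=\Omega(n^2)$. This alone gives only $\C(f)\ge\Omega(\Deg(f)^2)$.

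\textbf{Step 2: amplify by composition.} Approximate degree is multiplicative up to constants under composition (Sherstov), and so is certificate complexity on the appropriate side. The natural move is to compose with a function that separates certificate complexity from approximate degree quadratically. The candidate is $\mathrm{AND}_m\circ\mathrm{OR}_m$: it has $\widetilde{\Deg}=\Theta(m)$ by the Aaronson--Shi/Bun--Thaler results, and $\C=m$. I would take $F=\mathrm{OR}_{N}\circ f$, or a cheat-sheet style variant, and arrange the parameters so that $\widetilde{\Deg}(F)\approx \sqrt{N}\cdot n$ while $\C(F)\ge \C_0(\text{OR-side})\cdot \C_0(f)\approx N\cdot n^2$. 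With $N=n^2$ this gives $\widetilde{\Deg}=\widetilde{O}(n^2)$ and $\C=\Omega(n^4)$, which is the quartic separation.

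The subtlety is that for $\mathrm{OR}_N\circ f$, the $0$-certificate of the composition costs $N\cdot\C_0(f)$, whereas $\widetilde{\Deg}(\mathrm{OR}_N\circ f)=O(\sqrt{N}\,\widetilde{\Deg}(f))$ by robust amplification of the OR via Grover/Sherstov robustness, at a cost of only polylog factors. This yields $\C(F)\ge N n^2$ and $\widetilde{\Deg}(F)=\widetilde{O}(\sqrt N\, n)$. With $N=n^2$ we get $\C(F)=\Omega(n^4)$ and $\widetilde{\Deg}(F)=\widetilde{O}(n^2)$, so $\C(F)\ge\widetilde{\Omega}(\widetilde{\Deg}(F)^2)$. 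This is still only quadratic, so the plain composition undercounts.

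\textbf{Step 3: the main obstacle.} To reach the fourth power, the inner function must satisfy $\widetilde{\Deg}\approx\sqrt{\C_0}$ rather than merely $\Deg\le\sqrt{\C_0}$. This is exactly what \citet{balodis2023unambiguous} achieve. Their cubic result comes from applying the cheat-sheet construction to an unambiguous-DNF separation: the low-width unambiguous certificate structure lets a "verifier" with approximate degree $\approx \sqrt{\UC_1}$ per block (via a Grover-style search over terms, using unambiguity so that exactly one term fires) check a claimed 1-certificate, while certificate complexity stays at $\C_0$. I would reproduce their reduction verbatim, replacing their DNF with ours. Their loss from cubic is due precisely to the $\log^6 n$ slack in \eqref{eqn:unambiguous-dnfs-previous-best-separation} propagating through the exponents, together with a suboptimal outer composition. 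So the final plan is:
\begin{enumerate}
\item cheat-sheet-compose $\mathrm{AND}\circ\mathrm{OR}$-style outer functions with $f$;
\item bound $\widetilde{\Deg}$ by a quantum-query/robust-polynomial argument, using unambiguity to get degree $\widetilde{O}(\sqrt{\UC_1})$ per search;
\item lower bound $\C$ by $\C_0(f)$ times the outer certificate cost.
\end{enumerate}
The hard step is the approximate-degree upper bound for the cheat-sheet function, since approximate degree does not compose as cleanly as quantum query complexity. I would first try to bound quantum query complexity, which upper bounds $\widetilde{\Deg}$, by an explicit Grover-based algorithm.
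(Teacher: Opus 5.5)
There is a genuine gap. The quartic bound needs a total function with $\C=\Omega(n^2)$ but $\widetilde{\Deg}=\widetilde O(\sqrt n)$, and none of the mechanisms you name produces one.

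Your Step 2 already shows that composing $f$ with OR keeps the exponent at $2$. The proposed outer function $\mathrm{AND}_m\circ\mathrm{OR}_m$ is not a separation at all: it has $\C=m$ and $\widetilde{\Deg}=\Theta(m)$, so composing with it only pulls the exponent toward $1$.

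Step 3 fails as written, because cheat-sheeting the total $f$ does not work. A cell $c$ with $c_a=0$ must contain a verified $0$-certificate for $z^{(a)}$. Since $0$-certificates of $f$ have size $\Omega(n^2)$, the verifier becomes an AND over $\Omega(n^2)$ checks, of approximate degree $\Theta(n)$, which only recovers a quadratic separation.

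The first ingredient you are missing is the partial function $H$. Set $H(x,y)=\bigvee_{u\in P_T}y_u$ when $f(x)=1$ with unique satisfied term $T$, and $H=*$ otherwise. Now \emph{both} output values are witnessed by one width-$O(n)$ term of $f$ plus the $y$-bits on $P_T$: all zero for output $0$, or a single one for output $1$. Meanwhile the all-zeros input $z$ still has $\C_{\overline 0}(H,z)$ and $\C_{\overline 1}(H,z)$ at least the hitting number $\Omega(n^2)$ of the positive sets. The cheat-sheet lower bound $\C(G)=\Omega(n^2)$ uses exactly this two-sided hardness at a single $*$-input: the adversary must be able to complete each $z^{(a)}$ to a $c_a$-input for whichever bit $c_a$ the untouched cell dictates.

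The second missing ingredient, which the paper stresses as the crux, is a certificate encoding whose verification is an AND of only $O(n)$ decision trees of depth $\mathrm{polylog}(n)$.

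A ``Grover-style search over terms'' cannot give this, since $f$ has $n\binom{n}{k}=2^{\Theta(n)}$ terms. Writing the term into the cheat sheet is not enough either. For a generic unambiguous DNF, checking that a listed set of literals is a term, or reading off the literals of a named term, is a global condition rather than a conjunction of $O(n)$ local ones.

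The paper's encoding is tailored to its construction and has three parts:
\begin{enumerate}
\item the sorted elements $a_1<\dots<a_k$ of $S\subseteq B_i$, checked against $x$ on all of $B_i$;
\item per-block witnesses $(t_j,\ell_j)$ with $t_j\in S$ and $\lambda^{\{i,j\}}_i(t_j)=\ell_j$, with the bijections hard-coded into the trees;
\item an explicit list of pairs $(j,s)$ with $s\le\ell_j$, each checking $x_u=0$ for $u=(\lambda^{\{i,j\}}_j)^{-1}(s)$.
\end{enumerate}
The list in part 3 has length $O(n)$ precisely because the bijections were chosen so that $\sum_{j\neq i}m_{i,j}(S)=O(n)$.

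With $\Theta(\log n)$ copies, $L=\Theta(n^2)$ cells, and the unambiguous sum $p=\sum_c p_c$ of $1/(3L)$-approximations, this gives $\C(G)=\Omega(n^2)$ against $\widetilde{\Deg}(G)=\widetilde O(\sqrt n)$, with no outer composition.

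Your account of prior work is also off. Balodis et al.\ did not cheat-sheet their DNF. Their cubic bound uses a Hex-based partial function whose $\overline 0/\overline 1$-certificate complexity is only $\Omega(n^{1.5})$, chosen exactly because their DNF lacked such an encoding. Polylogarithmic slack could never cost a whole power in the exponent, so there is no ``verbatim'' reduction to reuse.
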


We next state the more direct corollaries of our \Cref{thm:unambigous-dnf}. Similar to approximate degree, define the \textit{exact} degree $\Deg(f)$ of a Boolean function $f$ to be the smallest degree of a polynomial that equals $f$ at all Boolean inputs. It is known that $\C(f) \le \widetilde{O}(\Deg(f)^3)$ for all $f$ \citep{aaronson2021degree}. On the other hand, it turns out that $\Deg(f) \le \UC_1(f)$, and thus a lower bound of $\C(f) \ge \Omega\left(\frac{\Deg(f)^2}{\log^6(n)}\right)$ follows directly as a result of the separation for unambiguous DNFs \eqref{eqn:unambiguous-dnfs-previous-best-separation} shown by \citet{balodis2023unambiguous}. Since our separation for unambiguous DNFs in \Cref{thm:unambigous-dnf} is log-free, we obtain:

\begin{corollary}[Certificate Complexity vs.\ Degree]
    \label{corollary:degree}
    There exists a family of Boolean functions $f$ with $\C(f) \ge \Omega(\Deg(f)^2)$.
\end{corollary}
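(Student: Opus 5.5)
The plan is to combine \Cref{thm:unambigous-dnf} with the standard fact that exact degree is bounded by unambiguous $1$-certificate complexity.

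\textbf{Step 1: $\Deg(f) \le \UC_1(f)$.} Let $f = C_1 \vee \dots \vee C_m$ be an unambiguous DNF of term-width $\ell$. Each term $C_i$ is computed exactly on Boolean inputs by the product of its literals, namely $x_j$ for a positive literal and $1-x_j$ for a negated one. This product has degree at most $\ell$.

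Unambiguity means that on every input at most one term is satisfied. Hence $f(x) = \sum_{i} C_i(x)$ for all $x \in \{0,1\}^{n^2}$: when $f(x)=0$ every summand is $0$, and when $f(x)=1$ exactly one summand is $1$. This polynomial has degree at most $\ell$, so $\Deg(f) \le \UC_1(f)$.

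\textbf{Step 2: apply \Cref{thm:unambigous-dnf}.} For each $n \ge 2$, take the DNF $f$ from \Cref{thm:unambigous-dnf}. It satisfies $\UC_1(f) = O(n)$, so $\Deg(f) = O(n)$ by Step 1. It also satisfies $\C(f) \ge \C_0(f) = \Omega(n^2)$. Combining the two bounds gives $\C(f) \ge \Omega(\Deg(f)^2)$.

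One degenerate case needs checking: if $\Deg(f)$ were very small, the inequality would only become easier, so there is no issue. As $n$ ranges over all $n \ge 2$, this yields the required family.

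There is no real obstacle here. The only point needing care is that the sum of term polynomials equals $f$, and this rests on exactly-one-term unambiguity, which is part of the definition of an unambiguous DNF.
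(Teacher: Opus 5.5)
Your proof is correct and follows the paper's own argument: the paper also obtains the corollary by combining $\Deg(f)\le \UC_1(f)$ with the separation of \Cref{thm:unambigous-dnf}, namely width $O(n)$ versus $\C_0(f)=\Omega(n^2)$. The only difference is that the paper simply asserts $\Deg(f)\le\UC_1(f)$, whereas you justify it by writing $f$ as the sum of its term polynomials, which is valid because at most one term is satisfied on any input.
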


Our last corollary has to do with the \textit{sensitivity} of a function. The sensitivity of $x \in \{0,1\}^n$ with respect to an $n$-bit Boolean function $f$ is the number of bits in $x$, which when flipped in isolation, flip the value of $f(x)$. The sensitivity of $f$, denoted $\s(f)$, is the maximum sensitivity of any $x$ with respect to $f$. The result of \cite{balodis2023unambiguous} shows the existence of a family of functions $f$ for which $\C(f) \ge \Omega\left(\frac{\s(f)^3}{\log^6(n)}\right)$, whereas there is a global upper bound $\C(f) \le \widetilde{O}(\s(f)^5)$ \citep{aaronson2021degree}. Our result shaves the polylogarithmic factors in the lower bound of \citet{balodis2023unambiguous} (proof in \Cref{sec:sensitivity-proof}).
\begin{restatable}[Certificate Complexity vs.\ Sensitivity]{corollary}{Sensitivity}
    \label{corollary:sensitivity}
    There exists a family of Boolean functions $f$ with $\C(f) \ge \Omega(\s(f)^3)$.
\end{restatable}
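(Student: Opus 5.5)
We plan to follow the route of \citet{balodis2023unambiguous}, who obtain their cubic separation $\C(f)\ge\widetilde{\Omega}(\s(f)^3)$ by composing an unambiguous-DNF separation with an outer function. Concretely, we take the unambiguous DNF $f:\{0,1\}^{n^2}\to\{0,1\}$ of \Cref{thm:unambigous-dnf}, with $\UC_1(f)=O(n)$ and $\C_0(f)=\Omega(n^2)$, and form $F=\mathrm{OR}_m\circ f$ with $m=\Theta(n^2)$ (or with the outer arity tuned to balance parameters). Write $k=\UC_1(f)=O(n)$. The statement then reduces to two bounds: $\C(F)=\Omega(n^3)$ and $\s(F)=O(n)$. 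Reparameterizing gives $\C(F)=\Omega(\s(F)^3)$.

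\textbf{Certificate bound.} Take the all-zero outer input, with every block at an input $x$ achieving $\C_0(f,x)=\Omega(n^2)$. A $0$-certificate of $F$ at this input must restrict each of the $m$ blocks to a $0$-certificate of $f$. Hence $\C_0(F)\ge m\cdot\C_0(f)=\Omega(n^4)$ if $m=n^2$. This is more than we need, so instead we pick $m=\Theta(n)$, which gives $\C(F)=\Omega(n^3)$.

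\textbf{Sensitivity bound.} We bound the two sides separately.
\begin{itemize}
\item[(i)] On $1$-inputs of $F$, sensitive bits exist only if exactly one block has value $1$. For an unambiguous DNF, the sensitive bits at a $1$-input lie within the unique satisfied term, since flipping a bit outside that term keeps the term satisfied. So there are at most $k$ such bits.
\item[(ii)] On $0$-inputs, every block is $0$, and the sensitivity is $\sum_{\text{blocks}} \s_0(f,x_i)$. We need $\s_0(f)$ small. For an unambiguous DNF, flipping bit $j$ of a $0$-input creates a $1$-input, and the term satisfied there must contain $j$. Distinct sensitive bits give terms that each agree with $x$ except at exactly their own bit. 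Such terms pairwise conflict: by unambiguity, two different terms must contain a complementary literal pair, and here that can only happen at their flipped bits. This yields only weak control in general, so we plan to use the specific structure of the construction of \Cref{thm:unambigous-dnf} to show $\s_0(f)=O(1)$, or at least $O(1)$ per block in a way that gives $\s_0(F)=O(m)=O(n)$.
\end{itemize}
With $m=\Theta(n)$, this gives $\s(F)=O(n)$ and $\C(F)=\Omega(n^3)$.

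The main obstacle is step (ii), bounding the $0$-sensitivity of $f$. If the construction does not give $\s_0(f)=O(1)$, the fallback is the \citet{balodis2023unambiguous} argument: use the general fact $\s_0(f)\le \UC_1(f)$-type bounds via the conflicting-terms structure, with sensitive bits forming terms that pairwise clash only on those bits. In that case we would balance with $m=\Theta(1)$ and instead take the outer composition with an AND to boost the $1$-side. The choice of outer function and arity would then be set so that $\max(\s_0,\s_1)$ stays $O(n)$ while the certificate grows to $n^3$.
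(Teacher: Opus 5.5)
There is a genuine gap in step (ii), and that step carries the whole argument: the DNF of \Cref{thm:unambigous-dnf} does \emph{not} have $\s_0(f)=O(1)$.

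Let $x$ be $1$ on a set $S^+\subseteq B_1$ with $|S^+|=k+1$, where $k=\lfloor n/2\rfloor$, and $0$ elsewhere. Then $f(x)=0$: a term $(1,S)$ needs exactly $k$ ones in $B_1$, and a term $(i,S)$ with $i\neq 1$ needs ones in $B_i$. Flipping any $s\in S^+$ satisfies the term $(1,S^+\setminus\{s\})$. Its negative literals inside $B_1$ are now $0$, and its cross-bucket negative literals lie in sets $A_{1,j}(\cdot)\subseteq B_j$, which are all $0$. So $\s_0(f)\ge \lfloor n/2\rfloor+1$.

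Consequently $\mathrm{OR}_m\circ f$ has sensitivity at least $m(\lfloor n/2\rfloor+1)$. Its certificate complexity is $\max(m\,\C_0(f),\C_1(f))=O(mn^2)$, because $\C_0(f)\le\UC_1(f)^2$. So no choice of $m$ beats a quadratic separation, and your choice $m=\Theta(n)$ gives $\s=\Theta(n^2)$ against $\C=\Theta(n^3)$. The conflicting-terms argument you sketch only yields $\s_0(f)=O(\UC_1(f))$, which is the same obstruction. The AND-based fallback keeps $\C_0\le O(n^2)$ against $\s_0\ge\Omega(n)$, so it is at best quadratic as well.

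The missing idea is the desensitizing transformation of \citet{ben2017low}, which the paper uses. Define $f'(x,y,z)=1$ iff $f(x)=f(y)=f(z)=1$ and the same (unique) term is satisfied in all three copies.
\begin{itemize}
\item A single flip changes only one copy, so it can turn $f'$ on only by completing a term $T$ already satisfied by both untouched copies. The flipped copy must then violate exactly one literal of $T$, which gives $\s_0(f')\le 1$.
\item $\s_1(f')\le 3\UC_1(f)$, by your step (i) reasoning applied to each copy.
\item $\C_0(f')\ge\C_0(f)$. This is Lemma 12 of \citet{ben2017low}; for this $f$ it also follows directly from the hitting-set argument at the all-zeros input.
\end{itemize}
Then $F=\mathrm{OR}_{3u}\circ f'$ with $u=\UC_1(f)=\Theta(n)$ has $\s(F)\le 3u$ and $\C_0(F)=3u\cdot\C_0(f')=\Omega(u^3)$. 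Your certificate bound and your analysis of $1$-inputs are correct and carry over; what you need is this intermediate step that removes the $\Omega(n)$ zero-sensitivity of $f$ before composing with OR.
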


\subsection{Application: Multiclass Sample Compression}
\label{sec:multiclass-sample-compression-intro}

Our final application is to the problem of sample compression in learning theory \citep{littlestone1986relating}. Consider a class $\mcC$ of functions/concepts mapping a domain to a discrete label space of $c$ labels. The sample compression problem asks for a pair of maps $(\kappa, \rho)$, where $\kappa$ is the \textit{compressor} and $\rho$ is the \textit{reconstructor}. The compressor compresses a labeled sample $S$ to a subsample $S'$, and the reconstructor reconstructs labels on $S \setminus S'$ from the compressed sample $S'$. Relevant here is also a notion of complexity of the class $\mcC$ known as the \textit{Natarajan} dimension \citep{natarajan1989learning}, which governs the learnability of $\mcC$. It is known that for any concept class having Natarajan dimension $d$, there exists a sample compression scheme, such that for any sample labeled by some concept in the class, it is possible to reconstruct labels on the entire sample from a compressed sample of size $c^{O(d)}$; moreover, the size of the compressed sample provably needs to be at least $\Omega(d)$ (e.g., see \cite{pabbaraju2024multiclass}).

Note that if we keep the Natarajan dimension fixed, the lower bound on the sample compression size is constant, and does not grow with the number of labels. \cite{pabbaraju2024multiclass} asked as an open question if the lower bound on sample compression must grow with $c$, especially given that the upper bound in this regime is polynomial in $c$. Using the optimal small-size refutation of the Alon-Saks-Seymour conjecture given by \Cref{thm:alon-saks-seymour}, we provide a partial resolution to this open question, by constructing a family of classes for which the size of any sample compression scheme must grow at least polylogarithmically with the number of labels.

\begin{restatable}[Sample Compression Lower Bound for Multiclass Concept Classes]{theorem}{Compression}
    \label{thm:multiclass-sample-compression-lower-bound}
    There is a family of multiclass concept classes having Natarajan dimension 1, for which the size of any valid sample compression scheme as a function of the number of labels $c$ is at least $\Omega\left(\sqrt{\log c}\right)$. 
\end{restatable}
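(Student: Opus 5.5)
The plan is to reduce the Alon--Saks--Seymour graph of \Cref{thm:alon-saks-seymour} to a multiclass concept class. The class will have Natarajan dimension $1$, and any sample compression scheme of size $k$ for it will yield a proper coloring of $G$ with $c^{O(k)}$ colors. Here $c$ is tied to $\bp(G)$: I would take $c = \bp(G)+O(1)$, or $c = 2^{O(n)}$. Since $\chi(G) \ge 2^{\Omega(n^2)}$ and $\log c = O(n)$, this gives $k \log c \ge \Omega(n^2)$, hence $k \ge \Omega(n^2/\log c) = \Omega(\log c)$. That is already stronger than needed. If the reduction loses a square, say the colors number $c^{O(k^2)}$ or $(\text{sample size})^{k} c^{k}$, we instead get $k = \Omega(\sqrt{\log c})$, which is what the statement claims. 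So I would aim for a reduction with some such polynomial loss and track it carefully.

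\textbf{Construction.} Fix a biclique partition $B_1,\dots,B_m$ of $G$, with $m = \bp(G) = 2^{O(n)}$ and $B_j = (L_j,R_j)$. Let the domain be $[m]$, one point per biclique. For each vertex $v$ define the concept $h_v : [m] \to \{0,1,2\}$ by $h_v(j)=1$ if $v \in L_j$, $h_v(j)=2$ if $v \in R_j$, and $h_v(j)=0$ otherwise. Two vertices $u,v$ are adjacent iff there is a unique $j$ with $h_u(j),h_v(j)$ equal to $\{1,2\}$. This three-label class probably does not have Natarajan dimension $1$, though. The standard fix is to encode vertices and bicliques so that labels carry more information, with $c$ about $m$ labels or $2^{O(n)}$. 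The encoding should ensure the Natarajan dimension is $1$: no two points are N-shattered. For example, use domain points indexed by pairs, and labels that name the biclique containing the edge, so that any two concepts disagreeing on two points leads to a contradiction with the partition being a partition. I expect the natural choice to be domain $=$ vertices, with concept $h_v(u) =$ the index of the biclique containing the edge $uv$, or $0$ if $u,v$ are nonadjacent. Natarajan dimension $1$ would then follow from the partition and biclique structure; verifying this is a finite case analysis.

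\textbf{Compression to coloring.} Given a compression scheme of size $k$, restrict to the full sample (all points). Each concept $h_v$ is reconstructed from a subsample of at most $k$ labeled points, and we color $v$ by that subsample. Two adjacent vertices must receive distinct compressed samples, since their concepts differ: in the encoding above, $h_u(v)\neq h_v(v)$ or similar. So $\chi(G) \le (N c)^{k}$, where $N = 2^{\Theta(n^2)}$ is the domain size. Here the small vertex count in \Cref{thm:alon-saks-seymour} is crucial. We get $k(\log N + \log c) \ge \Omega(n^2)$, but $\log N = \Theta(n^2)$, which kills the bound. To avoid this, the reconstructor's side information must not depend on $N$, so domain points in the compressed sample must be describable cheaply. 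Alternatively, one can choose parameters, such as taking the graph at a smaller scale or applying a product/tensor trick, so that $\log N$ and $\log c$ balance, yielding $k^2 \gtrsim \log c$, i.e.\ $k = \Omega(\sqrt{\log c})$.

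\textbf{Main obstacle.} The hardest step is designing the class so that simultaneously (i) the Natarajan dimension is exactly $1$, and (ii) the counting of possible compressed samples loses at most a polynomial in $c$ rather than a factor of $N^k$. I would first try making the domain the set of bicliques, of size $\bp(G)$, so that $N \le c$. Then $\chi(G) \le c^{O(k)}$, and hence $k = \Omega(n^2/n)$. I would then accept a weaker $\sqrt{\log c}$ bound if the Natarajan-dimension-$1$ requirement forces blowing the label set up to $c = 2^{O(n^2)}$.
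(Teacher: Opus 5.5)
There is a genuine gap: the proposal never exhibits a class with Natarajan dimension $1$, and the compression-to-coloring step you describe cannot work for the class that does. Your two concrete candidates fail. The three-label class on bicliques generally N-shatters two bicliques $j,j'$: take $f_0=(0,0)$ and $f_1=(1,1)$; you only need vertices lying in neither biclique, in $L_j$ but not $L_{j'}\cup R_{j'}$, in $L_{j'}$ but not $L_j\cup R_j$, and in $L_j\cap L_{j'}$. The vertex-domain class has domain size $2^{\Theta(n^2)}$, which, as you note, destroys the count. Your last fallback (domain $=$ bicliques, labels blown up to $2^{\Theta(n^2)}$) has the right parameters but no construction behind it.

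The missing construction is the one of \citet{alon2022theory} and \citet{pabbaraju2024multiclass}. Each vertex $v$ gives a \emph{partial} concept $p_v$ on the set $\mathcal{X}$ of bicliques: $0$ where $v$ is a left vertex, $1$ where it is a right vertex, $*$ elsewhere. This partial class has VC dimension $1$. It is totalized by filling every $*$ of $p_v$ with a label unique to $v$; the resulting class has DS, hence Natarajan, dimension $1$ and uses $c=|V|+2=2^{\Theta(n^2)}$ labels. So in this construction the label set is about the number of vertices, not $2^{O(n)}$, and nothing in your proposal supports the hoped-for $\Omega(\log c)$ regime.

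For this class, running the scheme on the full sample, as you propose, gives nothing. Any sample containing a point where $c_v$ takes its unique label compresses to that single labeled point, from which $c_v$ is reconstructed. The key step is instead to feed the compressor only the support of $p_v$, labeled by $p_v$. There the labels are binary, so the number of distinct reconstructor inputs is $|\mathcal{X}|^{O(k)}2^{O(k)}=2^{O(kn)}$, independent of $c$, and each reconstructor output agrees with $p_v$ on its support. These $2^{O(kn)}$ functions therefore disambiguate the partial class. Assigning each vertex a consistent function properly colors $G$, since adjacent $u,v$ lie in a common biclique on which $p_u$ and $p_v$ take the values $0$ and $1$. Hence $2^{O(kn)}\ge\chi(G)\ge 2^{\Omega(n^2)}$, i.e.\ $k=\Omega(n)$.

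The square root does not come from a reduction that ``loses a square''. It comes from $\log c=\Theta(n^2)$ (labels are essentially vertices) while $\log|\mathcal{X}|=O(n)$ (domain points are bicliques), which is why the small vertex count in \Cref{thm:alon-saks-seymour} matters.
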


Lastly, we remark that \Cref{thm:alon-saks-seymour} also shaves the lower-order terms from the results of \citet{alon2022theory} and \citet{pabbaraju2024multiclass} that depend on the Alon-Saks-Seymour refutation (i.e., the $1-o(1)$ exponent in Theorems 7 and 11 in \citep{alon2022theory} and Theorem 1 in \citet{pabbaraju2024multiclass}).

\section{Overview of Techniques}
\label{sec:overview}

In this section, we sketch the main technical ideas used in the construction of unambiguous DNFs in \Cref{thm:unambigous-dnf}, the constant-gadget lifting in \Cref{thm:constant-gadget-lifting}, the quartic separation between certificate complexity and approximate degree in \Cref{thm:approximate-degree}, and the sample compression lower bound in \Cref{thm:multiclass-sample-compression-lower-bound}; the detailed proofs of these results are given in \Cref{sec:unambigous-dnf-separation,sec:lifting,sec:approx-degree-quartic-separation,sec:multiclass-sample-compression}. %

\paragraph{Unambiguous DNFs.} Our unambiguous DNFs arise from a simple yet carefully constructed set-pair system $\{(P_T,N_T)\}_T$ over a universe $U$, where each set-pair $(P_T,N_T)$ comprises a positive set $P_T \subseteq U$ and a negative set $N_T \subseteq U$. Each pair will correspond to a term $C_T$ in the DNF of the form
\begin{align}
    \label{eqn:term-form}
    C_T = \bigwedge_{u \in P_{T}} x_u \wedge \bigwedge_{v \in N_{T}} \neg x_{v},
\end{align}
and the final DNF will be $f=\bigvee_T C_T$. Recall that unambiguity requires that only one term be satisfied by any $x$ for which $f(x)=1$. The crucial advantage offered by formulating the DNF as a set-pair system is that we can formulate a sufficient condition for unambiguity in terms of a \textit{pairwise incompatibility} constraint on the set-pair system. Namely, if we ensure that for every distinct pair $(P_T, N_T)$ and $(P_{T'}, N_{T'})$, it holds that either $P_T \cap N_{T'} \neq \emptyset$ or $P_{T'} \cap N_{T} \neq \emptyset$, then whenever a term $(P_T, N_T)$ is satisfied, it is ensured that every other term $(P_{T'}, N_{T'})$ is unsatisfied.

So, we will aim to build a set-pair system that has such pairwise incompatibility; in addition, we want to ensure a quadratic separation between the 0-certificate complexity of $f$ and the term-width of $f$. We first consider the $0$-certificate complexity. Note that by the form \eqref{eqn:term-form} of a term in our DNF, $f(0^{|U|})=0$, since the positive part of every term is falsified. Let us study the $0$-certificate complexity of $0^{|U|}$. For any certificate $\rho$ consistent with $0^{|U|}$, observe that if $\dom(\rho)$ does not intersect every $P_T$ in our set-pair system, then we can take a $P_T$ that is disjoint with $\dom(\rho)$, and set all the variables in it to 1; this extends $\rho$ to an input $x$ that is consistent with $\rho$ but $f(x)=1$, which invalidates $\rho$ from being a $0$-certificate for $0^{|U|}$. Thus, any 0-certificate $\rho$ for $0^{|U|}$ must satisfy that $\dom(\rho)$ is a \textit{hitting set} for the family of all positive sets, meaning that the $0$-certificate complexity of $f$ is at least the hitting number of the family of positive sets.

A simple choice of the positive sets which ensures a large hitting number is as follows: consider the universe $U=B_1 \sqcup B_2 \sqcup \dots \sqcup B_n$ to be the disjoint union of buckets $B_i$, where each $|B_i|=n$, and hence $|U|=n^2$. We include every subset of every $B_i$ that has size $n/2$ as a positive set. That is, for every $T=(i,S)$, where $i \in [n]$ and $S \subseteq B_i, |S|=n/2$, we include a positive set $P_{T}=P_{i,S}=S$. Then, any hitting set must include at least $n/2$ elements of every $B_i$, giving us a hitting number of $\Omega(n^2)$. Our remaining goal is to suitably choose the negative sets $N_T$, so that the term-width of every term, namely $|P_T|+|N_T|$ is at most $O(n)$, while also satisfying the pairwise incompatibility condition to ensure unambiguity.

Our negative sets $N_T=N_{i,S}$ will have two components: the set $B_i \setminus P_{i,S}$ (i.e., the complement of the positive term within the bucket $B_i$), together with a cross-bucket term $A_{i,j}(S)$ for every $j \neq i$. The first component ensures incompatibility with all the other terms that arise from the same bucket, whereas the second component ensures incompatibility with terms arising from different buckets. It remains to choose the sets $A_{i,j}(S)$ appropriately such that $\sum_{j \neq i}|A_{i,j}(S)|=O(n)$ for every $i$ and $S$. We use the probabilistic method for constructing these sets. For every (unordered) pair $\{i,j\}$, we randomly hash the buckets $B_i$ and $B_j$ independently to $[n]$. We include in $N_{i,S}$ precisely those terms in bucket $B_j$ that get hashed to a value which is at most the \textit{smallest} value that an element from $S$ gets hashed to. It is not too hard to verify that this guarantees the pairwise incompatibility property. Additionally, since $|S| = n/2$, the distribution of the smallest value that an element from $S$ gets hashed to has an exponentially small tail. We can thus union bound an exponentially small failure probability over all choices of $i,S$, and conclude that $\sum_{j \neq i}|A_{i,j}(S)|=O(n)$ holds for every $i,S$ with positive probability. This gives the desired set-pair system, and hence an unambiguous DNF that has a quadratic separation between term-width and 0-certificate complexity.

\paragraph{Constant-gadget Lifting.} We will now sketch how, by exploiting the special structure of the unambiguous DNF constructed above, we are able to lift it to a communication problem with a constant-sized gadget. For the DNF $f:\{0,1\}^{|U|} \to \{0,1\}$ above, where $|U|=n^2$, our goal is to construct a gadget $g: \{0,1\}^k \times\{0,1\}^k \to \{0,1\}$ where $k=O(1)$, such that the communication problem $h = f \circ g^{|U|} :\{0,1\}^{k|U|} \times \{0,1\}^{k|U|} \to \{0,1\}$ satisfies $\log\Cov_0(h) \ge \Omega(|U|)$. Here, $\Cov_0(h)$ denotes the $0$-monochromatic covering number of $h$, i.e., the minimum number of $0$-monochromatic rectangles required to cover all the $0$-entries in the communication matrix of $h$,\footnote{The communication matrix of $h: \mcX \times \mcY \to \{0,1\}$ is a Boolean matrix $A^{|\mcX| \times |\mcY|}$ with $A_{x,y}=h(x,y)$. For $b \in \{0,1\}$, a $b$-monochromatic rectangle in $A$ is a set $R=X \times Y$, where $X \subseteq \mcX, Y \subseteq \mcY$ such that $h(x,y)=b$ for all $(x,y)\in R$.} and
\begin{align*}
    f \circ g^{|U|}((x_1,\dots,x_{|U|}), (y_1,\dots,y_{|U|})) = f(g(x_1,y_1),\dots,g(x_{|U|}, y_{|U|})),
\end{align*}
where each $x_i, y_i \in \{0,1\}^k$.

We first recall that our DNF satisfies $f(0^{|U|})=0$. So, if the gadget $g$ is such that $g(x,x)=0$ for all $x$, then we get that the entire diagonal of the communication matrix of $h$ is $0$. This is useful, since any set of $0$-monochromatic rectangles that cover all the $0$-entries in the matrix must necessarily cover all the diagonal entries. Our objective then reduces to showing that any individual rectangle may cover only a small number of the diagonal entries.

Towards this, suppose we could construct a non-negative matrix $M^{N \times N}$, where $N := 2^{k|U|}$, which has the property that
\begin{align*}
    M(x,y) > 0 \implies h(x,y)=1.
\end{align*}
(The communication matrix of $h$ itself has this property, but we later end up requiring additional properties of the matrix $M$ that are not directly satisfied by the communication matrix itself.) Suppose also that every row in $M$ has the same sum $d > 0$ --- we might expect such a property to naturally hold if we choose a suitably symmetric gadget. Now let us interpret $M$ as the adjacency matrix of an undirected graph $G_M$ over $N$ vertices, where $x$ and $y$ are connected by an edge iff $M(x,y) > 0$. Fix any $0$-cover $\mcF$ of $h$ by $0$-monochromatic rectangles, and let $R$ be a rectangle in this cover. Now consider the set $S_R=\{x: (x,x) \in R\}$ of diagonal entries in $R$ --- from the preceding paragraph, it must hold that $N \le \sum_R |S_R| \le |\mcF| \max_{R}|S_R|$. But observe that by virtue of $R$ being a $0$-monochromatic rectangle, for any $x \neq y \in S_R$, it holds that $h(x,y)=0$, meaning that $x$ and $y$ are \textit{not} connected by an edge in the graph $G_M$. Thus, the set $S_R$ is an \textit{independent set} in $G_M$. We then aim to show that no independent set in $G_M$ is too large.

For any independent set $S$ in $G_M$, let $\bone_S$ be its indicator vector. Since the diagonal of $M$ is 0 and $S$ is an independent set, it holds that $\bone_S^T M \bone_S=0$. Consider decomposing $\bone_S$ along the along-ones vector $\bone$ and orthogonal to it as $\bone_S = \delta \bone + z$, where $\delta = |S|/N$. A routine calculation shows that 
\begin{align*}
    \|z\|_2^2 = \delta N (1-\delta), \qquad z^TMz=-\delta^2Nd.
\end{align*}
But by the variational form of the minimum eigenvalue $\lambda_{\min}$ of $M$, we also have that $z^TMz \ge \lambda_{\min} \|z\|_2^2 = \lambda_{\min} \delta N (1-\delta)$. Thus, if we could show that the minimum eigenvalue $\lambda_{\min}$ of $M$ is at least $-d\eps$ for $\eps > 0$, we would obtain an upper bound of $\eps N$ on the size of any independent set $S$. Combining with the calculation from the previous paragraph, this would lower bound $|\mcF|$ as
\begin{align*}
    N \le |\mcF|\eps N \implies |\mcF| \ge 1/\eps.
\end{align*}
In particular, the value of $\eps$ we will aim for is $\eps=2^{-\Omega(n^2)}$, which will give us that $\log|\mcF|=\Omega(n^2)$ --- our desired lower bound.

Summarily, our entire task has been distilled into a concrete objective: construct a constant-sized gadget $g$ and associated matrix $M$ which satisfies that $M(x,y) > 0 \implies h(x,y)=1$; furthermore, $M$ should be non-negative, symmetric, have a common row-sum $d > 0$, and also have the desired spectral property that its minimum eigenvalue $\lambda_{\min}$ is not too negative --- namely at least $-d\cdot 2^{-\Omega(n^2)}$. To this end, we first tried several natural gadgets like the AND, XOR and AND/OR gadgets, but to no avail. Nevertheless, it turns out that a simple \textit{cyclic gadget} $g:\{0,1\}^3 \times \{0,1\}^3 \to \{0,1\}$ over 3 bits works. Concretely, by identifying every $x \in \{0,1\}^3$ as the binary representation of an element in $\Z_8=\{0,1,\dots,7\}$ in the natural way, define
\begin{align}
    \label{eqn:overview-gadget-def}
	g(x,y) = \begin{cases}
		0 & \text{if $y-x \in \{-1,0,1\}\mod 8$}, \\
		1 & \text{otherwise.}
	\end{cases}
\end{align}
That is, $g$ outputs $0$ only when $y$ is either equal to $x$, or equal to one of the two neighbors of $x$ around the cycle $0,1,\dots,7,0$. Note that the gadget is symmetric, namely $g(x,y)=g(y,x)$.

While there has been precedent for using such constant-sized modular gadgets in the literature --- e.g., the \textit{versatile} gadget $g(x,y)= 1 \iff x+y \in \{2,3\} \mod 4$ \citep{alexander2011pattern,goos2014communication,anshu2020query,goos2022communication} %
--- to the best of our knowledge, the precise cyclic gadget above has not been used specifically in the context of query-to-communication lifting. In particular, the matrix of the versatile gadget  has an equal number of 0's and 1's (namely 2 each) in every row, whereas we do end up crucially using that the cyclic gadget has an unequal number (3 and 5) in our spectral analysis. Another reason to use the cyclic gadget is that the precise spectrum of the associated Cayley graph is well-understood \citep{trevisannotes,nica2018brief,larsen2026sample}.

In slightly more detail, we obtain the desired matrix $M$ as follows: we first define two simple $8 \times 8$ matrices $W_0$ and $W_1$ that satisfy 
\begin{align*}
    W_b(x,y) > 0 \iff g(x,y) = b
\end{align*}
for the cyclic gadget $g$. %
Then, for any $i \in [n]$ and any $S \subseteq B_i, |S|=n/2$, with $\bone_S$ being the indicator vector for the set $S$, consider defining the matrix $M_{i,S}$ as the Kronecker product
\begin{align}
    \label{eqn:overview-M-i-S}
    M_{i,S} = \bigotimes_{u \in U} W_{\bone_{S,u}} \quad \text{ where } \quad M_{i,S}(x,y) = \prod_{u \in U} W_{\bone_{S,u}}(x_u, y_u).
\end{align}
Here, $\bone_{S,u}=\Ind[u \in S]$, and every $x_u, y_u \in \Z_8$. That is, $M_{i,S}$ has a ``component'' $W_1$ at the index $u$ in the Kronecker product if $u \in S$, and $W_0$ otherwise. Finally, define
\begin{align*}
	M = \sum_{i=1}^n \sum_{S \in \binom{B_i}{n/2}} M_{i,S}.
\end{align*}
It can be readily verified that $M$ is non-negative, symmetric and has a common row-sum $d > 0$. Furthermore, observe that if $M(x,y) > 0$, then it must be the case that $M_{i,S}(x,y) > 0$ for at least some $(i,S)$, meaning that every term in the product in \eqref{eqn:overview-M-i-S} is positive. By definition, this means that $(g(x_u, y_u)_{u \in U})=\bone_S$, which means $h((g(x_u, y_u))_{u \in U})=f(\bone_S)=1$, as desired.

All that remains is to argue that the minimum eigenvalue of $M$ is at least $-d\cdot 2^{-\Omega(n^2)}$. As mentioned above, we can compute precise expressions for all the eigenvalues of the matrix $M$ since it is derived from the cyclic gadget. Concretely, every eigenvalue can be written (up to a multiplicative factor) as an \textit{elementary symmetric polynomial} of degree $n/2$. Bounding these symmetric polynomials ends up being the most technically involved part of the paper. To avoid overwhelming the reader further in this overview, we provide intuition for this analysis in the sections containing it, rather than delving into it here.

\paragraph{Quartic Separation between Certificate Complexity and Approximate Degree.} The proof of \Cref{thm:approximate-degree} is based on the \textit{cheat-sheet} framework of \cite{aaronson2016separations}, which was also used by \citet{balodis2023unambiguous} to establish their cubic separation between certificate complexity and approximate degree. %

The first ingredient in the construction is a partial function $H$ that has a large $\overline{0}$ and $\overline{1}$ certificate complexity of $\Omega(n^2)$. Such a function can be constructed by applying the Puzzle I $\to$ Puzzle II transformation from \cite{balodis2023unambiguous} to the unambiguous DNFs that we construct in \Cref{thm:unambigous-dnf}. While the constructed partial function $H$ is hard to certify to be `not-0' and `not-1', we show that it is nevertheless possible to certify that a given input $z$ satisfies $H(z)=0$ or $H(z)=1$, by performing only a few ``local checks'' over a carefully encoded \textit{certificate description}. That is, we specify both the encoding of a certificate description and the action of a verifier on this description. We argue that the verification process is sound and complete --- any input that is verified to be a 0-input (1-input) is actually a 0-input (1-input). Conversely, for every 0-input (1-input), there is a certificate description that gets verified by the verifier. Crucially, the verification process can be expressed as an AND of only $O(n)$-many (deterministic) decision tree evaluations of depth $\polylog(n)$. Since every decision tree can be exactly expressed by a polynomial having degree on the order of the depth of the decision tree, and the AND function on $O(n)$ inputs can also be approximated by a polynomial of degree $\widetilde{O}(\sqrt{n})$ \citep{nisan1994degree}, the entire verification process can be approximated by a polynomial of degree $\widetilde{O}(\sqrt{n})$.

It is essential that the AND above is only over $O(n)$ many local checks (so that it may be approximated by an $\widetilde{O}(\sqrt{n})$ degree polynomial) for the final quartic separation that we desire, and the special structure of our unambiguous DNFs turns out to be crucial for this property. To belabor the point, one could also have constructed the partial function above from the unambiguous DNFs derived in \citet{balodis2023unambiguous} instead. This partial function would also have large $\overline{0}$ and $\overline{1}$ certificate complexity; however, for this partial function, it is not clear how to encode a certificate description for 0/1 inputs that may be verified (soundly and completely) by performing only $O(n)$-many local checks. This is also precisely why \citet{balodis2023unambiguous} use a separate partial function that is based on the Hex board game for their separation between certificate complexity and approximate degree; while this Hex-based partial function does indeed allow for a certificate description that can be verified with $O(n)$ local checks, its $\overline{0}$ and $\overline{1}$-certificate complexity is smaller (only $\Omega(n^{1.5})$), which eventually leads to their weaker cubic separation.

Given that our unambiguous DNFs do have enough structure to construct the desired partial function, we next totalize this partial function using the aforementioned cheat-sheet framework. Namely, we consider $r=\Theta(\log n)$ many inputs $z^{(1)},\dots,z^{(r)}$ to $H$. For each bit-string $c \in \{0,1\}^r$, which corresponds to a ``cheat-sheet cell'', we consider $r$-many candidate certificate descriptions for the $r$ inputs. The total function evaluates to 1 iff there is some bit-string $c$, such that for each $i \in [r]$, the $i^\text{th}$ certificate description validly certifies input $z^{(i)}$ to be a $c_i$-input.

Using the fact that it is hard to certify the partial function $H$ evaluating to $\overline{0}$ as well as $\overline{1}$, together with the completeness of the verifier, we can show that the total function constructed above has $\C(f) \ge \Omega(n^2)$. Thereafter, using the fact that each of the $r$ verifications above may be approximated by an $\widetilde{O}(\sqrt{n})$ degree polynomial, we can show that the total function also has approximate degree at most $\widetilde{O}(\sqrt{n})$. Combining these yields the claimed quartic separation.

\paragraph{Sample Compression.} Our proof of \Cref{thm:multiclass-sample-compression-lower-bound} is largely inspired by the sample compression lower bound for multiclass concept classes shown in \cite{pabbaraju2024multiclass}, which translates an Alon-Saks-Seymour refutation into a multiclass concept class that is hard to compress. The strategy is as follows: first, a partial concept class (whose range is $\{0,1,*\}$) having VC dimension 1 is constructed from a graph witnessing a separation between chromatic number and biclique partition number. The class has a partial concept corresponding to every vertex in the graph. Then, a total multiclass concept class that \textit{disambiguates} the partial concept class is constructed by having a total concept for every partial concept, where the total concept fills in the $*$ values with a unique label that is distinct from all other total concepts. Because the labels used for disambiguation are unique, the Natarajan dimension of this total concept class is equal to the VC dimension of the partial class. Since there is one disambiguating total concept that uses a unique label per partial concept in the class, and the number of partial concepts is equal to the number of vertices in the graph, the number of labels in the disambiguating class is on the order of the number of vertices in the graph. We thus apply this strategy to our small-sized graph %
from \Cref{thm:alon-saks-seymour}. Finally, we use the observation that a sample compression scheme for the total class implies a coloring of the vertices of the graph, where the number of colors scales with the size of the sample compression; because the chromatic number of the graph is lower-bounded in terms of the number of labels, we infer the desired lower bound.

We note that applying the construction above to the graph given by Theorem 1.11 in \cite{cheung2023online} would already have given a weaker sample compression lower bound of $\widetilde{\Omega}((\log c)^{1/4})$; our improvement on the size of the graph in \Cref{thm:alon-saks-seymour} is crucial for the stronger lower bound of \Cref{thm:multiclass-sample-compression-lower-bound}.

\section{Optimal Unambiguous DNF Separation}
\label{sec:unambigous-dnf-separation}

In this section, we restate and give a complete proof of \Cref{thm:unambigous-dnf}.
\UnambiguousDNF*
\begin{proof}
    For each $n$, we will derive the DNF $f$ from a ``signed'' set-pair system $\mcT = \{(P_T, N_T)\}_T$ that has several convenient properties.

    \paragraph{Signed set-pair system.} Fix the universe $U$ to be
    \begin{align*}
        U = B_1 \sqcup B_2 \sqcup \dots \sqcup B_n,
    \end{align*}
    where $B_1=\{1,\dots,n\}, B_2=\{n+1,\dots,2n\},\dots,B_n=\{(n-1)n+1,\dots,n^2\}$. So, every $|B_i|=n$, and $|U|=n^2$. Later, we will have one variable $x_u$ in the DNF $f$ for every element $u \in U$.

    Now, for every unordered pair $\{i,j\}$, let 
    \begin{align*}
        \lambda^{\{i,j\}}_i: B_i \to [n], \qquad \lambda^{\{i,j\}}_j: B_j \to [n],
    \end{align*}    
    be bijections that will suitably be chosen later in order to ensure DNF term-width $O(n)$. %
    Fix $k=\lfloor n/2 \rfloor$, and for every $S \subseteq B_i$, $|S|=k$, and $j \neq i$, define
    \begin{align*}
        m_{i,j}(S) = \min_{x \in S} \lambda^{\{i,j\}}_i(x).
    \end{align*}
    Namely, $m_{i,j}(S)$ is the smallest value of $\lambda^{\{i,j\}}_i$ over $S$. Next, we define the ``prefix set'' $A_{i, j}(S)$ to comprise of all the elements in $B_j$ that $\lambda^{\{i,j\}}_j$ maps to a value that is at most $m_{i,j}(S)$, that is
    \begin{align*}
        A_{i,j}(S)=\{y \in B_j: \lambda^{\{i,j\}}_j(y) \le m_{i,j}(S)\}.
    \end{align*}

    Now, for every $i \in [n]$ and $S \subseteq B_i, |S|=k$, define the ``positive set'' $P_{i,S}$ and ``negative set'' $N_{i,S}$ as
    \begin{align*}
        P_{i,S} := S, \qquad N_{i,S} = \left(B_i \setminus S\right) \cup \bigcup_{j \neq i} A_{i,j}(S).
    \end{align*}
    Note that $P_{i,S} \cap N_{i,S}=\emptyset$, since each $A_{i,j}(S) \subseteq B_j$, and $B_j$ is disjoint from $B_i$.
    
    The desired signed set-pair system is then $\mcT=\{(P_{T}, N_T)\}_T$, where $T=(i,S)$ ranges over all $i \in [n]$ and $S \subseteq B_i, |S|=k$. %

    \paragraph{Pairwise incompatibility.} We now argue that for every distinct pair $(i,S)$ and $(j,S')$ (where $S \subseteq B_i$ and $S' \subseteq B_j$), it holds that
    \begin{align*}
        P_{i,S} \cap N_{j,S'} \neq \emptyset \quad \text{ or } \quad P_{j,S'} \cap N_{i,S} \neq \emptyset.
    \end{align*}
    To see this, first suppose $i=j$, so that $S \neq S'$. Then, observe that $S'$ has at least one element in $B_i \setminus S$, and hence $P_{j,S'} \cap N_{i,S} \neq \emptyset$.

    Now suppose that $i \neq j$, and let
    \begin{align*}
        a = m_{i,j}(S), \qquad b=m_{j,i}(S').
    \end{align*}
    If $a \le b$, then the element $x \in S$ for which $\lambda^{\{i,j\}}_i(x) = a$ belongs to $A_{j,i}(S')$; hence
    \begin{align*}
        P_{i,S} \cap N_{j,S'} \neq \emptyset.
    \end{align*}
    On the other hand, if $a > b$, then the element $x \in S'$ for which $\lambda^{\{i,j\}}_j(x)=b$ belongs to $A_{i,j}(S)$; hence
    \begin{align*}
        P_{j,S'} \cap N_{i,S} \neq \emptyset.
    \end{align*}
    Thus, we have shown that for any distinct $(i,S)$ and $(j,S')$, it holds that at least one of $P_{i,S} \cap N_{j,S'}$  and $P_{j,S'} \cap N_{i,S}$ is non-empty.

    \paragraph{Choosing the bijections.}
    For every unordered pair $\{i,j\}$, choose both the bijections $\lambda^{\{i,j\}}_i$ and $\lambda^{\{i,j\}}_j$ independently and uniformly at random. We will argue via the probabilistic method that there exists a choice of bijections satisfying that $|P_{i,S}|+|N_{i,S}| = O(n)$ for every $(i,S)$.

    To that end, fix any $i \in [n]$, and $S \subseteq B_i, |S|=k$; note that $|P_{i,S}|=|S|=k$. We will now bound the probability of the bad event that $|N_{i,S}|$ is large. For any $j \neq i$, note that the set $\{\lambda^{\{i,j\}}_i(x):x \in S\}$ is a uniformly random subset of $[n]$ of size $k$, and the event $m_{i,j}(S) > t$ occurs only if all the members of this subset are larger than $t$; that is,
    \begin{align*}
        \Pr[m_{i,j}(S) > t] = \frac{\binom{n-t}{k}}{\binom{n}{k}} \le \left(1-\frac{t}{n}\right)^{k} \le e^{-tk/n} \le e^{-t/3}.
    \end{align*}
    Thus, we can choose $\lambda > 0$ small enough so that
    \begin{align*}
        \E\left[e^{\lambda m_{i,j}(S)}\right] \le 2.
    \end{align*}
    Now note that since $\lambda^{\{i,j\}}_i$ for every $j \neq i$ is sampled independently, the different random variables $m_{i,j}(S)$ are independent. Therefore, by Markov's inequality, we have that
    \begin{align*}
        \Pr\left[\sum_{j \neq i}m_{i,j}(S) > Cn\right] \le 2^{n-1} e^{-\lambda C n}.
    \end{align*}
    The total number of $(i,S)$ pairs is $n\binom{n}{k}\le n2^{n}$. Thus, choosing $C$ to be an appropriately large constant, we can union bound the bad event above over all these pairs, and obtain that with positive probability, it holds that $\sum_{j \neq i} m_{i,j}(S) \le Cn$ for every $(i,S)$ pair. In particular, this implies that there exists a good realization of all the bijections for which $\sum_{j \neq i} m_{i,j}(S) \le Cn$ for every $(i,S)$ pair --- we will choose this realization of the bijections for our signed set system.
    
    Next, observe that the size of any $A_{i,j}(S)$ is exactly $m_{i,j}(S)$, and hence, for every $(i,S)$,
    \begin{align*}
        \left|\bigcup_{j \neq i}A_{i,j}(S)\right| = \sum_{j \neq i} m_{i,j}(S) \le Cn.
    \end{align*}
    This means that
    \begin{align*}
        |P_{i,S}| + |N_{i,S}| = k + n-k + \left|\bigcup_{j \neq i}A_{i,j}(S)\right| \le n+Cn = O(n),
    \end{align*}
    as desired.

    \paragraph{Large hitting number for the positive sets.} Consider the family of positive sets $\mcP=\{P_{i,S}: i \in [n], S \subseteq B_i, |S|=k\}$. The hitting/transversal number $\tau(\mcP)$ of this family is the size of the smallest subset of $U$ that has non-empty intersection with every member of $\mcP$. Let $H$ be such a subset that realizes the hitting number. For any $i \in [n]$, since $H$ must intersect every $P_{i,S}=S\subseteq B_i, |S|=k$, it must be the case that
    \begin{align*}
        |H \cap B_i| \ge n-k+1 \ge n/2+1.
    \end{align*}
    Since the above holds for every $i \in [n]$, and all the $B_i$'s are disjoint, we obtain that
    \begin{align*}
        \tau(\mcP)=|H| \ge n(n/2+1) = \Omega(n^2). 
    \end{align*}

    \paragraph{Constructing the DNFs.} We now define the unambiguous DNF $f$ with respect to the signed set-pair system constructed above. Concretely, we will have a variable $x_u$ for every $u \in U$, so that $f$ maps $\{0,1\}^{n^2} \to \{0,1\}$. Let $x=(x_u)_{u \in U}$; we define
    \begin{align}
        \label{eqn:def-dnf}
        f(x) = \bigvee_{(i,S)} \left(\bigwedge_{u \in P_{i,S}} x_u \wedge \bigwedge_{v \in N_{i,S}} \neg x_{v}\right),
    \end{align}
    where the OR ranges over all $i \in [n], S \subseteq B_i, |S|=k$. Since $|P_{i,S}|+|N_{i,S}| = O(n)$ for all $(i,S)$ as argued above, the term-width of $f$ is $O(n)$. 
    
    We now argue that $f$ is an unambiguous DNF, which will imply that $\UC_1(f) = O(n)$. Towards this, fix any $x$ for which $f(x)=1$; this must mean that at least one of the conjunctions is true. Namely, there exists $(i,S)$ for which
    \begin{align*}
        \bigwedge_{u \in P_{i,S}} x_u \wedge \bigwedge_{v \in N_{i,S}} \neg x_{v} = 1,
    \end{align*}
    which in turn fixes $x_u=1$ for every $u \in P_{i,S}$, and $x_u=0$ for every $v \in N_{i,S}$. Now consider any pair $(j,S')$ that is distinct from $(i,S)$. As argued above, $(j,S')$ and $(i,S)$ are pairwise incompatible, meaning that $P_{i,S} \cap N_{j,S'} \neq \emptyset$ or $P_{j,S'} \cap N_{i,S} \neq \emptyset$. But since the variables in $P_{i,S}, N_{i,S}$ are pinned down, we have that $x_v = 1$ for some $v \in N_{j,S'}$ or $x_u=0$ for some $u \in P_{j,S'}$. In particular, this means that the conjunction corresponding to the pair $(j, S')$ evaluates to false, i.e.,
    \begin{align*}
        \bigwedge_{u \in P_{j,S'}} x_u \wedge \bigwedge_{v \in N_{j,S'}} \neg x_{v} = 0.
    \end{align*}
    Thus, exactly one term in the DNF evaluates to true for any $x$ satisfying $f(x)=1$, meaning that $f$ is an unambiguous DNF.

    Finally, we argue that $\C_0(f)=\Omega(n^2)$. Towards this, consider the input $x=0^{n^2}$, for which $f(x)=0$. Observe that any 0-certificate $\rho_x$ for $x$ must necessarily set $\rho_{x,u}=0$ for at least one $u \in P_{i,S}$ for every pair $(i,S)$. Otherwise, if $\rho_{x,u}=*$ for every $u \in P_{i,S}$ for some pair $(i,S)$, we could set all of these $*$ values to 1 and set any remaining $*$ values in $N_{i,S}$ to $0$, thus obtaining an $x'$ that is consistent with $\rho_x$ but makes the conjunction corresponding to $(i,S)$ true. This means that any 0-certificate $\rho_x$ for $x$ must have at least one non-$*$ (set to $0$) within every $P_{i,S}$. In particular, the set $\{u \in U: \rho_{x,u} \neq *\}$ has non-empty intersection with every member of the family of positive sets $\mcP$, meaning that its size is at least the hitting number $\tau(\mcP)$. We thus have that
    \begin{align*}
        \C_0(f) \ge \C_0(f, x) \ge \tau(\mcP)=\Omega(n^2).
    \end{align*}
    This completes the proof.
\end{proof}

\section{Constant-gadget Lifting}
\label{sec:lifting}

We now show how we can lift the separation between certificate complexities above to a communication problem. Crucially, we accomplish this lifting with a constant-sized gadget.

For a gadget function $g:\{0,1\}^k \times \{0,1\}^k \to \{0,1\}$ and a Boolean function $f:\{0,1\}^n \to \{0,1\}$, let $f \circ g^n: \{0,1\}^{kn} \times \{0,1\}^{kn} \to \{0,1\}$ denote the communication problem that arises from lifting $f$ with the gadget $g$, defined as
\begin{align*}
    f \circ g^n((x_1,\dots,x_n), (y_1,\dots,y_n)) = f(g(x_1,y_1),\dots,g(x_n, y_n)),
\end{align*}
where each $x_i, y_i \in \{0,1\}^k$.

We will construct a constant-sized gadget $g$ such that upon lifting our unambiguous DNFs from \Cref{thm:unambigous-dnf} through the gadget $g$, one obtains a communication problem $h$ that has large separation between two measures: $\Par_1(h)$ and $\Cov_0(h)$. Recall that $\Cov_0(h)$ is the minimum number of 0-monochromatic rectangles required to cover all the 0-entries in the communication matrix of $h$. The quantity $\Par_1(h)$ is defined to be the minimum number of 1-monochromatic rectangles required to \textit{partition} all the 1-entries in the communication matrix of $h$.

We now restate and prove \Cref{thm:constant-gadget-lifting}.

\Lifting*

Since for any $f:\{0,1\}^n \to \{0,1\}$ and any gadget $g:\{0,1\}^k \times \{0,1\}^k \to \{0,1\}$, it is known that $\log \Par_1(f \circ g^n) \le O(k \cdot \UC_1(f))$, we obtain the following corollary by combining \Cref{thm:constant-gadget-lifting} and \Cref{thm:unambigous-dnf}.

\begin{corollary}[Communication Complexity Separation]
	\label{corollary:communication-separation}
	For all large enough $n$, there exists a communication problem $h:\{0,1\}^{3n^2} \times \{0,1\}^{3n^2} \to \{0,1\}$ for which
	\begin{align*}
		\log \Par_1(h) = O(n), \qquad \log \Cov_0(h) = \Omega(n^2).
	\end{align*}
\end{corollary}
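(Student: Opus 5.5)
The plan is to take $h := f \circ g^{n^2}$, where $f$ is the unambiguous DNF of \Cref{thm:unambigous-dnf} and $g$ is the cyclic gadget on $\{0,1\}^3\times\{0,1\}^3$ from \Cref{thm:constant-gadget-lifting}. The lower bound $\log \Cov_0(h) = \Omega(n^2)$ is then exactly the conclusion of \Cref{thm:constant-gadget-lifting}, so the only work is the upper bound $\log \Par_1(h) = O(n)$. For completeness I will prove the general fact $\log \Par_1(f\circ g^N) \le O(k\cdot \UC_1(f)) + \log(\#\text{terms})$ directly for our $f$, rather than only citing it.

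First, I partition the preimages of the gadget. For each $b\in\{0,1\}$, the set $g^{-1}(b)\subseteq \{0,1\}^3\times\{0,1\}^3$ splits into at most $2^3=8$ rectangles of the form $\{x\}\times\{y: g(x,y)=b\}$, one per row $x$. Next, fix a term $C_T$ of $f$ with $P_T\cup N_T = D_T$, where $|D_T| = w = O(n)$. Say $u\in P_T$ requires the bit $1$ and $u\in N_T$ requires the bit $0$. Then the set of inputs $(x,y)$ on which $C_T\bigl(g(x_1,y_1),\dots\bigr)=1$ is
\[
\prod_{u\in D_T} g^{-1}(b_u)\;\times\;\prod_{u\notin D_T}\bigl(\{0,1\}^3\times\{0,1\}^3\bigr),
\]
with coordinates interleaved appropriately between Alice and Bob. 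Taking products of the per-coordinate rectangle partitions, this set is partitioned into at most $8^{w}$ combinatorial rectangles. Each of these rectangles is $1$-monochromatic for $h$, since on it the term $C_T$ is satisfied.

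Then I use unambiguity to show these rectangles together partition the $1$-entries of $h$. Every $1$-entry $(x,y)$ has $f(g(x,y))=1$, so some term is satisfied and the entry is covered. By unambiguity, exactly one term is satisfied, and within that term the product partition is disjoint, so the entry lies in exactly one rectangle. The number of terms is $n\binom{n}{\lfloor n/2\rfloor}\le n2^n$. Hence
\[
\Par_1(h)\le n2^n\cdot 8^{O(n)},
\]
which gives $\log\Par_1(h)=O(n)$.

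There is no real obstacle here. The only points needing care are checking that rectangles from distinct terms are disjoint (this is precisely unambiguity) and that the count of terms contributes only $O(n)$ to the logarithm. The inputs of $h$ have length $3n^2$ per party, as stated, for all $n$ large enough for \Cref{thm:constant-gadget-lifting} to apply.
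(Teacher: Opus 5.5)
Your proposal is correct and follows the paper's argument. The lower bound is exactly \Cref{thm:constant-gadget-lifting}, and the upper bound combines $\log \Par_1(f\circ g^{n^2}) \le O(k\cdot \UC_1(f))$ with $\UC_1(f)=O(n)$ from \Cref{thm:unambigous-dnf}; the paper only cites the first bound as known, while you reprove it inline for this $f$, counting terms explicitly as $n\binom{n}{\lfloor n/2\rfloor}\le n2^n$ (the general version uses that an unambiguous width-$w$ DNF has at most $2^w$ terms, since its terms are disjoint subcubes of codimension at most $w$).
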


Since the rank (over the reals) of the communication matrix of $h$ is at most $\Par_1(h)$, and since every deterministic communication protocol for $h$ having cost $D(h)$ induces a cover of all its $0$-entries by at most $2^{D(h)}$ rectangles, \Cref{corollary:communication-separation} also helps us shave all the log factors in the quadratic lower bound of \citet[Theorem 1.2]{goos2018deterministic} for the log-rank conjecture, which, to the best of our knowledge, was the previous best lower bound for the conjecture.

\begin{corollary}[Quadratic Lower Bound for Log-Rank]
	\label{corollary:log-rank}
	There exists a family of communication problems $h$ for which
    \begin{align*}
        D(h) \ge \Omega\left((\log \rank(h))^2\right).
    \end{align*}
\end{corollary}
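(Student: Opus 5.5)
The plan is to read this off directly from \Cref{corollary:communication-separation}. For every large enough $n$, take the problem $h:\{0,1\}^{3n^2}\times\{0,1\}^{3n^2}\to\{0,1\}$ given there, which satisfies $\log \Par_1(h)=O(n)$ and $\log\Cov_0(h)=\Omega(n^2)$. The family in the statement will be $\{h\}$ as $n$ ranges over all large enough integers. The work is to convert the two measures of \Cref{corollary:communication-separation} into $\log\rank(h)$ and $D(h)$.

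\textbf{Step 1 (rank vs.\ partition).} Let $R_1,\dots,R_p$ be a partition of the $1$-entries of the communication matrix $A_h$ into $p=\Par_1(h)$ 1-monochromatic rectangles. Since the rectangles are disjoint and $A_h$ is the $0/1$ indicator of its $1$-entries, we can write $A_h=\sum_{t=1}^{p}\bone_{X_t}\bone_{Y_t}^T$, where $R_t=X_t\times Y_t$. This is a sum of $p$ rank-one matrices, so subadditivity of rank gives $\rank(h)\le \Par_1(h)$. Hence $\log\rank(h)=O(n)$.

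\textbf{Step 2 (deterministic cost vs.\ $0$-cover).} Fix an optimal deterministic protocol for $h$, of cost $D(h)$. Its leaves partition $\mcX\times\mcY$ into at most $2^{D(h)}$ rectangles, each monochromatic for $h$. The leaves labelled $0$ cover every $0$-entry using only $0$-monochromatic rectangles, so $\Cov_0(h)\le 2^{D(h)}$. Therefore $D(h)\ge\log\Cov_0(h)=\Omega(n^2)$.

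\textbf{Step 3 (combine).} Steps 1 and 2 give $D(h)\ge\Omega(n^2)\ge\Omega\bigl((\log\rank(h))^2\bigr)$. For this to be meaningful we need $\rank(h)\ge 1$, i.e.\ $h$ has a $1$-entry. This holds because $f$ from \Cref{thm:unambigous-dnf} is satisfiable: the input setting $P_{i,S}$ to ones and everything else to zeros gives $f=1$. The gadget $g$ is also surjective onto $\{0,1\}$, so some input to $h$ realizes this assignment.

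There is no real obstacle here, since all the difficulty is contained in \Cref{thm:constant-gadget-lifting}. The only point needing care is the bookkeeping in Step 2: we should use the \emph{lower} bound on $\Cov_0$ together with the fact that a protocol induces a $0$-cover. We do not need any statement about partitions of the $0$-entries.
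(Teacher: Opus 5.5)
Your proposal is correct and follows the paper's own argument: combine \Cref{corollary:communication-separation} with $\rank(h)\le \Par_1(h)$ and with the fact that a cost-$D(h)$ protocol yields a $0$-cover by at most $2^{D(h)}$ rectangles. Your extra check that $h$ has a $1$-entry, so that $\log\rank(h)$ is defined, is a harmless detail the paper leaves implicit.
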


In the rest of this section, we will systematically establish \Cref{thm:constant-gadget-lifting}. We start by defining the gadget function $g$.

\subsection{Cyclic Gadget}
\label{sec:gadget-definition}

Fix $Q=2^{3}$, and let $\Z_Q=\{0,1,\dots,Q-1\}$ be the additive group modulo $Q$. We identify every element of $\Z_Q$ with its binary representation in $\{0,1\}^{3}$. The gadget function $g:\{0,1\}^{3} \times \{0,1\}^{3} \to \{0,1\}$ is then defined as
\begin{align}
	\label{eqn:gadget-def}
	g(x,y) = \begin{cases}
		0 & \text{if $y-x \in \{-1,0,1\}\mod Q$}, \\
		1 & \text{otherwise.}
	\end{cases}
\end{align}
That is, $g$ outputs $0$ only when $y$ is either $x$, or equal to one of the two neighbors of $x$ around the cycle $0,1,\dots,Q-1,0$. Note that the gadget is symmetric, namely $g(x,y)=g(y,x)$.

We recall the notation from \Cref{sec:unambigous-dnf-separation}, where the universe $U=[n^2]=B_1 \sqcup \dots \sqcup B_n$, and $B_i = \{(i-1)n+1,\dots, i n\}$. The lifted communication problem $h:\Z_Q^U \times \Z_Q^U \to \{0,1\}$ is then defined as
\begin{align*}
	h(x,y) = f((g(x_u,y_u))_{u \in U}).
\end{align*}
Since $g(x,x)=0$ for every $x$, and $f(0^{n^2})=0$, we have that $h(x,x)=0$ for all $x \in \Z_Q^U$.

\subsection{Characterizing Matrix}
\label{sec:matrix}

We will now define a matrix $M$ that is closely related to the communication matrix of $h$. The purpose of defining this matrix is that the spectrum of this matrix will govern $\log \Cov_0(h)$. Namely, as we shall see in \Cref{sec:eig-bound-to-communication} ahead, in order to lower-bound $\log \Cov_0(h)$, it will suffice to lower-bound the minimum eigenvalue of $M$.

We will express $M$ via the Kronecker product of $Q \times Q$ matrices that will serve as building blocks for $M$, and whose rows and columns we will index by $\Z_Q$. Let $I_Q$ and $J_Q$ be the $Q \times Q$ identity matrix and all-ones matrix respectively. Let $C_Q$ be the adjacency matrix of the cycle over $\Z_Q$, i.e.,
\begin{align*}
	C_Q(x,y) = 1 \iff x-y \in \{-1,1\} \mod Q
\end{align*}
Now define the two matrices $W_0$ and $W_1$ as
\begin{align*}
	&W_0 = I_Q + \frac{1}{4}C_Q = \begin{bmatrix}
		1 & 1/4 & 0 & \dots & 0 & 0 & 1/4 \\
		1/4 & 1 & 1/4& \dots & 0 & 0 & 0 \\
		\vdots \\
		0 & 0 & 0 & \dots & 1/4 & 1 & 1/4 \\
		1/4 & 0 & 0 & \dots &0 & 1/4 & 1
	\end{bmatrix} \\
	&W_1 = J_Q- I_Q-C_Q =  \begin{bmatrix}
		0 & 0 & 1 & \dots & 1 & 1 & 0 \\
		0 & 0 & 0& \dots & 1 & 1 & 1 \\
		\vdots \\
		1 & 1 & 1 & \dots & 0 & 0 & 0 \\
		0 & 1 & 1 & \dots &1 & 0 & 0
	\end{bmatrix}
\end{align*}
Both $W_0$ and $W_1$ are symmetric and non-negative.  The sum of every row in $W_0$ is equal to
\begin{align}
	\label{eqn:c_0}
	c_0 := 1 + \frac14 + \frac14 = \frac32,
\end{align}
and the sum of every row in $W_1$ is equal to
\begin{align}
	\label{eqn:w_0}
	w_0 := Q-3.
\end{align}
(As will become clearer later, we use %
the subscript 0 to denote both these quantities because they correspond to the eigenvalues of the all-ones vector, which is the eigenvector at frequency $0$ in the Fourier spectrum.)

Most importantly,
\begin{align}
	\label{eqn:W-entry-characterization}
	&W_0(x,y) > 0 \iff g(x,y)=0, \nonumber \\
	&W_1(x,y) > 0 \iff g(x,y)=1.
\end{align}
The $1/4$ scaling on $C_Q$ in the definition of $W_0$ is deliberate, and will later ensure that all the eigenvalues of $W_0$ corresponding to the Fourier spectrum are positive.

Now, for any $i \in [n]$ and for any $S \subseteq B_i, |S|=k=\lfloor n/2\rfloor$, let $b^{(i,S)} \in \{0,1\}^U$ be defined as
\begin{align*}
	b^{(i,S)} = \bone_{S}, \quad \text{i.e., } b^{(i,S)}_u = 1 \iff u \in S.
\end{align*}
Then, let $N = Q^{|U|}$, and define the $N \times N$ matrix $M_{i,S}$ as
\begin{align}
    \label{eqn:def-M-i-S}
	M_{i,S} = \bigotimes_{u \in U} W_{b^{(i,S)}_{u}}.
\end{align}
In particular, for $x=(x_u)_{u \in U}$ and $y=(y_u)_{u \in U}$, where $x_u, y_u \in \Z_Q$, the entry $M_{i,S}(x,y)$ is precisely
\begin{align}
	\label{eqn:entry-in-kronecker}
	M_{i,S}(x,y) &= \prod_{u \in U} W_{b^{(i,S)}_u}(x_u, y_u).
\end{align}
Note that every entry in $M_{i,S}$ is non-negative. Exactly $k=|S|$ terms in the product above correspond to $W_1$ matrices. Since $k \ge 1$, and the diagonal of $W_1$ is 0, we get that the diagonal of $M_{i,S}$ is 0. Also, since the row-sum of the Kronecker product is equal to the product of the row-sums of the individual matrices, we get that every row in every $M_{i,S}$ sums to $c_0^{|U|-k}w_0^k$.

Finally, define $M$ as
\begin{align*}
	M = \sum_{i=1}^n \sum_{S \in \binom{B_i}{k}} M_{i,S}.
\end{align*}
By the properties of each $M_{i,S}$, $M$ is non-negative, and has a zero diagonal. Every row in $M$ sums to
\begin{align}
	\label{eqn:d}
	d := n\binom{n}{k}c_0^{|U|-k}w_0^k = n\binom{n}{k}c_0^{|U|}\left(\frac{w_0}{c_0}\right)^k.
\end{align}

Furthermore, if $M(x,y) > 0$, then it must be the case that for some $i$ and some $S \in \binom{B_i}{k}$, $M_{i,S}(x,y) > 0$, meaning that every term in the product in \eqref{eqn:entry-in-kronecker} is positive. But this means, from \eqref{eqn:W-entry-characterization}, that for every $u \in U$,
\begin{align*}
	g(x_u, y_u) = b^{(i,S)}_u,
\end{align*}
and hence,
\begin{align*}
	h(x,y) = f((g(x_u,y_u))_{u \in U}) = f(b^{(i,S)})= f(\Ind_S) = 1.
\end{align*}
In the above, we used the precise form \eqref{eqn:def-dnf} of the DNF $f$ from \Cref{thm:unambigous-dnf}, where the input $\Ind_S$ satisfies exactly the term corresponding to $(i,S)$. We have thus established a crucial property of the matrix $M$, namely
\begin{align}
	\label{eqn:crucial-property}
	M(x,y) > 0 \implies h(x,y) = 1.
\end{align}
If we interpret $M$ as the adjacency matrix of a graph, where $x$ and $y$ are connected by an edge iff $M(x,y)>0$, then the above relation implies that a 0-entry in the communication matrix of $h$ translates to the \textit{absence} of an edge in $M$. This viewpoint will be helpful ahead.

In summary, $M$ is a real, symmetric, non-negative matrix. Its diagonal entries are zero, every row sums to $d$, and every positive entry in $M$ corresponds to a 1-input for $h$.

\subsection{Bounded Spectrum to Communication Lower Bound}
\label{sec:eig-bound-to-communication}

We now show how a lower bound on the minimum eigenvalue of the matrix $M$ translates to a lower bound on $\log \Cov_0(h)$ for the lifted communication problem $h$. Towards this, as mentioned above, we will first interpret the matrix $M$ as the adjacency matrix of a graph. Namely, we will consider an undirected graph $G_M$ on $N$ vertices, where two vertices $x$ and $y$ are connected by an (unweighted) edge iff $M(x,y) > 0$. 

Recall that every row in $M$ has a common sum $d > 0$. The following lemma states that no independent set in $G_M$ can be larger than $\eps N$, if every eigenvalue of $M$ is at least $-d\eps$ for some $\eps \ge 0$.

\begin{lemma}[Bounded Independent Sets]
	\label{lemma:eig-to-independent-set}	
	Let $\lambda_{\min}$ be the minimum eigenvalue of $M$, and suppose that $\lambda_{\min} \ge -d\eps$, for some $\eps \ge 0$. Then, every independent set $S$ in the graph $G_M$ satisfies that
	\begin{align*}
		|S| \le \left(\frac{\eps}{1+\eps}\right) N \le \eps N.
	\end{align*}
\end{lemma}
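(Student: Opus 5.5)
The plan is the standard Hoffman-type ratio bound argument, using the facts established in \Cref{sec:matrix}: $M$ is real, symmetric and non-negative, it has zero diagonal, and every row sums to $d$ (so $M\bone = d\bone$).

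Fix an independent set $S$ in $G_M$ and let $\bone_S$ be its indicator vector. Since $S$ is independent, $M(x,y)=0$ for all distinct $x,y\in S$; the diagonal entries vanish as well. Hence
\begin{align*}
\bone_S^T M \bone_S=\sum_{x,y\in S}M(x,y)=0.
\end{align*}
Set $\delta=|S|/N$ and decompose $\bone_S=\delta\bone+z$. Because $\langle \bone_S,\bone\rangle=|S|=\delta N$, we get $z\perp\bone$, and therefore
\begin{align*}
\|z\|_2^2=\|\bone_S\|_2^2-\delta^2N=\delta N(1-\delta).
\end{align*}
Next expand the quadratic form. Since $M\bone=d\bone$ and $M$ is symmetric, the cross terms are $2\delta\,\bone^T M z=2\delta d\,\bone^T z=0$. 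Thus
\begin{align*}
0=\bone_S^TM\bone_S=\delta^2\,\bone^TM\bone+z^TMz=\delta^2dN+z^TMz,
\end{align*}
which gives $z^TMz=-\delta^2dN$.

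Now apply the Rayleigh quotient bound for the symmetric matrix $M$:
\begin{align*}
z^TMz\ge\lambda_{\min}\|z\|_2^2\ge -d\eps\,\delta N(1-\delta).
\end{align*}
Combining this with the previous display yields $\delta^2 dN\le d\eps\,\delta N(1-\delta)$.

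If $\delta=0$ the claim is trivial. Otherwise divide by $d\delta N>0$ to get $\delta\le\eps(1-\delta)$, that is, $\delta(1+\eps)\le\eps$. This is exactly $|S|\le\frac{\eps}{1+\eps}N$, and $\frac{\eps}{1+\eps}\le\eps$ because $\eps\ge 0$.

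There is no real obstacle here. The only points needing care are that the cross terms vanish, which uses the common row sum together with symmetry, and the sign bookkeeping when $\lambda_{\min}$ is negative. The inequality $\lambda_{\min}\|z\|^2\le z^TMz$ holds for any vector $z$, whatever the sign of $\lambda_{\min}$, so the argument goes through as stated.
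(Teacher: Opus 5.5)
Your proof is correct and follows essentially the same route as the paper: the decomposition $\bone_S=\delta\bone+z$, the computation $z^TMz=-\delta^2dN$ (you expand $\bone_S^TM\bone_S$ and kill the cross terms via $M\bone=d\bone$, while the paper expands $(\bone_S-\delta\bone)^TM(\bone_S-\delta\bone)$ directly), and the Rayleigh-quotient bound. Both yield $\delta\le\eps/(1+\eps)$.
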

\begin{proof}
	Fix any independent set $S$ in $G_M$. Let $\bone$ be the all-ones vector, and let $\bone_S$ be the indicator vector of the set $S$ (i.e., $\bone_{S,i}=1$ iff $i \in S$). Let $z$ be the component of $\bone_S$ that is orthogonal to $\bone$, and observe that
	\begin{align*}
		\bone_S = \frac{|S|}{N} \bone + z.
	\end{align*}
	For convenience, let $\delta := \frac{|S|}{N}$. We then have that
	\begin{align}
		\label{eqn:norm-z}
		\|\bone_S\|_2^2 =\delta N = \delta^2 N + \|z\|_2^2 \implies \|z\|_2^2 = \delta N (1-\delta).
	\end{align}
	Now, by the variational form for the minimum eigenvalue of $M$, it holds that
	\begin{align}
		\label{eqn:variational-bound}
		z^T M z \ge \lambda_{\min} \|z\|_2^2 \ge -d\eps\delta N (1-\delta),
	\end{align}
	where we used our assumption on the minimum eigenvalue of $M$. On the other hand, note that since $S$ is an independent set in $G_M$, and $M$ has a zero diagonal,
	\begin{align*}
		\bone_S^T M \bone_S &= \sum_{x,y \in S} M(x,y) =0.
	\end{align*}
	This gives us that
	\begin{align}
		z^TMz &= (\bone_S-\delta \bone)^T M (\bone_S-\delta \bone) \nonumber \\
		&= \delta^2Nd - 2\delta \bone_S^TM\bone = \delta^2Nd - 2\delta^2 Nd \nonumber \\
		&= -\delta^2 N d. \label{eqn:exact-variational}
	\end{align}
	Combining \eqref{eqn:variational-bound} and \eqref{eqn:exact-variational}, we get that
	\begin{align*}
		\delta^2Nd -d\eps\delta N (1-\delta) \le 0.
	\end{align*}
	Now, if $\delta = 0$, the claimed bound holds immediately. Otherwise, we can divide throughout by $\delta Nd$ to get
	\begin{align*}
		\delta -\eps(1-\delta) \le 0 \implies \delta \le \frac{\eps}{1+\eps} \le \eps.
	\end{align*}
\end{proof}

We conclude this subsection by arguing how bounded independent sets give us the desired lower bound on $\log \Cov_0(h)$. Let $R=X \times Y$ be any 0-monochromatic rectangle in the communication matrix of $h$, where $X,Y \subseteq \Z_Q^U$. Let $S_R$ be the diagonal elements in $R$, namely
\begin{align*}
	S_R = \{x \in \Z_Q^U: (x,x) \in R\} = X \cap Y.
\end{align*}
For any $x,y \in S_R$, it holds that $(x,y) \in R$ since $R$ is a rectangle, and hence $h(x,y)=0$, since $R$ is 0-monochromatic. From \eqref{eqn:crucial-property}, it must then hold that $M(x,y)=0$. We thus have that $M(x,y)=0$ for every $x,y \in S_R$, meaning that $S_R$ is an independent set in $G_M$. From \Cref{lemma:eig-to-independent-set}, this means that $|S_R| \le \eps N$.

So, let $\mcF$ %
be any collection of $0$-monochromatic rectangles that cover every $0$-entry in the communication matrix of $h$. Since $h(x,x)=0$ for all $x$, the collection must cover all the $Q^{|U|}=N$ diagonal entries of the communication matrix. Therefore, by the union bound
\begin{align*}
	N \le \sum_{R \in \mcF} |S_{R}| \le \eps N|\mcF| \implies |\mcF| \ge \frac{1}{\eps}.
\end{align*}
Since the bound above holds for every $0$-monochromatic rectangle cover, we have that
\begin{align*}
	\Cov_0(h) \ge \frac{1}{\eps} \implies \log \Cov_0 (h) \ge \log(1/\eps).
\end{align*}
Thus, if we can show that the minimum eigenvalue of $M$ is lower-bounded by $-d\eps$ for $\eps = \rho^{\gamma n^2}$ for some constants $0 < \rho < 1$ and $\gamma >0$, we will have obtained
\begin{align*}
	\log \Cov_0 (h) \ge \gamma n^2 \log(1/\rho) = \Omega(n^2),
\end{align*}
which is the desired bound of \Cref{thm:constant-gadget-lifting}. Therefore, the remaining pursuit is to show that the minimum eigenvalue of $M$ is lower-bounded by a value that is not too negative and pretty close to 0, namely $-d\rho^{\gamma n^2}$. 

\subsection{Fourier Spectrum}
\label{eqn:fourier-spectrum}

We will now compute the eigenvalues of the matrix $M$. For this, we will first precisely compute the eigenvalues of the matrices $W_0$ and $W_1$, corresponding to eigenvectors in the \textit{Fourier spectrum}.

Concretely, for $r \in \Z_Q$, let $\chi_r \in \mathbb{C}^{Q}$ denote the vector at frequency $r$ in the Fourier spectrum. The size of $\chi_r$ is $Q$, and for $x \in \Z_Q$, the $x^{\text{th}}$ entry of $\chi_r$ is defined as
\begin{align}
	\chi_r(x) = \omega^{rx} = \exp\left(\frac{2\pi i rx}{Q}\right),
\end{align}
where $\omega=\exp\left(\frac{2\pi i }{Q}\right)$ is the primitive $Q^\text{th}$ root of unity, and satisfies that $\omega^Q=1$, $\omega^r \neq 1$ for $1 \le r < Q$, and $\sum_{r=0}^{Q-1}\omega^r=0$. For $r,s \in \Z_Q$, note that $\langle \chi_r, \chi_s\rangle=\sum_{x \in \Z_Q} \chi_r(x) \overline{\chi_s(x)}$ is equal to $Q$ if $r=s$, and $0$ if $r\neq s$. Thus, the $Q$-many $\chi_r$ vectors are mutually orthogonal and non-zero. We next argue that they form an eigenbasis for $W_0$ and $W_1$.

Recall that $W_0$ and $W_1$ are defined in terms of the identity matrix $I_Q$, the all-ones matrix $J_Q$ and the cycle matrix $C_Q$; each of these matrices is a \textit{circulant} matrix. Namely, the entry at position $(x,y)$ depends only on $y-x$ (modulo $Q$). For any circulant matrix $A$, where $A(x,y)= a(y-x)$, every $\chi_r$ is an eigenvector, since
\begin{align*}
	(A\chi_r)(x) &= \sum_{y \in \Z_Q} a(y-x)\chi_r(y) \\
	&= \sum_{t \in \Z_Q} a(t)\chi_r(x+t) = \sum_{t \in \Z_Q} a(t)\chi_r(x)\chi_r(t) \\
	&= \chi_r(x) \sum_{t \in \Z_Q}a(t)\chi_r(t).
\end{align*}
Thus, every $\chi_r$ is an eigenvector for $W_0$ as well as $W_1$.\footnote{A similar calculation also appears in the recent work of \citet{larsen2026sample} which uses spectral analysis over Cayley graphs to establish lower bounds in replicable learning.} 

For the identity matrix $I_Q$, each $\chi_r$ has eigenvalue $1$. For the all-ones matrix $J_Q$,  each $\chi_r$ has eigenvalue equal to $\sum_{x \in \Z_Q}\chi_r(x)$, which is equal to $Q$ if $r=0$, and $0$ if $r \neq 0$. Finally, for the cycle matrix $C_Q$, 
\begin{align*}
	(C_Q\chi_r)(x) = \chi_r(x-1) + \chi_r(x+1)= (\omega^r + \omega^{-r}) \cdot \chi_r(x) = 2\cos(2\pi r/Q) \chi_r(x).
\end{align*}

In total, we obtain that the eigenvalues $c_r$ and $w_r$ for the matrices $W_0$ and $W_1$, corresponding to the eigenvector $\chi_r$, are equal to
\begin{align}
	&c_r = 1 + \frac12\cos(2\pi r/Q), \label{eqn:c_r}\\
	&w_r = \begin{cases}
		Q - 3 & \text{if $r=0$}, \\
		-1 - 2\cos(2\pi r/Q) & \text{if $r \neq 0$}. \label{eqn:w_r}
	\end{cases}
\end{align}
As also alluded to earlier, $\chi_0$ is the all-ones vector, and hence has eigenvalue equal to the row-sums (namely $c_0=3/2$ and $w_0=Q-3$ from \eqref{eqn:c_0} and \eqref{eqn:w_0}) for $W_0$ and $W_1$ respectively.

Note that every $c_r$ is positive; in particular, $\frac12 \le c_r \le \frac32$. The $1/4$ scaling on the entries in $C_Q$ was useful precisely for this.

Now, for a frequency vector $\xi=(\xi_u)_{u \in U} \in \Z_Q^{|U|}$, where every $\xi_u \in \Z_Q$, consider the tensor product vector $\chi_\xi = \bigotimes_{u \in U} \chi_{\xi_u} \in \mathbb{C}^{N}$, where for $x=(x_u)_{u \in U}$, $x_u \in \Z_Q$, the $x^\text{th}$ entry of $\chi_\xi$ is defined as
\begin{align}
    \label{eqn:tensor-eigenvectors}
    \chi_\xi(x) = \prod_{u \in U} \chi_{\xi_u}(x_u) = \exp\left(\frac{2\pi i}{Q}\sum_{u \in U}\xi_u x_u\right).
\end{align}
Since $\{\chi_r\}_{r \in \Z_Q}$ form a basis for $\mathbb{C}^{Q}$, the $Q^{|U|}=N$ many vectors $\{\chi_\xi\}_{\xi \in \Z_Q^{|U|}}$ form a basis for $\mathbb{C}^{N}$. We will now argue that these are eigenvectors for the matrix $M$. Indeed, for any tensor product matrix $A = \bigotimes_{u \in U} A_u$, where for each $A_u$, $\chi_{\xi_u}$ is an eigenvector with eigenvalue $\mu_u$, it holds that
\begin{align*}
    (A\chi_\xi)(x) &= \sum_{y \in \Z_Q^{|U|}} \prod_{u \in U} A_u(x_u, y_u)\chi_{\xi_u}(y_u) \\
    &= \prod_{u \in U} \left( \sum_{y_u \in \Z_Q} A_u(x_u, y_u)\chi_{\xi_u}(y_u)\right) \\
    &= \prod_{u \in U} \mu_u \chi_{\xi_u}(x_u) \\
    &= \left(\prod_{u \in U} \mu_u\right) \chi_\xi(x).
\end{align*}
Thinking of $A$ as $M_{i,S}$, where, recalling \eqref{eqn:def-M-i-S}, each $A_u$ is either $W_0$ or $W_1$, and $\chi_{\xi_u}$ is an eigenvector for $A_u$ with eigenvalue either $c_{\xi_u}$ or $w_{\xi_u}$, we get from the above that any $\chi_\xi$ is an eigenvector for $M_{i,S}$ with eigenvalue
\begin{align*}
    \prod_{u \in S}w_{\xi_u} \prod_{u \notin S} c_{\xi_u}.
\end{align*}
Since $M$ is the sum of $M_{i,S}$ over all $(i,S)$, we get that $\chi_\xi$ is an eigenvector for $M$ with eigenvalue $\lambda_\xi$, where
\begin{align}
    \lambda_\xi &= \sum_{i=1}^n \sum_{S \in \binom{B_i}{k}}\prod_{u \in S}w_{\xi_u} \prod_{u \notin S} c_{\xi_u} 
    = \left(\prod_{u \in U} c_{\xi_u}\right) \sum_{i=1}^n \sum_{S \in \binom{B_i}{k}}\prod_{u \in S} \frac{w_{\xi_u}}{c_{\xi_u}}. \label{eqn:M-eigenvalue-expression}
\end{align}
In the above, dividing by $c_{\xi_u}$ is safe since every $c_{\xi_u} > 0$.

The product term inside the double summation above is an elementary symmetric polynomial of degree $k$. Namely, let
\begin{align*}
    e_k(y_1,\dots,y_t) = \sum_{T \in \binom{[t]}{k}} \prod_{j \in T} y_j
\end{align*}
denote the elementary symmetric polynomial of degree $k$ in the variables $y_1,\dots,y_t$. Note that $e_k(\alpha y_1,\dots,\alpha y_t)=\alpha^k e_k(y_1,\dots,y_t)$. Thus, if we define $q_{r}=\frac{w_r/w_0}{c_r/c_0}$, we have that
\begin{align*}
    \lambda_\xi &= \left(\prod_{u \in U} c_{\xi_u}\right) \sum_{i=1}^n e_k\left(\left(\frac{w_{\xi_u}}{c_{\xi_u}}\right)_{u \in B_i}\right) \\
    &= \left(\prod_{u \in U} c_{\xi_u}\right) \sum_{i=1}^n e_k\left(\left(q_{\xi_u} \cdot \frac{w_0}{c_0}\right)_{u \in B_i}\right) \\
    &= \left(\prod_{u \in U} c_{\xi_u}\right) \left( \frac{w_0}{c_0}\right)^k \sum_{i=1}^n e_k\left(\left(q_{\xi_u}\right)_{u \in B_i}\right).
\end{align*}
Recalling the expression for the common row-sum $d$ of $M$ from \eqref{eqn:d}, we obtain the following expression for the \textit{normalized} eigenvalue $\frac{\lambda_\xi}{d}$:
\begin{align}
    \frac{\lambda_\xi}{d} = \left(\prod_{u \in U} \frac{c_{\xi_u}}{c_0}\right)\frac{1}{n}\sum_{i=1}^n \underbrace{\frac{e_k\left(\left(q_{\xi_u}\right)_{u \in B_i}\right)}{\binom{n}{k}}}_{P_i} = \left(\prod_{u \in U} \frac{c_{\xi_u}}{c_0}\right)\frac{1}{n}\sum_{i=1}^n P_i, \label{eqn:normalized-eigenvalue}
\end{align}
where we defined
\begin{align}
    \label{eqn:def-P-i}
    P_i := \frac{e_k\left(\left(q_{\xi_u}\right)_{u \in B_i}\right)}{\binom{n}{k}}.
\end{align}
Next, let $\ell_i$ be the number of non-zero $\xi_u$ values within $B_i$, and $L$ be the number of non-zero $\xi_u$ values overall. Namely,
\begin{align}
    \label{eqn:def-ell-L}
    \ell_i := \left|\left\{u \in B_i: \xi_u \neq 0\right\}\right|, \quad L := \left|\left\{u \in U: \xi_u \neq 0\right\}\right| = \sum_{i=1}^n \ell_i.
\end{align}
Since $q_0=1$, $P_i$ may be written as
\begin{align}
    \label{eqn:P-i-in-terms-of-ell-i}
    P_i = \frac{e_k\left(1^{n-\ell_i}, (q_{\xi_u})_{u \in B_i: \xi_u \neq 0} \right)}{\binom{n}{k}}.
\end{align}
In the above $1^{n-\ell_i}=\underbrace{(1,\dots,1)}_{\text{$n-\ell_i$ times}}$. We next bound the values of $q_{\xi_u}$ where $\xi_u \neq 0$. For any $r \in \Z_Q, r \neq 0$, and $x=\cos(2\pi r/Q) \in [-1,1]$, we have that
\begin{align*}
    |q_r| = \left|\frac{w_r/w_0}{c_r/c_0}\right| &= \frac{(3/2)|1+2x|}{(Q-3)(1+x/2)}. \tag{Using \eqref{eqn:c_r},\eqref{eqn:w_r}}
\end{align*}
The denominator above is always positive. Now, if $x \ge -1/2$, we have that 
\begin{align*}
\frac{(3/2)|1+2x|}{1+x/2}=\frac{(3/2)(1+2x)}{1+x/2}=3\left(2-\frac{3}{2+x}\right)\le 3. \tag{since $x \le 1$}
\end{align*}
On the other hand, if $x < -1/2$, we have that
\begin{align*}
    \frac{(3/2)|1+2x|}{1+x/2}=\frac{-(3/2)(1+2x)}{1+x/2} = 3\left(\frac{3}{2+x}-2\right)\le 3. \tag{since $x \ge -1$}
\end{align*}
Thus, it always holds that $\frac{(3/2)|1+2x|}{1+x/2} \le 3$, which gives us that
\begin{align}
    |q_r| \le \theta := \frac{3}{Q-3} = \frac{3}{5}. %
\end{align}
Namely, the $q_{\xi_u}$ values where $\xi_u \neq 0$ are at most $\theta=3/5$ in absolute value. It thus remains to reason about polynomials of the form $e_k(1^{n-\ell_i}, t_1,\dots,t_{\ell_i})$, where every $|t_i| \le \theta$. This is the focus of the next subsection.

\subsection{Bounding the Polynomial}
\label{sec:polynomial-analysis}

Before proceeding to the exact technical analysis, we provide some intuition on what kinds of polynomial bounds we'd like to derive. Recall first the expression for the normalized eigenvalue $\lambda_\xi/d$ from \eqref{eqn:normalized-eigenvalue}:
\begin{align*}
    \frac{\lambda_\xi}{d} = \left(\prod_{u \in U} \frac{c_{\xi_u}}{c_0}\right)\frac{1}{n}\sum_{i=1}^n P_i.
\end{align*}
Assume for the moment that $L$ --- the number of non-zero $\xi_u$ values --- is positive. Then, observe that
\begin{align*}
    \left(\prod_{u \in U} \frac{c_{\xi_u}}{c_0}\right) = \left(\prod_{u\,:\,\xi_u \neq 0} \frac{c_{\xi_u}}{c_0}\right) \le \left(\max_{r \in \Z_Q \,:\, r \neq 0} \frac{c_r}{c_0}\right)^L \le \rho^L
\end{align*}
for some $\rho < 1$, since for $r \neq 0$, we have that $c_r/c_0 < 1$ (see \eqref{eqn:c_r}). Now, if $\lambda_\xi < 0$, it must be the case that $\frac{1}{n}\sum_{i=1}^n P_i < 0$. Since the smallest that this fraction can be is $-1$, we get that
\begin{align*}
     \frac{\lambda_\xi}{d} \ge -\prod_{u \in U} \frac{c_{\xi_u}}{c_0} \ge -\rho^L.
\end{align*}
So, if we can show that $L=\sum_i \ell_i \ge \Omega(n^2)$ whenever $\sum_{i=1}^n P_i < 0$, we will have obtained our desired lower bound on $\lambda_\xi$ (if $\lambda_\xi$ is positive, the lower bound holds vacuously).

If the number of blocks $i$ for which $\ell_i > \Omega(n)$ (``bad'' blocks) is at least $n/2$, we immediately have that $L \ge \Omega(n^2)$. Otherwise, the number of blocks for which $\ell_i \le \Omega(n)$ (``good'' blocks) is larger than $n/2$. Here, we note that, since each of the $\ell_i$ entries in $P_i$ that are \textit{not} equal to 1 are at most $\theta$ in absolute value, we can show that $|P_i| \le 2^{-\Theta(\ell_i)}$. In particular, for the bad blocks, since $\ell_i > \Omega(n)$, this means that $|P_i| \le 2^{-\Theta(n)}$. Suppose we could now show that any good block satisfies $P_i \ge 2^{-\Theta(\ell_i)} > 0$. Then, if $\sum_i P_i$ is negative, it \textit{must} be the case that the contribution of bad blocks (in magnitude) is strictly larger than the contribution of the good blocks. Combining this with the upper bound on $|P_i|$ for bad blocks and the lower bound on $P_i$ for good blocks then lets us conclude that $L \ge \Omega(n^2)$.

This sketch leads us to proving the following lemma:

\begin{lemma}[Symmetric Polynomial Bound]
    \label{lemma:elementary-polynomial-bound}
    Fix $n \ge 88$, any $\ell$ satisfying $0 \le \ell \le n$, any $t_1,\dots,t_{\ell} \in [-\theta, \theta]$ where $\theta = \frac{3}{5}$, and any $k$ satisfying $0 \le k \le n/2$. Let $P=\frac{e_k(1^{n-\ell}, t_1,\dots,t_{\ell})}{\binom{n}{k}}$.\footnote{We use the convention that $e_0(y_1,\dots,y_t)=1$.} Then:
    \begin{enumerate}
        \item[(a)] If $k \ge n/3$, then $|P| \le (0.88)^\ell$. %
        \item[(b)] If $\ell \le n/128$, then $P \ge (0.12)^\ell > 0$. %
    \end{enumerate}
\end{lemma}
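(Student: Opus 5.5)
We will use the probabilistic reading of $P$. Write $L=\{1,\dots,\ell\}$ for the coordinates carrying $t_1,\dots,t_\ell$, and let $T$ be a uniformly random $k$-subset of the $n$ coordinates. Since every other coordinate equals $1$, the definition of $e_k$ gives
\begin{align*}
P=\frac{e_k(1^{n-\ell},t_1,\dots,t_\ell)}{\binom{n}{k}}=\mathbb{E}_T\Big[\prod_{j\in T\cap L}t_j\Big].
\end{align*}
Equivalently, grouping by $J=T\cap L$, we have $P=\sum_{J\subseteq L}\pi(J)\prod_{j\in J}t_j$ with $\pi(J)=\binom{n-\ell}{k-|J|}/\binom{n}{k}$. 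Here $X=|T\cap L|$ is hypergeometric, with $\ell$ draws' worth of marked items, and mean $\ell p$ where $p:=k/n$.

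\textbf{Part (a).} Bound the product by its absolute value, so that $|P|\le \mathbb{E}[\theta^{X}]$. The map $x\mapsto\theta^x$ is convex. By Hoeffding's classical comparison (sampling without replacement is dominated in convex order by sampling with replacement), $\mathbb{E}[\theta^X]\le \mathbb{E}[\theta^{\mathrm{Bin}(\ell,p)}]=(1-p+p\theta)^\ell$. With $p\ge 1/3$ and $\theta=3/5$ this is at most $(1-\tfrac{2}{5}\cdot\tfrac13)^\ell=(13/15)^\ell\le(0.88)^\ell$. This part should be routine.

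\textbf{Part (b).} Here the $t_j$ may be negative, so we cannot take absolute values inside. The plan is to compare $\pi$ with the product measure $\pi'(J)=p^{|J|}(1-p)^{\ell-|J|}$, under which the sum factorizes:
\begin{align*}
P'=\sum_J\pi'(J)\prod_{j\in J}t_j=\prod_{j=1}^{\ell}(1-p+pt_j)\ge(1-p\theta)^\ell\ge(0.7)^\ell,
\end{align*}
where the last step uses $p\le 1/2$.

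For the error term we write $\pi(J)=\pi'(J)(1+\delta_J)$. The key intermediate claim is a uniform relative bound $|\delta_J|\le e^{c\ell^2/n}-1$ for all $J\subseteq L$. We will prove it by expanding the ratio of binomials as a product of $\ell$ factors of the forms $(k-i)/(n-i)$ versus $p$ and $(n-k-i')/(n-i)$ versus $1-p$. Each such factor deviates from its target by a relative amount $O(\ell/n)$. When $p$ is tiny (for instance $k=0$ or $k$ much smaller than $\ell$) the $J$ with $|J|>k$ have $\pi(J)=0$, so that case needs care. In that regime we will instead argue directly: $X\le k$, and the dominant term $\pi(\emptyset)$ already beats the rest. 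Alternatively, we can use the exact formula $\pi(J)=\prod$ of ratios and bound $\pi(J)\le \pi'(J)e^{c\ell^2/n}$ only in one direction, together with a matching lower bound on $\pi(\emptyset)$.

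Given the relative bound, the deviation satisfies
\begin{align*}
|P-P'|\le \max_J|\delta_J|\cdot\sum_J\pi'(J)\theta^{|J|}=\max_J|\delta_J|\,(1-p+p\theta)^\ell\le (e^{c\ell/128}-1)(0.8+0.2)^\ell .
\end{align*}
This crude form is too lossy as written. The plan is therefore to keep the $p$-dependence: compare $(1-p+p\theta)^\ell(e^{c\ell p'/128}-1)$ against $(1-p\theta)^\ell$, where the deviation exponent should itself scale like $\ell\cdot p$ when $p$ is small. The target is then $P\ge(0.7)^\ell-(\text{error})\ge(0.12)^\ell$, using the hypothesis $\ell\le n/128$ and $n\ge 88$ to keep the constants in check.

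The main obstacle is this last step. We need the hypergeometric-versus-binomial relative error to be small enough, uniformly over all $k\le n/2$ including very small $k$, that it does not swamp the gap between $(0.7)^\ell$ and $(0.12)^\ell$. The first thing to try is to bound each ratio factor by $\exp(\pm 2\ell/n)$ on both the $p$ side and the $1-p$ side, which gives $e^{\pm 4\ell^2/n}\le e^{\pm\ell/32}$. Then $(0.7\,e^{-\ell/32})^\ell$-type estimates, possibly after splitting $P$ into its $J=\emptyset$ term plus the remainder, should leave comfortable room above $(0.12)^\ell$.
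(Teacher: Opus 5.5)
Your part (a) is correct. It differs from the paper only in using Hoeffding's convex-order comparison, which gives $|P|\le(1-p(1-\theta))^\ell\le(13/15)^\ell$, instead of Maclaurin's inequality applied to $e_k(1^{n-\ell},\theta^\ell)$, which gives $(1-(1-\theta)\ell/n)^k\le e^{-2\ell/15}$; both work.

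Part (b), however, has a genuine gap, and the obstacle you flag at the end is not a matter of constants. First, the main term is miscomputed: $1-p+pt_j\ge 1-p(1+\theta)$, not $1-p\theta$. So at $p=\tfrac12$ you only get $P'\ge 5^{-\ell}$, and this is tight already for $\ell=1$, where $P=P'=\tfrac15$.

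Second, and fatally, any estimate of the form $|P-P'|\le\sum_J|\pi(J)-\pi'(J)|\,\theta^{|J|}$ discards the signs, whereas (b) is entirely a cancellation phenomenon. The absolute envelope $\sum_J\pi'(J)\theta^{|J|}=(1-p+p\theta)^\ell$ equals $0.8^\ell$ at $p=\tfrac12$, exponentially larger than the $0.2^\ell$ you must protect. Meanwhile $\pi(J)/\pi'(J)-1$ is of order $\min\{1,\ell^2/n\}$ on the bulk of that envelope, and equals $-1$ when $|J|>k$. Concretely, take $n=1280$, $k=640$, $\ell=10$, $t_j\equiv-\theta$, which is within the hypotheses. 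There $P'=0.2^{10}\approx10^{-7}$, whereas $\sum_J|\pi(J)-\pi'(J)|\theta^{|J|}\approx5\times10^{-4}$. In fact $P$ itself is only about $P'/2$ there, so $P'$ is not even an accurate proxy.

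Your fallbacks do not rescue this. Splitting off $J=\emptyset$ fails because $\pi(\emptyset)\le(1-p)^\ell=2^{-\ell}$, while the other terms carry absolute mass about $0.8^\ell-0.5^\ell$. Also, the hard regime is $k\approx n/2$ (the one the paper needs), not small $k$.

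What is missing is an argument that handles the signs exactly. The paper does this in two steps.
\begin{itemize}
\item It first reduces to the corner $t_1=\dots=t_\ell=-\theta$. Here $\partial P/\partial t_j$ equals $\frac kn$ times the quantity of part (b) for parameters $(n-1,k-1,\ell-1)$ with $t_j$ deleted. By induction on $k+\ell$ this is positive on the whole box, so $P$ is increasing in each $t_j$.
\item At the corner, $\binom nkP$ is the coefficient of $z^k$ in $F_\ell(z)$, where $F_r(z)=(1+z)^{s}(1-\theta z)^r$ and $s=n-\ell$. The factors $1-\theta z$ are multiplied in one at a time via $C^{(r+1)}_j=C^{(r)}_j-\theta C^{(r)}_{j-1}$. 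The coefficient-ratio bound $C^{(r)}_{j-1}\le\frac54C^{(r)}_j$ for $1\le j\le k$ is proved by induction on $j$ from the identity $(1+z)(1-\theta z)F_r'=(s-r\theta-\theta(s+r)z)F_r$. It shows each factor costs at most $1-\frac54\theta=\frac14$. Combined with $\binom{n-\ell}{k}/\binom nk\ge(63/127)^\ell$, this yields $(0.12)^\ell$.
\end{itemize}
In your probabilistic language, the corresponding step would be to condition on a single marked coordinate: $P(n,k,\ell)=\frac{n-k}{n}P(n-1,k,\ell-1)+\frac kn\,t_\ell\,P(n-1,k-1,\ell-1)$. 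Turning this into a lower bound again requires a comparison between the quantities for consecutive values of $k$. That comparison is precisely the ingredient your plan lacks.
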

Part (a) of the lemma above follows from a straightforward application of Maclaurin's inequality. Part (b) is much more involved, although it only requires elementary analysis, together with some careful definitions. Before delving into the technical details though, we will sketch the high-level argument here.

The starting point is to view the elementary symmetric polynomial $e_k(y_1,\dots,y_n)$ in the context of its generating function. That is, it can be readily checked that $e_k(y_1,\dots,y_n)$ is precisely the coefficient of $z^k$ in
\begin{align*}
    \prod_{i=1}^n (1+y_i z).
\end{align*}
In particular, instantiating this with $(y_1,\dots,y_n)=(1^{n-\ell}, t_1,\dots,t_\ell)$, we get that $e_k(1^{n-\ell}, t_1,\dots,t_\ell)$ is equal to the coefficient of $z^k$ in 
\begin{align*}
    (1+z)^{n - \ell}\prod_{i=1}^\ell (1+t_i z).
\end{align*}
Let us consider the expression above at the boundary $t_1=\dots=t_\ell=-\theta$, which is equal to
\begin{align*}
    (1+z)^{n - \ell}(1-\theta z)^\ell.
\end{align*}
As it will turn out later, $e_k(1^{n-\ell}, t_1,\dots,t_\ell)$ is increasing in every $t_i$, and hence, we may focus on lower-bounding it at this boundary. Namely, we wish to lower-bound $e_k(1^{n-\ell}, (-\theta)^\ell)$, which is equal to the coefficient of $z^k$ in  $(1+z)^{n - \ell}(1-\theta z)^\ell$, by an exponentially small quantity in $\ell$ (up to the multiplicative $\binom{n}{k}$ factor). This motivates defining the polynomial
\begin{align*}
    F_r(z)=(1+z)^{n-\ell}(1-\theta z)^r = \sum_{j} C_j^{(r)}z^j,
\end{align*}
and aiming to show that $C_k^{(r+1)} \ge c \cdot C_k^{(r)}$ for some $c < 1$, since by recursively invoking this inequality $\ell$ times, we will achieve our desired lower bound on $e_k(1^{n-\ell}, (-\theta)^\ell)=C_k^{(\ell)}$. However, the polynomials $F_r$ themselves satisfy the relation
\begin{align*}
    F_{r+1}(z) = (1-\theta z)F_r(z), 
\end{align*}
which, upon comparing coefficients, gives
\begin{align*}
    C^{(r+1)}_j = C^{(r)}_j - \theta C^{(r)}_{j-1}.
\end{align*}
So, in order to show that $C_k^{(r+1)} \ge c \cdot C_k^{(r)}$, it suffices to instead show that $C^{(r)}_j \ge c' \cdot C^{(r)}_{j-1}$ for some $c' < 1$, for all $j \le k$. These steps are precisely what the following analysis concretizes.

\begin{proof}
    We first establish part (a). Suppose $k \ge n/3$: we have that
    \begin{align*}
        |P| &\le \frac{e_k(1^{n-\ell}, |t_1|,\dots,|t_{\ell}|)}{\binom{n}{k}} \tag{triangle inequality} \\
        &\le \frac{e_k(1^{n-\ell},\theta^\ell)}{\binom{n}{k}} \tag{where $\theta^\ell = \underbrace{(\theta,\dots,\theta)}_{\text{$\ell$ times}}$} \\
        &\le \left(1-(1-\theta)\frac{\ell}{n}\right)^k \tag{Maclaurin's inequality} \\
        &\le \exp\left(-(1-\theta)\frac{k\ell}{n}\right) \tag{$1+x \le e^x$ for all real $x$} \\
        &\le \exp\left(\frac{(\theta-1)\ell}{3}\right) \tag{since $k \ge n/3$} \\
        &\le (0.88)^\ell. \tag{substituting $\theta=3/5$}
    \end{align*}
    We now move on to part (b), which we will establish in stages.  

    \paragraph{Bounding a certain coefficient ratio.} Let $\ell \le n/128$, and define $s := n - \ell$. Note that $127n/128 \le s \le n$. Now, for any $0 \le r \le \ell$, define the univariate polynomial
    \begin{align}
        \label{eqn:F-r}
        F_r(z) = (1+z)^s(1-\theta z)^r = \sum_{j \ge 0}C^{(r)}_j z^j,
    \end{align}
    where $C^{(r)}_j=\sum_{p=0}^j \binom{s}{p} \binom{r}{j-p}(-\theta)^{j-p}$ is the coefficient of $z^j$ in $F_r(z)$.\footnote{We use the convention that $\binom{0}{0}=1$ and $\binom{a}{b}=0$ for $a < b$.} We note that $F_r(z)$ is a polynomial in $z$ of degree $s+r$. Then, we have the following crucial claim which lower-bounds the ratio $C^{(r)}_j/C^{(r)}_{j-1}$:
    \begin{claim}[Coefficient Ratio]
        \label{claim:coefficient-ratio}
        For any $j$ satisfying $1 \le j \le k$,
        \begin{align*}
            C^{(r)}_j \ge \frac45 C^{(r)}_{j-1} > 0.
        \end{align*}
    \end{claim}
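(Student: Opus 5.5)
The plan is to prove the claim by induction on $r$, from $r=0$ up to $r=\ell$, with $j$ ranging over $1 \le j \le k$ at every stage. I will carry a stronger, $j$-dependent lower bound on the ratio $R^{(r)}_j := C^{(r)}_j / C^{(r)}_{j-1}$ than the flat $4/5$. Positivity then comes for free: $C^{(r)}_0 = 1 > 0$ for every $r$, so a positive lower bound on each ratio $R^{(r)}_j$ with $j \le k$ propagates positivity along the chain $C^{(r)}_0, C^{(r)}_1, \dots, C^{(r)}_k$.

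\textbf{Base case ($r=0$).} Here $C^{(0)}_j = \binom{s}{j}$, so $R^{(0)}_j = \rho_j := (s-j+1)/j$. This is decreasing in $j$. Using $s \ge 127n/128$ and $j \le k \le n/2$, it is at least $(s-k+1)/k \ge 63/64$, which leaves a margin of about $1/5$ above the target $4/5$.

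\textbf{Inductive step ($r \to r+1$).} Comparing coefficients in $F_{r+1}(z) = (1-\theta z)F_r(z)$ gives $C^{(r+1)}_j = C^{(r)}_j - \theta C^{(r)}_{j-1}$. Therefore
\begin{align*}
R^{(r+1)}_j = \frac{C^{(r)}_j - \theta C^{(r)}_{j-1}}{C^{(r)}_{j-1} - \theta C^{(r)}_{j-2}} = \frac{R^{(r)}_{j-1}\left(R^{(r)}_j - \theta\right)}{R^{(r)}_{j-1} - \theta}
\end{align*}
for $j \ge 2$. For $j=1$ the ratio is simply $R^{(r)}_1 - \theta$, and $R^{(r)}_1 = s - r\theta$ is huge.
\begin{itemize}
\item Since inductively every $R^{(r)}_{j-1} \ge 4/5 > \theta$, the denominators are positive.
\item If $R^{(r)}_j = R^{(r)}_{j-1} - \delta$ with $\delta \ge 0$, this simplifies to $R^{(r+1)}_j = R^{(r)}_j - \theta\delta/(R^{(r)}_{j-1} - \theta)$.
\item So each multiplication by $(1-\theta z)$ lowers the ratio by roughly $\theta/(1-\theta)$ times the local gap $\delta$ between consecutive ratios.
\end{itemize}

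\textbf{Inductive hypothesis.} I will therefore carry a two-sided hypothesis of the form
\begin{align*}
\rho_j - \frac{c\, r}{n} \;\le\; R^{(r)}_j \;\le\; \rho_j,
\end{align*}
together with a control on the gaps $R^{(r)}_{j-1} - R^{(r)}_j$, for a suitable constant $c$. The gap control should follow from log-concavity, which holds because $F_r$ is real-rooted (roots $-1$ and $1/\theta$), so Newton's inequalities apply.
\begin{itemize}
\item For $j$ near $n/2$, the base gaps are $\rho_{j-1} - \rho_j = (s+1)/(j(j-1)) = O(1/n)$, so each step costs $O(1/n)$.
\item Over $r \le \ell \le n/128$ steps, the total loss is $O(1/128)$ times a modest constant.
\item This keeps $R^{(\ell)}_j \ge 63/64 - O(1/128)\cdot c' \ge 4/5$, provided the constants work out.
\end{itemize}

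\textbf{Main obstacle.} The hard part is making this degradation estimate uniform in $j$. For small $j$ the gaps $\rho_{j-1} - \rho_j$ are large, so the additive loss bound fails there. However, in that regime the ratios themselves are $\Omega(n/j) \gg 1$. I would handle this by splitting into two cases:
\begin{itemize}
\item For $j \le n/4$, use a crude multiplicative bound such as $R^{(r)}_j \ge \rho_j\,(1-\theta)^{O(1)}$, or even $R^{(r)}_j \ge \rho_j - r\theta$, which is still $\ge 1$ since $\rho_j \ge 3$ and $r\theta \le 3n/640$. That second option needs checking.
\item For $n/4 < j \le k$, use the fine $O(1/n)$-per-step analysis, where all gaps are $O(1/n)$ uniformly.
\end{itemize}
Getting explicit constants compatible with $\ell \le n/128$ and $n \ge 88$ is the step I expect to require the most care.
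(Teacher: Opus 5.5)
\textbf{There is a genuine gap, and it sits where you flag it.}

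Your recursion $R^{(r+1)}_j = R^{(r)}_j - \theta\delta/(R^{(r)}_{j-1}-\theta)$ is correct. The trouble is the gap control. Log-concavity and Newton's inequalities for the real-rooted $F_r$ only bound the gaps $\delta = R^{(r)}_{j-1}-R^{(r)}_j$ from \emph{below}. They give $\delta\ge 0$, which yields your upper bound $R^{(r)}_j\le\rho_j$. The lower bound on $R^{(r)}_j$, however, needs an \emph{upper} bound $\delta=O(1/n)$ at every level $r\le\ell$, and nothing in the proposal supplies it.

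Your two-sided hypothesis does not supply it either. It only gives $\delta\le(\rho_{j-1}-\rho_j)+cr/n$, so the loss per step can be of order $r/n$ rather than $c/n$, and the hypothesis does not reproduce itself. Propagating a gap bound directly in $r$ gives
\[
\delta^{(r+1)}_j=\delta^{(r)}_j\,\frac{R^{(r)}_{j-1}}{R^{(r)}_{j-1}-\theta}-\frac{\theta\,\delta^{(r)}_{j-1}}{R^{(r)}_{j-2}-\theta}.
\]
Without second-difference information, the only available bound lets the gap grow by the factor $R^{(r)}_{j-1}/(R^{(r)}_{j-1}-\theta)$ per step. That is about $5/2$ when the ratios are near $1$, which is useless over $n/128$ steps. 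Controlling second differences just pushes the same problem one level up.

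The small-$j$ fallback also fails as stated. The bound $R^{(r+1)}_j\ge R^{(r)}_j-\theta$ does hold while $R^{(r)}_j\ge\theta$. But near $j=n/4$ you have $\rho_j\approx 3$, while $r\theta$ can be as large as $3n/640$, so $\rho_j-r\theta$ is negative for large $n$. The additive bound therefore only covers $j=O(1)$, leaving a whole middle range of $j$ with no argument.

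\textbf{The missing idea: fix $r$ and induct on $j$, which is what the paper does.} Since $F_r'/F_r$ has denominator $(1+z)(1-\theta z)$, one has $(1+z)(1-\theta z)F_r'(z)=(s-r\theta-\theta(s+r)z)F_r(z)$. Comparing coefficients of $z^j$ gives a three-term recurrence at fixed $r$:
\[
(j+1)C^{(r)}_{j+1}=\bigl(s-r\theta-(1-\theta)j\bigr)C^{(r)}_j-\theta(s+r-j+1)C^{(r)}_{j-1}.
\]
Because $s+r-j+1>0$, the hypothesis $0<C^{(r)}_{j-1}\le\frac54C^{(r)}_j$ plugs in directly. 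The target $C^{(r)}_{j+1}\ge\frac45C^{(r)}_j$ then reduces to the linear inequality $\frac14s-\frac{27}{20}r-\frac{9}{20}j-\frac{31}{20}\ge0$. For $s\ge127n/128$, $r\le n/128$, $j\le n/2-1$, the left side is at least $n/80-11/10$, which is nonnegative once $n\ge88$. The base case is $C^{(r)}_1=s-r\theta\ge79n/80$.

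This handles all $j\le k$ uniformly and never tracks how the ratio profile deforms as $r$ grows. Your identity $C^{(r+1)}_j=C^{(r)}_j-\theta C^{(r)}_{j-1}$ is used in the paper only afterwards, to go from the claim to $C^{(r+1)}_k\ge\frac14C^{(r)}_k$.
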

    \begin{proof}
        We will induct on $j$, keeping $r$ fixed. As the base case, consider $j=1$. We have that $C^{(r)}_0=1$, and
        \begin{align*}
            C^{(r)}_1 &= s-r\theta = n-\ell-r\theta \\
            &\ge  n-\ell-\ell\theta = n-\frac{8\ell}{5} \ge n-\frac{8n}{640} \tag{since $\ell \le n/128$}\\
            &= \frac{79n}{80} \ge \frac{4}{5} = \frac{4}{5}C^{(r)}_0.
        \end{align*}
        Thus, the base case holds.%

        We now proceed to the inductive step. Suppose that the claim is true for $j$, where $1 \le j \le k-1$; we will then prove the claim for $j+1$, meaning
        \begin{align*}
            C^{(r)}_{j+1} \ge \frac45 C^{(r)}_{j}.
        \end{align*}
        Recall the definition \eqref{eqn:F-r} of the polynomial $F_r(z)$, defined for all real $z$:
        \begin{align*}
            F_r(z) = (1+z)^s(1-\theta z)^r = \sum_{j \ge 0}C^{(r)}_j z^j.
        \end{align*}
        Differentiating, we have that
        \begin{align*}
            F'_r(z) = -r\theta(1+z)^s(1-\theta z)^{r-1} + s(1+z)^{s-1}(1-\theta z)^r.
        \end{align*}
        Multiplying on both sides by $(1+z)(1-\theta z)$ and collecting terms gives
        \begin{align}
            \label{eqn:F-r-F'-r-equality}
            (1+z)(1-\theta z)F'_r(z) = (s-r\theta-\theta(s+r)z)F_r(z).
        \end{align}
        Since \eqref{eqn:F-r-F'-r-equality} is a polynomial identity which holds for all $z$, the coefficients of every $z^j$ must match on both sides. The coefficient of $z^j$ on the left side is
        \begin{align*}
            (j+1)C^{(r)}_{j+1} + (1-\theta)jC^{(r)}_j -\theta (j-1)C^{(r)}_{j-1}.
        \end{align*}
        The coefficient of $z^{j}$ on the right side is
        \begin{align*}
            (s-r\theta)C^{(r)}_j - \theta(s+r)C^{(r)}_{j-1}.
        \end{align*}
        Equating, we get
        \begin{align*}
            (j+1)C^{(r)}_{j+1} = (s-r\theta-(1-\theta)j)C^{(r)}_j -\theta(s+r-j+1)C^{(r)}_{j-1}.
        \end{align*}
        Note that $s+r-j+1 \ge \frac{127n}{128}-\frac{n}{2}+1 > 0$; furthermore, by the inductive hypothesis, $0 < C^{(r)}_{j-1} \le \frac54C^{(r)}_j$, giving
        \begin{align*}
            (j+1)C^{(r)}_{j+1} &\ge \left(s-r\theta-(1-\theta)j - \frac54\theta(s+r-j+1)\right)C^{(r)}_{j}.
        \end{align*}
        It then remains to show that $s-r\theta-(1-\theta)j - \frac54\theta(s+r-j+1) \ge \frac{4(j+1)}{5}$. Towards this, we have that
        \begingroup
        \allowdisplaybreaks
        \begin{align*}
            &s-r\theta-(1-\theta)j - \frac54\theta(s+r-j+1) - \frac{4(j+1)}{5} \\
            =\,&\left(1-\frac54\theta\right)s-\frac94r\theta -j\left(\frac95-\frac94\theta\right) -\left(\frac54\theta+\frac45\right) \\
            \ge\,&\left(1-\frac54\theta\right)\frac{127n}{128}-\frac{9n\theta}{512}-j\left(\frac95-\frac94\theta\right) -\left(\frac54\theta+\frac45\right) \tag{since $s \ge \frac{127n}{128}$, $r \le \ell \le \frac{n}{128}$} \\
            \ge\,&\left(1-\frac54\theta\right)\frac{127n}{128}-\frac{9n\theta}{512}-\left(\frac95-\frac94\theta\right)\left(\frac{n}{2}-1 \right) -\left(\frac54\theta+\frac45\right) \tag{since $j \le k-1 \le n/2-1$} \\
            =\,& \frac{n}{80}-\frac{11}{10} \ge 0 \tag{substituting $\theta=3/5$},
        \end{align*}
        \endgroup
        where the last inequality holds true when $n \ge 88$. Thus, the inductive step is complete, proving the claim.
    \end{proof}
    \paragraph{The case where $t_1=\dots=t_{\ell}=-\theta$.} We now establish the bound in part (b) in the special case where $t_1=\dots=t_{\ell}=-\theta$; later, we will show that this suffices for the general case.

    Note that in this case,
    \begin{align}
        \label{eqn:P-in-special-case}
        P = \frac{e_k(1^{s},(-\theta)^\ell)}{\binom{n}{k}} = \frac{\sum_{p=0}^{k} \binom{s}{p}\binom{\ell}{k-p}(-\theta)^{k-p}}{\binom{n}{k}}.
    \end{align}
    Now, for $0 \le r < \ell$, recall:
    \begin{align*}
        &F_r(z) = (1+z)^s(1-\theta z)^r = \sum_{j \ge 0}C^{(r)}_jz^j, \\
        &F_{r+1}(z) = (1+z)^s(1-\theta z)^{r+1} = \sum_{j \ge 0}C^{(r+1)}_jz^j.
    \end{align*}
    Fix any $j$, where $1 \le j \le k$. We have:
    \begin{align}
        C^{(r)}_j = \sum_{p=0}^j \binom{s}{p} \binom{r}{j-p}(-\theta)^{j-p} \label{eqn:C-r-j},
    \end{align}
    and hence,
    \begin{align*}
        C^{(r)}_j - \theta C^{(r)}_{j-1} &= \sum_{p=0}^{j-1} \binom{s}{p}(-\theta)^{j-p} \left[\binom{r}{j-p}+ \binom{r}{j-1-p}\right] + \binom{s}{j} \binom{r}{0} \\
        &= \sum_{p=0}^{j-1} \binom{s}{p} \binom{r+1}{j-p}(-\theta)^{j-p} + \binom{s}{j} \binom{r+1}{0} 
        = \sum_{p=0}^{j} \binom{s}{p} \binom{r+1}{j-p}(-\theta)^{j-p} = C^{(r+1)}_j.
    \end{align*}
    But, from \Cref{claim:coefficient-ratio}, it also holds that $0 < C^{(r)}_{j-1} \le \frac54C^{(r)}_j$; thus, we get
    \begin{align*}
        C^{(r+1)}_j \ge \left(1-\frac54\theta\right)C^{(r)}_j = \frac14 C^{(r)}_j.
    \end{align*}
    The bound above holds also for $j=0$. Iterating, we thus get that for every $j$ satisfying $0 \le j \le k$,
    \begin{align}
        \label{eqn:C-iterated-bound}
        C^{(\ell)}_j \ge \left(\frac14\right)^\ell C^{(0)}_j = \left(\frac14\right)^\ell \binom{s}{j}.
    \end{align}
    Finally, recalling the expression \eqref{eqn:P-in-special-case} for $P$, we get
    \begin{align*}
        P &=\frac{e_k(1^{s},(-\theta)^\ell)}{\binom{n}{k}} = \frac{\sum_{p=0}^{k} \binom{s}{p}\binom{\ell}{k-p}(-\theta)^{k-p}}{\binom{n}{k}} \\
        &= \frac{C^{(\ell)}_k}{\binom{n}{k}} \ge \frac{\left(\frac14\right)^\ell \binom{s}{k}}{\binom{n}{k}} \tag{using \eqref{eqn:C-r-j} and \eqref{eqn:C-iterated-bound}} \\
        &=  \left(\frac14\right)^\ell \frac{\binom{n-\ell}{k}}{\binom{n}{k}} = \left(\frac14\right)^\ell \prod_{p=0}^{\ell-1} \frac{n-k-p}{n-p} \\
        &\ge \left(\frac14\right)^\ell \prod_{p=0}^{\ell-1} \frac{\frac{n}{2}-p}{n-p} \tag{since $k \le n/2$} \\
        &\ge \left(\frac14\right)^\ell \left(\frac{63}{127}\right)^\ell \tag{$p \le \ell \le n/128 \implies \frac{\frac{n}{2}-p}{n-p} \ge \frac{63}{127}$} \\
        &\ge (0.12)^\ell.
    \end{align*}

    \paragraph{The general case where $t_1,\dots,t_{\ell}\in [-\theta, \theta]$.} We will now conclude the proof of part (b) by arguing that for any $k \ge 0$ and $\ell \ge 0$, it holds simultaneously for all $n \ge 88$ satisfying $k \le n/2$ and $\ell \le n/128$ that
    \begin{align*}
        P=\frac{e_k(1^{n-\ell}, t_1,\dots,t_{\ell})}{\binom{n}{k}} \ge (0.12)^\ell %
    \end{align*}
    whenever $t_1,\dots,t_\ell \in [-\theta,\theta]$. We will establish this by induction on $k+\ell$.
    For the base cases, consider: (1) $k=0$, whereby $P=1$ (by convention), and (2) $\ell=0$, whereby also $P=\frac{\binom{n}{k}}{\binom{n}{k}}=1$. Thus, the base cases hold. 

    So, let $k,\ell \ge 1$, and assume as our inductive hypothesis the following: for any $\ell', k' \ge 0$, where $k'+\ell' < k+\ell$, it holds simultaneously for all $n \ge 88$ satisfying $\ell' \le n/128$ and $k' \le n/2$ that
    \begin{align*}
        P' = \frac{e_{k'}(1^{n-\ell'}, t_1,\dots,t_{\ell'})}{\binom{n}{k'}} \ge (0.12)^{\ell'},
    \end{align*}
    whenever $t_1,\dots,t_{\ell'} \in [-\theta, \theta]$. We then wish to show that simultaneously for all $n \ge 88$ satisfying $\ell \le n/128$ and $k \le n/2$, it holds that 
    \begin{align*}
        P=\frac{e_k(1^{n-\ell}, t_1,\dots,t_{\ell})}{\binom{n}{k}} \ge (0.12)^\ell %
    \end{align*}
    whenever $t_1,\dots,t_\ell \in [-\theta,\theta]$.
    
    So, fix any $n \ge 88$ satisfying $\ell \le n/128$ and $k \le n/2$, %
    and consider $P=\frac{e_{k}(1^{n-\ell}, t_1,\dots,t_{\ell})}{\binom{n}{k}}$ as a polynomial in $t_1,\dots,t_\ell$. The partial derivative with respect to any $t_j$ is
    \begin{align*}
        \frac{\partial P}{\partial t_j} &= \frac{e_{k-1}(1^{n-\ell}, t_1,\dots,t_{j-1},t_{j+1},\dots,t_{\ell})}{\binom{n}{k}} \\
        &= \frac{k}{n}\cdot \frac{e_{k-1}(1^{n-\ell}, t_1,\dots,t_{j-1},t_{j+1},\dots,t_{\ell})}{\binom{n-1}{k-1}} = \frac{k}{n}\cdot \frac{e_{k-1}(1^{(n-1)-(\ell-1)}, t_1,\dots,t_{j-1},t_{j+1},\dots,t_{\ell})}{\binom{n-1}{k-1}}.
    \end{align*}
    Now focus on the term  $\frac{e_{k-1}(1^{(n-1)-(\ell-1)}, t_1,\dots,t_{j-1},t_{j+1},\dots,t_{\ell})}{\binom{n-1}{k-1}}$. We seek to invoke the inductive hypothesis on this term: note that $k'=k-1\ge 0$, $\ell'=\ell-1 \ge 0$. Since $\ell \ge 1$ and $\ell \le n/128$, we have $n \ge 128$, and hence $n-1 \ge 88$. Furthermore, it holds that
    \begin{align*}
        \ell - 1 \le \frac{n}{128}-1 \le \frac{n-1}{128}, \qquad k-1 \le \frac{n}{2}-1 \le \frac{n-1}{2}.
    \end{align*}
    So, by the inductive hypothesis, for any $t_1,\dots,t_{j-1},t_{j+1},\dots,t_\ell \in [-\theta, \theta]$, we have that the partial derivative with respect to $t_j$ is at least $\frac{k}{n}(0.12)^{\ell-1} > 0$. Summarily, we have that the partial derivative with respect to every $t_j$ is positive throughout the box $[-\theta, \theta]^\ell$. Thus, the minimum of $P$ on this box is achieved at $t_1=\dots=t_\ell=-\theta$. However, from our analysis above of this special case, we have that the value of this minimum is at least $(0.12)^\ell$. This completes the inductive step, and the proof of part (b) of the lemma.
\end{proof}

\subsection{Concluding by Bounding the Minimum Eigenvalue}
\label{sec:eigenvalue-bound}

In what follows, we will take $k=\lfloor n/2 \rfloor$ when instantiating \Cref{lemma:elementary-polynomial-bound}. The lemma implies that every $P_i$ from \eqref{eqn:P-i-in-terms-of-ell-i} satisfies that
\begin{align}
    \label{eqn:P-i-bounds}
    |P_i| \le (0.88)^{\ell_i}, \qquad \ell_i \le n/128 \implies P_i \ge (0.12)^{\ell_i} > 0.
\end{align}
Let us call a block $i \in [n]$ \textit{bad} if $\ell_i > n/128$, and \textit{good} otherwise. Let $R$ be the number of bad blocks. Recall the definition of $L$ from \eqref{eqn:def-ell-L}, which is the number of non-zero $\xi_u$ values. We will show that there is an absolute constant $\gamma > 0$ such that
\begin{align}
    \label{eqn:if-negative-L-big}
    \sum_{i=1}^n P_i < 0 \implies L \ge n^2/5000.
\end{align}
First, suppose that $R \ge n/2$. In this case, by the definition of a bad block, we immediately have
\begin{align}
    \label{eqn:L-lb-case-1}
    L = \sum_{i=1}^n \ell_i \ge \sum_{i\text{ bad}} \ell_i > Rn/128 \ge n^2/256.
\end{align}
Otherwise, consider $R < n/2$. Let $G=n-R > n/2$ be the number of good blocks, and define
\begin{align*}
    L_G = \sum_{i \text{ good}} \ell_i.
\end{align*}
Equation \eqref{eqn:P-i-bounds} gives a lower bound on $P_i$ whenever $i$ is good. Thus,
\begin{align}
    \label{eqn:good-block-lb}
    \sum_{i \text{ good}} P_i \ge \sum_{i \text{ good}} (0.12)^{\ell_i} \ge G\left(0.12\right)^{\frac{L_G}{G}},
\end{align}
where the last inequality is Jensen's inequality applied to the convex function $x \mapsto \left(0.12\right)^x$. On the other hand, for the bad blocks, we can use the upper bound from \Cref{lemma:elementary-polynomial-bound} part (a), as recorded in \eqref{eqn:P-i-bounds}, and get
\begin{align}
    \label{eqn:bad-block-ub}
    \sum_{i \text{ bad}} |P_i| \le \sum_{i \text{ bad}} (0.88)^{\ell_i} < R(0.88)^{n/128} < \frac{n}{2}(0.88)^{n/128},
\end{align}
where we used that $R < n/2$, and that $\ell_i > n/128$ for bad blocks.

Now, if $\sum_{i=1}^n P_i < 0$, then it must necessarily be the case that the contribution of bad blocks (in magnitude) is strictly larger than the contribution of the good blocks. That is, combining \eqref{eqn:good-block-lb} and \eqref{eqn:bad-block-ub}, it must hold that
\begin{align*}
    G\left(0.12\right)^{\frac{L_G}{G}} < \frac{n}{2}(0.88)^{n/128} \implies \left(0.12\right)^{\frac{L_G}{G}} < (0.88)^{n/128}. \tag{since $G > n/2$}
\end{align*}
Taking logs, we get
\begin{align*}
    \frac{L_G}{G} > \frac{n\log(1/0.88)}{128\log(1/0.12)} \ge  \frac{n}{2500} 
\end{align*}
Since $G > n/2$, this implies
\begin{align}
    \label{eqn:L-lb-case-2}
    L \ge L_G > n^2/5000.
\end{align}
Combining \eqref{eqn:L-lb-case-1} and \eqref{eqn:L-lb-case-2}, we get that
\begin{align*}
    \sum_{i=1}^n P_i < 0 \implies L \ge n^2/5000
\end{align*}
as desired.

We now finally translate the bound above to a bound on the minimum eigenvalue of the matrix $M$. Recall from Equation \eqref{eqn:normalized-eigenvalue} that every eigenvalue $\lambda_\xi$ of $M$ satisfies
\begin{align*}
     \frac{\lambda_\xi}{d} = \left(\prod_{u \in U} \frac{c_{\xi_u}}{c_0}\right)\frac{1}{n}\sum_{i=1}^n P_i.
\end{align*}
Note that $\prod_{u \in U} \frac{c_{\xi_u}}{c_0} > 0$. Now, suppose that $\lambda_\xi < 0$; then it must be the case that $\sum_{i=1}^{n}P_i < 0$. Equation \eqref{eqn:if-negative-L-big} then ensures that $L \ge n^2/5000$. In this case,
\begin{align*}
    \prod_{u \in U} \frac{c_{\xi_u}}{c_0} =  \left(\prod_{u\,:\,\xi_u = 0} \frac{c_{\xi_u}}{c_0}\right)\left(\prod_{u\,:\,\xi_u \neq 0} \frac{c_{\xi_u}}{c_0}\right) = \left(\prod_{u\,:\,\xi_u \neq 0} \frac{c_{\xi_u}}{c_0}\right) &\le \left(\max_{r \in \Z_Q \,:\, r \neq 0} \frac{c_r}{c_0}\right)^L \\
    &= \left(\frac{1+\frac{1}{2}\cos(2\pi/Q)}{3/2}\right)^L,
\end{align*}
where we used the expression for $c_r$ from \eqref{eqn:c_r}, and the fact that among $r \in \Z_Q, r \neq 0$, $\frac{1}{2}\cos(2\pi r/Q)$ is maximized at $r=1$ and $r=Q-1$. Defining $\rho:= \frac{1+\frac{1}{2}\cos(2\pi/Q)}{3/2} < 1$, we thus have that
\begin{align*}
    0 < \prod_{u \in U} \frac{c_{\xi_u}}{c_0} \le \rho^L.
\end{align*}
Recall also from \eqref{eqn:P-i-bounds} that every $|P_i| \le (0.88)^{\ell_i} \le 1$, and hence
\begin{align*}
    -1 \le \frac{1}{n}\sum_{i=1}^{n}P_i \le 1.
\end{align*}
We thus obtain that
\begin{align*}
    &\frac{\lambda_\xi}{d} = \left(\prod_{u \in U} \frac{c_{\xi_u}}{c_0}\right)\frac{1}{n}\sum_{i=1}^n P_i \ge - \prod_{u \in U} \frac{c_{\xi_u}}{c_0} \ge -\rho^L \ge -\rho^{n^2/5000} \\
    \implies \quad & \lambda_{\xi} \ge -d\rho^{n^2/5000}.
\end{align*}
If $\lambda_\xi \ge 0$, the bound above holds automatically. Thus, we have shown that every eigenvalue $\lambda_\xi$ corresponding to the eigenvectors $\chi_\xi$ is at least $-d\rho^{n^2/5000}$; since these eigenvectors form a basis, we have shown that the minimum eigenvalue $\lambda_{\min}$ of $M$ satisfies
\begin{align*}
    \lambda_{\min} \ge -d\rho^{n^2/5000}.
\end{align*}
But this is precisely what we wanted to show at the end of \Cref{sec:eig-bound-to-communication} to establish \Cref{thm:constant-gadget-lifting}: a lower-bound of $-d\rho^{\gamma n^2}$ on the smallest eigenvalue of $M$, where $0 < \rho < 1$ and $\gamma > 0$. Since the above analysis establishes this for a value of $\rho$ satisfying $0 < \rho < 1$ and $\gamma = 1/5000$, we conclude the proof of \Cref{thm:constant-gadget-lifting}.

\begin{remark}[General Constant-gadget Lifting]
    \label{remark:extend-lifting-to-all-functions}
    It is natural to wonder if the constant-gadget lifting theorem above could apply to all Boolean functions instead of just the unambiguous DNFs from \Cref{thm:unambigous-dnf}, similar to the all-purpose lifting theorem of \citet{goos2016rectangles}. The goal here would be to show that the constant-sized cyclic gadget lifts any function $f$ having $0$-certificate complexity $\C_0(f)$ to a communication problem $h$ satisfying $\log\Cov_0(h) \ge \Omega(\C_0(f))$. In our limited attempts, we were unable to extend the analysis above to derive such a general lifting theorem. In particular, the analysis above appears to crucially use the form of the DNF from \Cref{thm:unambigous-dnf} in defining the matrix $M$, and in expressing its eigenvalues in the form of elementary symmetric polynomials.
\end{remark}

\section{Quartic Separation between Certificate Complexity and Approximate Degree}
\label{sec:approx-degree-quartic-separation}
In this section, we restate and give a complete proof of \Cref{thm:approximate-degree}.
\ApproximateDegree*
\begin{proof}
   At a high level, we construct the desired function by totalizing an appropriate partial function, that can be certified with only a few, low-degree checks, through the framework of cheat sheets \citep{aaronson2016separations}. 
   \paragraph{A partial function $H$ that is hard to certify.} Let $f$ be the unambiguous DNF given by \Cref{thm:unambigous-dnf}. Recall that we constructed a universe $U=B_1 \sqcup \dots B_n$, where each $|B_i|=n$ (so $|U|=n^2$), and a signed set-pair system $\mcT=\{(P_T, N_T)\}_T$, where $T=(i,S)$ ranges over all $i \in [n]$, and $S \subseteq B_i$, $|S|=k=\lfloor n/2 \rfloor$. We had that $\mcT$ is pairwise incompatible, and $|P_T|+|N_T|=O(n)$ for every $T$. The definition of $f$ was
    \begin{align*}
        f(x) = \bigvee_{(i,S)} \left(\bigwedge_{u \in P_{i,S}} x_u \wedge \bigwedge_{v \in N_{i,S}} \neg x_{v}\right).
    \end{align*}
    Using a construction similar to the Puzzle I $\implies$ Puzzle II in \cite{balodis2023unambiguous}, we now construct a partial function $H$ from $f$ that is hard to certify. Instead of a single Boolean variable $x_u$ for every $u \in U$, we will have two Boolean variables $x_u$ and $y_u$ for every $u \in U$, and hence $H$ will be a partial function mapping $\{0,1\}^{2n^2}$ to $\{0,1,*\}$. We denote $x=(x_u)_{u \in U}$ and $y=(y_u)_{u \in U}$. The partial function $H$ is defined as
    \begin{align}
        H(x,y) = \begin{cases}
            \bigvee_{u \in P_T} y_u & \text{if $f(x)=1$, meaning that a unique term $T$ is satisfied,} \\
            * & \text{otherwise.}
        \end{cases}
    \end{align}
    We now argue that it is hard to certify $H$ \textit{not} evaluating to $0$ or $1$. Recall that we use $\overline{0}$ to denote the output set $\{1,*\}$ (i.e., ``not 0'') and $\overline{1}$ to denote the output set $\{0,*\}$ (i.e., ``not 1''). We will argue that the all-zeros input $z=(x,y)=(0^{|U|}, 0^{|U|})$ has large $\overline{0}$ and $\overline{1}$ certificate complexity. Namely, we show that
    \begin{align}
        \label{eqn:partial-function-certificate-cxty-lb}
        \C_{\overline{0}}(H, z) \ge \tau(\mcP), \qquad \C_{\overline{1}}(H, z) \ge \tau(\mcP),
    \end{align}
    where $\tau(\mcP)=\Omega(n^2)$ is the hitting number of the family of positive sets $P_T$.
    
    First, note that $H(z)=*$, since $f(0^{|U|})=0$. Now, let $\rho \in \{0,1,*\}^{2|U|}$ be any partial assignment consistent with $z$, and satisfying $|\rho| < \tau(\mcP)$. Let 
    \begin{align}
        \label{eqn:R_x}
        R_x = \{u \in U: \text{the variable $x_u$ is not set to $*$ in $\rho$}\}.
    \end{align}
    Since $|R_x| \le |\rho| < \tau(\mcP)$, the set $R_x$ is disjoint with at least one $P_T$. For such a $T$, we extend $\rho$ at the variables $x_u$ for $u \in P_T$ as $x_u=1$, 
    and then extend all the remaining $*$ values in $\rho$ to be $0$. Observe that this results in a $z'$ which is consistent with $\rho$, but satisfies that $H(z')=0$. Thus, $\rho$ does not certify $H$ to be ``not 0'' at $z$, implying that
    \begin{align*}
        \C_{\overline{0}}(H, z) \ge \tau(\mcP).
    \end{align*}
    For lower-bounding $\C_{\overline{1}}(H, z)$, let $\rho \in \{0,1,*\}^{2|U|}$ again be any partial assignment consistent with $z$, satisfying $|\rho| < \tau(\mcP)$. Let $R_x$ be defined as in \eqref{eqn:R_x}, and further define $R_y$ as
    \begin{align}
        \label{eqn:R_y}
        R_y = \{u \in U: \text{the variable $y_u$ is not set to $*$ in $\rho$}\}.
    \end{align}
    Since $|R_x \cup R_y| \le |R_x|+|R_y| \le |\rho| < \tau(\mcP)$, we have that $R_x \cup R_y$ is disjoint with at least one $P_T$. For such a $T$, we extend $\rho$ at the variables $x_u$ for $u \in P_T$ as $x_u=1$,
    extend $y_{u_0}$ for some fixed $u_0 \in P_T$ as $y_{u_0}=1$, and then extend any other remaining $*$ values in $\rho$ to be $0$. Observe that this results in a $z'$ which is consistent with $\rho$, but satisfies that $H(z')=1$. Thus, $\rho$ does not certify $H$ to be ``not 1'' at $z$, implying that
    \begin{align*}
        \C_{\overline{1}}(H, z) \ge \tau(\mcP).
    \end{align*}
    Summarily, we obtain that 
    \begin{align*}
        \min\left\{\C_{\overline{0}}(H, z),\C_{\overline{1}}(H, z)\right\} \ge \Omega(n^2).
    \end{align*}

    \paragraph{A low-degree verifiable certificate description for $H$.} For the partial function $H$ constructed above, we nevertheless show how a given input $z$ can be verified to evaluate to either $0$ or $1$ by evaluating $O(n)$-many low-degree polynomials over a suitably encoded \textit{certificate description}, and then taking the AND of these evaluations. Our high-level approach here is similar to that used in \cite[Corollary 6]{balodis2023unambiguous}. We note that it is essential that we use only $O(n)$ many low-degree evaluations, for the purposes of obtaining our final desired quartic separation.

    We first concretely specify the format of a certificate description. A certificate description comprises of the following components:
    \begin{enumerate}
        \item[(1)] A block index $i \in [n]$, and $k$ integers $a_1 < \dots < a_k$ (these are intended to be sorted indices in a positive term $P_{i,S}=S$).
        \item[(2)] For every block $j \in [n], j \neq i$ (listed in increasing order of $j$), an index $t_j \in U$, and a value $\ell_j \in [n]$ \\($t_j$ is intended to be an element of $S$, so that $\lambda^{\{i,j\}}_i(t_j)=\ell_j$ certifies that $m_{i,j}(S) \le \ell_j$).
        \item[(3)] A list $\Gamma$ comprising of a total of $N=O(n)$ entries, where $N$ is the maximum term-width of the DNF in \Cref{thm:unambigous-dnf}. Each entry in the list is a pair $(j, r)$, and the entire list is intended to comprise of $(j,1),\dots,(j, \ell_j)$ for every $j \in [n], j \neq i$, listed in increasing order of $j$, followed possibly by padding entries.
        \item[(4)] An integer $t^\star \in [k]$ in the case of a 1-certificate, or a padding entry in the case of a 0-certificate (intended to indicate the variable $y_{a_{t^\star}}$ that is supposed to be 1 to certify $H$ evaluating to 1.)
    \end{enumerate}
    We now specify how, given an input $z \in \{0,1\}^{2n^2}$ to the partial function $H$, and a certificate description $\mcK$, the verifier either rejects or accepts the certificate.

    The verifier first checks whether the block index $i$ in Step (1) above belongs to $[n]$. Thereafter, it checks whether all the $k$ integers $a_1,\dots,a_k$ belong to $B_i$, and that $a_1 < a_2 < \dots a_k$. This can be done by $O(n)$-many decision tree evaluations, that, e.g., query $i$ and some $a_t$ to check if $a_t \in B_i$, or query $a_t$ and $a_{t+1}$ to check if $a_t < a_{t+1}$.  Thereafter, the verifier checks whether $x_{a_t}=1$ for every $t \in [k]$, and that $x_{u}=0$ for every $u \in B_i \setminus \{a_1,\dots,a_k\}$. Namely, these variables correspond to the variables in $P_{i,S}$ and $N_{i,S}$ respectively in the block $B_i$. This check can be done by iterating over all $u \in B_i$, checking whether $u \in \{a_1,\dots,a_k\}$ (e.g., by a binary search), and then checking $x_u$ correspondingly. In total, this entire check is an AND of $O(n)$-many deterministic decision tree evaluations having depth $\polylog(n)$.

    Next, the verifier checks that the triplets $(j,t_j, \ell_j)$ in Step (2) above are listed in increasing order of $j$ for $j \neq i$ (where $i$ is the value read at Step (1)), $t_j \in \{a_1,\dots,a_k\}$, and $\lambda^{\{i,j\}}_i(t_j)=\ell_j$. The check $t_j \in \{a_1,\dots,a_k\}$ can be done efficiently with a binary search. The rest of the checks can be done, for example, by having a $\polylog(n)$ decision tree for every $j$, which queries $i$ and the triplets $(j-1,t_{j-1}, \ell_{j-1}), (j,t_j, \ell_j)$, checks that $j$ has increased correctly, and thereafter, checks whether $\lambda^{\{i,j\}}_i(t_j)=\ell_j$ (note that the function $\lambda^{\{i,j\}}_i$ may be hard-coded in the logic of the decision tree). Therefore, this check can also be performed as the AND of $O(n)$-many decision tree evaluations having depth $\polylog(n)$ each.

    The verifier must next verify whether the list $\Gamma$ in Step (3) is a faithful representation of the triplets in Step (2). Namely, for each triplet $(j, t_j, \ell_j)$, $j \neq i$ considered in order, the corresponding list entries should be $(j,1),\dots,(j,\ell_j)$.  %
    Again, this entire check can be done as a conjunction of local checks performed on pairs of consecutive list entries. For example, if we consider a pair of entries $\Gamma_s, \Gamma_{s+1}$, for which $\Gamma_s= (j,r)$, we want that $\Gamma_{s+1}=(j,r+1)$ if $r < \ell_j$, and $\Gamma_{s+1}=(j+1,1)$, up to the boundary cases where $j=i$ or $j=1$ or $j=n$, which are also appropriately handled; e.g., the verifier must check that the last pair (either $(n, \ell_n)$ or $(n-1, \ell_{n-1})$ if $i=n$) 
    is followed by a sequence of padding entries; if this is not the case, meaning that the $N$ list slots are too few, the verifier rejects. %
    At the same time, the verifier also checks that $(j,r)$ satisfies that $u=(\lambda^{\{i,j\}}_j)^{-1}(r) \in B_j$, and that $x_u=0$ --- this verifies that all the variables in $N_{i,S}$ outside the block $B_i$ are also set to $0$. In total, we again have that this entire operation is the AND of $|\Gamma|=O(n)$ many, $\polylog(n)$-depth decision tree evaluations.

    Finally, the verifier reads the last component in Step (4) of the certificate description. If this is a padding entry, indicating that the certificate description is for a $0$-certificate, the verifier checks that all the variables $y_u$ for $u \in \{a_1,\dots,a_k\}$ satisfy that $y_u=0$. If this is not a padding entry, but an integer $t^\star$, indicating that the certificate description is for a $1$-certificate, then the verifier checks whether $y_{a_{t^\star}}=1$. Again, this is an AND of $O(n)$ many, $\polylog(n)$-depth decision tree evaluations.

    Combining all of the above, we obtain that the entire verification process is an AND of $O(n)$ many deterministic decision tree evaluations having depth $\polylog(n)$. Since each $\polylog(n)$-depth decision tree evaluation can be represented exactly by a polynomial of degree $\polylog(n)$, and the AND function on $O(n)$ arguments can be $\eps$-approximated by a polynomial of degree $O(\sqrt{n\log(1/\eps)})$ \citep{nisan1994degree}, we get that the composition --- a polynomial of degree $O(\sqrt{n}\cdot \polylog(n, 1/\eps))$ --- $\eps$-approximates the verification.

    \paragraph{Soundness and Completeness of Verifier.} Suppose the verifier accepts a certificate description. The structural checks in Step (1) ensure that $i \in [n]$, and that $S =\{a_1,\dots,a_k\} \subseteq B_i, |S|=k$. They also verify that all the variables in the positive term $P_{i,S}$ are set to 1, and all the variables in the negative term $N_{i,S}$ within the block $B_i$ are set to $0$. Next, if the checks in Step (2) pass, this confirms that $m_{i,j}(S) \le \ell_j$ for every $j \neq i$. Thereafter, the check in Step (3) ensures that there is a list entry $(j,r)$ for every $j\neq i$, $1 \le r \le \ell_j$. Since $\ell_j \ge m_{i,j}(S)$, the check on the relevant variables being 0 ensure that for every $j \neq i$, the variables $x_u$ for $u$ in the set
    \begin{align*}
        A_{i,j}(S)=\{y \in B_j: \lambda^{\{i,j\}}_j(y) \le m_{i,j}(S)\}
    \end{align*}
    are all set to 0. Together with the checks in Step (1) above, this establishes that all the variables in the negative term $N_{i,S}$ are set to 0. This confirms that the DNF $f(x)$ evaluates to 1. Finally, the check in Step (4) ensures that, if the certificate claims to be a 0-certificate, then indeed all the variables $y_u$ for $u \in S$ are set to 0, which means that $H(x,y)=0$; on the other hand, if the certificate claims to be a 1-certificate, then $y_{a_{t^\star}}=1$, and hence $H(x,y)=1$. Thus, every certificate description accepted by the verifier validly certifies $H$, establishing its soundness.

    Now consider any input $(x,y)$ for which $H(x,y)=0$ or $H(x,y)=1$. Let $(i,S)$ be the term for which $f(x)=1$, and let $S=\{a_1 < \dots, a_k\}$ --- these constitute Step (1) of the certificate description. For Step (2), for each $j \neq i$, we set $t_j$ to be the element $u \in B_i$ which attains $\lambda^{\{i,j\}}_i(u)=m_{i,j}(S)$, and set $\ell_j=m_{i,j}(S)$. The construction of the DNF $f$ guarantees that $\sum_{j \neq i} m_{i,j}(S) \le N \le O(n)$, and hence all required $(j,r)$ entries fit into the list slots in Step (3). Finally, if $H(x,y)=1$, set $t^\star$ to be such that $y_{a_{t^\star}}=1$. Otherwise, if $H(x,y)=0$, then every $y_u$ for $u \in S$ is equal to $0$. Thus, all the checks of the verifier pass, and the verifier accepts correctly, establishing the completeness of the verifier.

    \paragraph{Total function via cheat-sheet.} We now use the cheat sheet mechanism due to \citet{aaronson2016separations} which was also used in \cite{balodis2023unambiguous}, in order to build a total function $G$ from several copies of the partial function $H$, such that the $0$-certificate complexity of the total function is large.

    Towards this, let $L=2^r$, where $L > \tau(\mcP)$ but $L = \Theta(n^2)$. An input to the total function $G$ comprises:
    \begin{enumerate}
        \item $r$ separate inputs $z^{(1)}, \dots, z^{(r)}$ to $H$.
        \item An array of $L$ cells, each indexed by $c \in \{0,1\}^r$. Each cell comprises $r$ certificate descriptions, one for each input to $H$ above.
    \end{enumerate}
    We say that a cell $c=(c_1,\dots,c_r)$ is \textit{valid} for the inputs $z^{(1)},\dots,z^{(r)}$, if for every $a \in [r]$, the $a^\text{th}$ certificate description is accepted by the verifier as a $c_a$-certificate for $z^{(a)}$.

    Then, the total function $G$ evaluates to $1$ iff at least one cell $c$ is valid for the inputs $z^{(1)}, \dots, z^{(r)}$, and 0 otherwise. Note that the function is total since it is defined for all inputs.

    Moreover, observe also that at most one cell can ever be valid for any inputs $z^{(1)}, \dots, z^{(r)}$. Otherwise, if there were two distinct cells $c$ and $d$ that were both valid, where $c_a \neq d_a$, then by the soundness of the verifier, $H(z^{(a)})$ would simultaneously be certified to be both 0 and 1, which is a contradiction.

    \paragraph{Lower-bounding $\C(G)$.} Consider the input $Z_0$ to $G$, where each of the $r$ inputs to $H$ are equal to $(0^{|U|}, 0^{|U|})$, and all of the array bits are 0. Since $H(z^{(a)})=*$ for every $a$, none of the cells are valid for the inputs, and thus $G(Z_0)=0$. We will now argue that $Z_0$ has large $0$-certificate complexity, namely
    \begin{align*}
        \C_0(G, Z_0) \ge \tau(\mcP)=\Omega(n^2).
    \end{align*}
    Towards this, let $\rho$ be any partial assignment consistent with $Z_0$, and satisfying that $|\rho| < \tau(\mcP)$. Because $L > \tau(\mcP)$, there is at least one cell $c$ such that every entry in $\rho$ within the cell $c$ is equal to $*$. Now consider any $a \in [r]$, and let $\rho_a$ be the restriction of $\rho$ to the $a^\text{th}$ input $z^{(a)}$. Since $|\rho_a| \le |\rho| < \tau(\mcP)$, from the lower bound \eqref{eqn:partial-function-certificate-cxty-lb} for the partial function $H$ above, we have that:
    \begin{itemize}
        \item If $c_a=0$, then $\rho_a$ is not a $\overline{0}$-certificate for $z^{(a)}$, meaning that there is an extension $z'_a$ of $\rho_a$ for which $H(z'_a)=0$.
        \item If $c_a=1$, then $\rho_a$ is not a $\overline{1}$-certificate for $z^{(a)}$, meaning that there is an extension $z'_a$ of $\rho_a$ for which $H(z'_a)=1$.
    \end{itemize}
    Extend $\rho_a$ for every $a \in [r]$ as above. By the completeness of the certificate description, there is then a valid certificate description certifying that $H(z'_a)$ evaluates to $c_a$ for every $a$, which is accepted by the verifier. Since every entry within the cell $c$ was $*$, we can fill the cell with all these $r$ valid certificate descriptions. We then extend all the other remaining $*$ entries in $\rho$ arbitrarily. We have thus extended $\rho$ to an input $Z_1$ consistent with $\rho$, for which cell $c$ is valid, and hence $G(Z_1)=1$. Thus, $\rho$ was not originally a $0$-certificate for $G$ at $Z_0$, giving that
    \begin{align}
        \label{eqn:certificate-cxty-lb-cheat-sheet-total-function}
        \C(G) \ge \C_0(G, Z_0) \ge \Omega(n^2).
    \end{align}

    \paragraph{Upper-bounding $\widetilde{\Deg}(G)$.}
    We will now show that $G$ has small approximate degree. For any cell $c \in \{0,1\}^r$, define $G_{c}=1$ iff cell $c$ is valid. Then, we have that
    \begin{align*}
        G = \bigvee_{c \in \{0,1\}^r} G_{c},
    \end{align*}
    and as argued above, this OR is unambiguous: at most one $G_c$ can ever be 1.

    Now fix cell $c$. To evaluate $G_c$, i.e., verify if cell $c$ is valid, we must verify $r=O(\log n)$ certificate descriptions. As shown above, each such verification is an AND over $O(n)$ many decision tree evaluations of depth $\polylog(n)$; in total, the evaluation of $G_c$ is an AND over $O(n\log n)$ many decision tree evaluations of depth $\polylog(n)$, and can hence be $\eps$-approximated by a polynomial of degree at most $O(\sqrt{n}\cdot \polylog(n, 1/\eps))$. 
    Now choose $\eps = 1/(3L)$; since $L=\Theta(n^2)$, we get that
    there is a polynomial $p_c$ of degree $\widetilde{O}(\sqrt{n})$ such that
    \begin{align*}
        |p_c - G_c| \le 1/(3L)
    \end{align*}
    on all inputs. Then, define the polynomial
    \begin{align*}
        p = \sum_{c \in \{0,1\}^r} p_c.
    \end{align*}
    The degree of $p$ is still $\widetilde{O}(\sqrt{n})$. But now, observe that if $G=0$, then every $G_c=0$, and so
    \begin{align*}
        |p-G| = |p| \le \sum_{c \in \{0,1\}^r} |p_c| = \sum_{c \in \{0,1\}^r} |p_c-G_c| \le L \cdot 1/(3L) = 1/3.
    \end{align*}
    On the other hand, if $G=1$, then exactly one $G_c=1$, and every other $G_c=0$. Thus,
    \begin{align*}
        |p-G| = |p-1| = \left|\sum_{c \in \{0,1\}^r}(p_c - G_c)\right| \le  \sum_{c \in \{0,1\}^r} |p_c-G_c|  \le L \cdot 1/(3L) = 1/3.
    \end{align*}
    We therefore have that $p$ $1/3$-approximates $G$, meaning that
    \begin{align*}
        \widetilde{\Deg}_{1/3}(G) = \widetilde{\Deg}(G) \le \widetilde{O}(\sqrt{n}).
    \end{align*}

    \paragraph{Concluding the quartic separation.} We have thus constructed a total function $G$ for which 
    \begin{align*}
        \C(G) \ge \Omega(n^2), \qquad \widetilde{\Deg}(G) \le \widetilde{O}(\sqrt{n}).
    \end{align*}
    Thus, $G$ enjoys the desired quartic separation
    \begin{align*}
        \C(G) \ge \widetilde{\Omega}(\widetilde{\Deg}(G)^4).
    \end{align*}
\end{proof}

\section{Lower Bound for Multiclass Sample Compression}
\label{sec:multiclass-sample-compression}

In this section, we restate and prove the lower bound on sample compression schemes for multiclass concept classes. For a formal definition of a sample compression scheme, we refer the reader to Definition 6 in \citet{pabbaraju2024multiclass}.

\Compression*
\begin{proof}
    We will show: for infinitely many $n$, there exists a multiclass concept class $\mcC$ that uses at most $2^{O(n^2)}$ labels and has Natarajan dimension 1, such that any sample compression scheme for $\mcC$ must have size at least $\Omega(n)$. The theorem statement then follows by reparameterizing.

    Our proof is largely based on the sample compression lower bound for multiclass concept classes shown in \cite{pabbaraju2024multiclass}; we refer the reader to that proof, while giving the necessary details below.

    Choose an $n$ for which \Cref{thm:alon-saks-seymour} holds: namely, there exists a graph $G=(V,E)$ having $2^{\Theta(n^2)}$ vertices that admits a biclique partition of size $2^{O(n)}$ but has chromatic number $2^{\Omega(n^2)}$. Fix a partition of $G$ into at most $2^{O(n )}$ bicliques. The domain $\mcX$ of the partial concept class $\mcP$ is the set of bicliques in this partition. Note that each biclique is determined by a set $L$ of left vertices and a set $R$ of right vertices, such that every vertex in $L$ is connected to every vertex in $R$. Now, for each vertex $v$ in $G$, we will have a partial concept $p_v$, which maps a biclique $Q=(L, R)$ in the partition as follows:
    \begin{align*}
        p_v(Q) = \begin{cases}
            0 & \text{if $v \in L$,} \\
            1 & \text{if $v \in R$,} \\
            * & \text{otherwise.}
        \end{cases}
    \end{align*}
    This partial class was shown by \cite{alon2022theory} to have VC dimension 1. Now consider the following total concept class $\mcC$: for every $p_v \in \mcP$, the class $\mcC$ has $c_v$ which maps any biclique $Q$ as
    \begin{align*}
        c_v(Q) = \begin{cases}
        p_v(Q) & \text{if $p_v(Q) \neq *$}, \\
        v & \text{otherwise.}
        \end{cases}
    \end{align*}
    By construction, $\mcC$ disambiguates $\mcP$ (i.e., for every $p_v \in \mcP$, there is a total concept $c_v \in \mcC$ consistent with $p_v$); furthermore, observe that it uses at most $|V|+2$ labels. \cite{pabbaraju2024multiclass} shows that the class $\mcC$ has \textit{DS dimension} 1, and since the Natarajan dimension is never larger than the DS dimension, its Natarajan dimension is also 1.

    Now let $(\kappa, \rho)$ be any sample compression scheme for $\mcC$ of size $k$. Since $\mcC$ disambiguates $\mcP$, $(\kappa, \rho)$ is also a sample compression scheme for $\mcP$. Notably, this means that the support of any partial concept $p_v$ (i.e., the domain points at which it is not $*$, but either 0 or 1) gets compressed by $\kappa$ to a subsequence of size at most $k$, and applying the reconstructor $\rho$ to this compressed subsequence recovers all the labels correctly on the support, meaning also that the reconstructor output disambiguates $p_v$. This means that, if we were to iterate over all possible subsequences of the domain, all possible labelings of the subsequence by a concept in $\mcP$, all possible side-information strings of size at most $k$, and collect the total concepts output by the reconstructor on all of these, then the resulting total concept class would disambiguate $\mcP$. The number of reconstructor outputs we need to consider is at most
    \begin{align*}
        |\mcX|^{O(k)} \cdot 2^{k} \cdot 2^{k} = 2^{O(kn)}.
    \end{align*}
    But \cite{alon2022theory} also show that any disambiguating class for $\mcP$ determines a valid coloring of the vertices in $G$ using a number of colors equal to the size of the disambiguating class. Since the chromatic number of $G$ is at least $2^{\Omega(n^2)}$, we thus have that
    \begin{align*}
        2^{O(kn)} \ge 2^{\Omega(n^2)} \implies k \ge \Omega(n).
    \end{align*}  
\end{proof}

\subsection*{Acknowledgements}
Various ideas used in the paper were individually developed over numerous lengthy rounds of interaction with ChatGPT and Codex. %
This work is supported by Moses Charikar's and Greg Valiant's Simons Investigator Awards, and a Google PhD Fellowship. I would like to thank Greg Valiant for encouraging me to think about multiclass sample compression again. %

\bibliographystyle{plainnat} 
\bibliography{references}

\appendix
\section{Optimality of \Cref{corollary:partial-functions,corollary:intersecting-hypergraphs}}
\label{sec:optimality-partial-functions-intersecting-hypergraphs}

We give direct upper bounds showing that the separations in \Cref{corollary:partial-functions} and \Cref{corollary:intersecting-hypergraphs} are optimal up to constant factors.

\begin{claim}
    \label{claim:partial-functions-separation-optimality}
    For any partial function $f$ and for any $x \in f^{-1}(*)$, it holds that
    \begin{align*}
        \min\left\{\C_{\overline{0}}(f, x), \C_{\overline{1}}(f, x)\right\} \le \C(f)^2.
    \end{align*}
\end{claim}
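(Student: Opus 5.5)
The plan is to build, for a fixed $x \in f^{-1}(*)$, a partial assignment of size at most $\C(f)^2$ that is either a $\overline{0}$-certificate or a $\overline{1}$-certificate for $x$. We will use the classical ``every 0-certificate intersects every 1-certificate'' argument, adapted to partial functions. Let $c = \C(f)$, so every $0$-input has a $0$-certificate of size at most $c$, and every $1$-input has a $1$-certificate of size at most $c$.

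\textbf{Step 1 (build $D$).} Run a greedy process that maintains a set $D$ of queried coordinates, initially empty. At each round, consider the partial assignment $x|_D$. If no $1$-input is consistent with $x|_D$, then $x|_D$ is a $\overline{1}$-certificate for $x$, and we stop. Otherwise, pick a $1$-input $y$ consistent with $x|_D$, take a $1$-certificate $\sigma$ of $y$ with $|\sigma|\le c$, and add $\dom(\sigma)$ to $D$.

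\textbf{Step 2 (progress).} The key observation is that after adding $\dom(\sigma)$, the assignment $x|_D$ disagrees with $\sigma$ somewhere. Suppose it did not. Then $x$ agrees with $\sigma$ on $\dom(\sigma)$, so $x$ is consistent with a $1$-certificate and hence $f(x)=1$, contradicting $x\in f^{-1}(*)$.

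\textbf{Step 3 (bound the rounds).} To bound the number of rounds by $c$, we use the $0$-side, which is the main step. If the process never stops, we want to show that $x|_D$ eventually becomes a $\overline{0}$-certificate. The cleaner symmetric formulation is this: fix any $0$-input $w$ consistent with $x|_D$ at the end, if one exists, together with a $0$-certificate $\tau$ of $w$ of size at most $c$. Each $1$-certificate $\sigma_t$ picked in round $t$ must conflict with $\tau$, since a $0$-certificate and a $1$-certificate cannot be simultaneously satisfiable. The conflict must occur on a coordinate where $\sigma_t$ and $\tau$ disagree. That coordinate cannot lie in an earlier $D$, because both $\tau$'s input $w$ and $\sigma_t$'s input $y_t$ agree with $x$ there. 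So $\tau$ has a coordinate in $\dom(\sigma_t)$ outside the previously queried set, and distinct rounds give distinct coordinates of $\tau$.

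To make this rigorous, I will fix the stopping rule as ``stop after $c$ rounds or earlier if no $1$-input is consistent''. Then $|D|\le c\cdot c$. In the case of running all $c$ rounds, I claim $x|_D$ is a $\overline{0}$-certificate. Suppose a $0$-input $w$ is consistent with $x|_D$, with $0$-certificate $\tau$, $|\tau|\le c$. Each of the $c$ rounds consumes a fresh coordinate of $\dom(\tau)$, where $\tau$ disagrees with $\sigma_t$ but $\sigma_t$ agrees with $x$ on prior $D$. By the end, all of $\dom(\tau)$ is in $D$, where $\tau$ agrees with $w$ and hence with $x$. So $x$ is consistent with $\tau$, forcing $f(x)=0$, a contradiction.

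\textbf{Main obstacle.} The expected obstacle is verifying that the conflict coordinate in round $t$ is new, meaning not in $D$ before round $t$. This holds because on old $D$ both $y_t$ and $w$ agree with $x$, so $\sigma_t$ and $\tau$ cannot disagree there. Each round thus strictly increases $|\dom(\tau)\cap D|$, which gives the $c$-round bound and hence $\min\{\C_{\overline{0}}(f,x),\C_{\overline{1}}(f,x)\}\le \C(f)^2$.
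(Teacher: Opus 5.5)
Your proof is correct, but it takes a different route from the paper. You run the classical adaptive argument behind $D(f)\le \C_0(f)\C_1(f)$ along the path of $x$. You repeatedly add $\dom(\sigma_t)$ for a $1$-certificate of some $1$-input still consistent with $x|_D$. Either the $1$-inputs run out early, giving a $\overline{1}$-certificate of size at most $c(c-1)$. Or, after $c$ rounds, any surviving $0$-certificate $\tau$ would have had a fresh coordinate revealed in every round. That coordinate is fresh because the conflict coordinate between $\sigma_t$ and $\tau$ cannot lie in $D_{t-1}$, where both agree with $x$. So $\dom(\tau)\subseteq D$, which makes $x$ consistent with $\tau$, a contradiction. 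The paper's proof is instead static and combinatorial. It looks at the disagreement sets $D_x(\rho)$ of the chosen $0$-certificates and splits on their matching number. If the matching number is at most $k$, the union of a maximal disjoint subfamily hits every $D_x(\rho)$, giving a $\overline{0}$-certificate of size at most $k^2$. Otherwise there are $k+1$ pairwise disjoint $D_x(\rho_j)$, and every $1$-certificate (of size at most $k$) must disagree with $x$ somewhere on $\bigcup_j \bigl(\dom(\rho_j)\setminus D_x(\rho_j)\bigr)$, giving a $\overline{1}$-certificate of size at most $k^2-1$. This mirrors the paper's argument for monochromatic hitting sets in intersecting hypergraphs. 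Your argument is the more familiar decision-tree one. Running $\C_0(f)$ rounds with $1$-certificates of size at most $\C_1(f)$, it gives the asymmetric bound $\min\{\C_{\overline{0}}(f,x),\C_{\overline{1}}(f,x)\}\le \C_0(f)\C_1(f)$ verbatim. Both proofs rest on the same two facts about the strict certificate notion: $x\in f^{-1}(*)$ is consistent with no $0$- or $1$-certificate, and every $0$-certificate conflicts with every $1$-certificate. Your Step~2 is never used and can be dropped.
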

\begin{proof}
    Fix $f$, and any $x \in f^{-1}(*)$. Let $k := \C(f)$. For every $0$-input to $f$, choose a $0$-certificate of size at most $k$, and let $\mcC_0$ be the resulting collection of all the $0$-certificates. Similarly, for every $1$-input to $f$, choose a $1$-certificate of size at most $k$, and let $\mcC_1$ be the resulting collection of all the $1$-certificates.

    For any certificate $\rho$, define its disagreement set with $x$ as
    \begin{align*}
        D_x(\rho) := \{i \in \dom(\rho) : \rho_i \neq x_i\}.
    \end{align*}
    Since $f(x)=*$, it must be the case that $D_x(\rho) \neq \emptyset$ for every $\rho \in \mcC_0 \cup \mcC_1$.

    Now, let $\tau$ be a certificate that is consistent with $x$, for which $\dom(\tau)$ intersects every $D_x(\rho)$ for $\rho \in \mcC_0$. Then, observe that $\tau$ is a $\overline{0}$-certificate for $x$. Otherwise, there would be an extension of $\tau$, that is consistent with $\tau$, and is a 0-input to $f$, which means it should be consistent with its $0$-certificate in $\mcC_0$. But since $\tau$ is inconsistent with every $\rho \in \mcC_0$, this is a contradiction. By a similar argument, we have that any $\beta$ that is consistent with $x$, and for which $\dom(\beta)$ intersects every $D_x(\sigma)$ for $\sigma \in \mcC_1$ is a $\overline{1}$-certificate for $x$.

    So, consider the family of sets
    \begin{align*}
        \mcD_0 := \{\mcD_x(\rho):\rho \in \mcC_0\}, \qquad \mcD_1 := \{\mcD_x(\sigma):\sigma \in \mcC_1\}.
    \end{align*}
    Suppose that the size of the largest subfamily of $\mcD_0$ of pairwise disjoint sets, i.e., the \textit{matching number} of $\mcD_0$, is at most $k$, and consider any maximal subfamily comprising of $\mcD_x(\rho_1),\dots, \mcD_x(\rho_t)$ of pairwise disjoint sets, where $t \le k$. This means that the union
    \begin{align*}
        \mcD_x(\rho_1) \cup \dots \cup \mcD_x(\rho_t)
    \end{align*}
    hits every member of $\mcD_0$; its size is at most $tk \le k^2$. Thus, in this case,
    \begin{align*}
        \C_{\overline{0}}(f, x) \le k^2.
    \end{align*}
    In the other case, there exist $k+1$ sets $\mcD_x(\rho_1),\dots, \mcD_x(\rho_{k+1})$ that are pairwise disjoint, where every $\rho_i \in \mcC_0$.

    Let $T$ be defined as
    \begin{align*}
        T  := \bigcup_{j=1}^{k+1} \dom(\rho_j) \setminus D_x(\rho_j).
    \end{align*}
    Since $\dom(\rho_j)$ has size at most $k$, and every $D_x(\rho_j)$ is non-empty, we have that
    \begin{align*}
        \left|\dom(\rho_j) \setminus D_x(\rho_j)\right| \le k-1.
    \end{align*}
    Thus,
    \begin{align*}
        |T| \le (k+1)(k-1) = k^2-1.
    \end{align*}
    We now claim that $T$ hits every member of $\mcD_1$. Suppose not, meaning that for some $\sigma \in \mcC_1$,
    \begin{align*}
        T \cap D_x(\sigma) = \emptyset.
    \end{align*}
    Now, each $\rho_j$ is a $0$-certificate for $f$, whereas $\sigma$ is a $1$-certificate for $f$; thus, every $\rho_j$ must be inconsistent with $\sigma$. However, $\rho_j$ and $\sigma$ cannot disagree at a coordinate in $\dom(\rho_j) \setminus D_x(\rho_j)$, since these are the coordinates at which $\rho_j$ agrees with $x$, and hence, if $\sigma$ disagrees with a coordinate within this set, it would disagree with $x$ at that coordinate, contradicting that $T \cap D_x(\sigma)=\emptyset$. Thus, every $\rho_j$ has a disagreement with $\sigma$ within $D_x(\rho_j)$.

    But the sets $D_x(\rho_1),\dots,D_x(\rho_{k+1})$ are pairwise disjoint. So, $\sigma$ having a disagreement within each of these sets would in particular imply that $|\sigma| \ge k+1$, which contradicts $|\sigma| \le k$. We conclude that $T$ hits every member of $\mcD_1$, which implies 
    \begin{align*}
        \C_{\overline{1}}(f, x) \le k^2-1.
    \end{align*}
    Combining the two cases above, we obtain
    \begin{align*}
        \min\left\{\C_{\overline{0}}(f, x), \C_{\overline{1}}(f, x)\right\} \le k^2,
    \end{align*}
    and the claim follows by recalling that $k=\C(f)$.
\end{proof}

\begin{claim}
    \label{claim:intersecting-hypergraphs-separation-optimality}
    For any intersecting hypergraph $G=(V, E)$ and for any coloring $c:V \to \{0,1\}$, there exists a $c$-monochromatic hitting set that has size at most $r(G)^2$.
\end{claim}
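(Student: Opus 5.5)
Let $r := r(G)$, $V_0 := c^{-1}(0)$ and $V_1 := c^{-1}(1)$. The plan mirrors the proof of \Cref{claim:partial-functions-separation-optimality}: use a maximal matching in one color class, and if that matching is large, exploit the intersecting property to hit everything with the other color.

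For each edge $e \in E$ write $e^0 := e \cap V_0$ and $e^1 := e \cap V_1$. First dispose of degenerate cases. If some edge $e$ is entirely contained in $V_0$, then, since $G$ is intersecting, $e$ itself meets every edge. It is therefore a $0$-monochromatic hitting set of size at most $r$. The same holds if some edge lies inside $V_1$. Hence we may assume $e^0 \neq \emptyset$ and $e^1 \neq \emptyset$ for every edge, so $|e^b| \le r-1$ for each $b$. (If $E=\emptyset$ the empty set works.)

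Now consider the family $\mcD_0 := \{e^0 : e \in E\}$ and take a maximal subfamily of pairwise disjoint members $e_1^0,\dots,e_t^0$.

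\textbf{Case 1: $t \le r$.} By maximality, every $e^0$ meets $\bigcup_j e_j^0$. Since $e^0 \subseteq e$, the set $\bigcup_j e_j^0 \subseteq V_0$ hits every edge. It has size at most $t(r-1) \le r^2$, so it is a $0$-monochromatic hitting set of the required size.

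\textbf{Case 2: $t \ge r+1$.} Take $r+1$ pairwise disjoint sets $e_1^0,\dots,e_{r+1}^0$ and set $T := \bigcup_{j=1}^{r+1} e_j^1 \subseteq V_1$, of size at most $(r+1)(r-1) = r^2-1$. We claim $T$ hits every edge $f$. Suppose $f \cap T = \emptyset$. Since $G$ is intersecting, $f \cap e_j \neq \emptyset$ for each $j$. Because $f$ avoids $e_j^1$, the intersection must lie in $e_j^0$, so $f^0$ meets each of the $r+1$ disjoint sets $e_j^0$. This gives $|f| \ge r+1$, contradicting the definition of $r$. Thus $T$ is a $1$-monochromatic hitting set of size at most $r^2$.

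The only delicate point is the reduction to edges with both colors. It ensures the $(r-1)$ size bounds, which give $r^2-1$ in Case 2, and it handles the trivial monochromatic-edge situation; otherwise the argument is a direct transcription of the previous claim. In fact, without the reduction, Case 1 still gives at most $r \cdot r$, and Case 2 still gives at most $(r+1)r$. The $(r-1)$ bounds are what tighten this to $r^2$.
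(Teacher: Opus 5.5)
Your proof is correct and follows essentially the same route as the paper. It first disposes of monochromatic edges via the intersecting property, then takes a maximal pairwise-disjoint subfamily of the $0$-parts of edges. Either the union of that subfamily works (at most $r$ members), or the union of the $1$-parts of $r+1$ disjoint members works, with the $(r-1)$ size bound giving $(r+1)(r-1) < r^2$. The only cosmetic difference is that you split on the size of an arbitrary maximal disjoint subfamily rather than on the matching number, which is equally valid.
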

\begin{proof}
    Let $G=(V,E)$ be any intersecting hypergraph having rank $r(G)$, and let $c:V \to \{0,1\}$ be any coloring of its vertices. We will show that there is either a $0$-monochromatic or $1$-monochromatic hitting set of size at most $r(G)^2$.

    First, observe that if any edge is monochromatic, we are done, since the hypergraph is intersecting, which implies that the edge itself is a monochromatic hitting set of size at most $r(G)$. So, let us assume that every edge contains at least one $0$-vertex and one $1$-vertex.

    For each edge $e$, let $A(e)$ comprise the $0$-vertices, and $B(e)$ comprise the $1$-vertices in $e$. Consider the family of sets
    \begin{align*}
        \mcA := \{A(e): e \in E\}.
    \end{align*}
    Note that every set in $\mcA$ is non-empty, and has size at most $r(G)-1$. Let $\nu(\mcA)$ be the matching number of $\mcA$, i.e., the largest number of pairwise disjoint members in $\mcA$. There are two cases: 

    \paragraph{Case 1: $\nu(\mcA) \le r(G)$.}
    Let $A(e_1),\dots, A(e_t)$ be a maximal subfamily of $\mcA$ that is pairwise disjoint, where $t \le r(G)$. By maximality, it holds that the union
    \begin{align*}
        U_0 = A(e_1) \cup \dots \cup A(e_t) 
    \end{align*}
    hits every member of $\mcA$. We thus have that $U_0$ is a $0$-monochromatic set of size at most $t\cdot r(G) \le r(G)^2$ that hits every edge in the hypergraph.
    
    \paragraph{Case 2: $\nu(\mcA) > r(G)$.} Let $A(e_1),\dots, A(e_{r(G)+1})$ be pairwise disjoint. Consider the set
    \begin{align*}
        U_1 = B(e_1) \cup \dots \cup B(e_{r(G)+1}).
    \end{align*}
    Again, every $|B(e_i)| \le r(G)-1$, and hence $|U_1| \le (r(G)+1)(r(G)-1) = r(G)^2-1$.
    We now claim that $U_1$ hits every edge in $E$. Suppose not, meaning that for some edge $f$, $U_1 \cap f = \emptyset$. Since $G$ is intersecting, $f$ has non-empty intersection with every $e_i$. But since $f \cap U_1 = \emptyset$, it must be the case that for every $i$,
    \begin{align*}
        \emptyset \neq f \cap e_i \subseteq A(e_i).
    \end{align*}
    Since $A(e_1),\dots,A(e_{r(G)+1})$ are pairwise disjoint, we obtain that $f$ contains at least $r(G)+1$ vertices, contradicting that $|f| \le r(G)$. Thus, $U_1$ is a $1$-monochromatic set that hits every edge in the hypergraph.

    Combining the two cases, we have shown that $G$ always contains a $c$-monochromatic hitting set of size at most $r(G)^2$.
\end{proof}

\section{Proof of \Cref{thm:alon-saks-seymour}}
\label{sec:alon-saks-seymour-proof}

We restate and give the necessary proof details of \Cref{thm:alon-saks-seymour}.
\AlonSaksSeymour*
\begin{proof}
    We first use our lifting theorem to lift the unambiguous DNF separation of \Cref{thm:unambigous-dnf} to a separation in communication complexity. Namely, \Cref{thm:constant-gadget-lifting} gives a gadget $g: \{0,1\}^{3} \times \{0,1\}^3 \to \{0,1\}$ such that the communication problem $f \circ g^{n^2}: \{0,1\}^{3n^2} \times \{0,1\}^{3n^2} \to \{0,1\}$ obtained by lifting the DNF from \Cref{thm:unambigous-dnf} with the gadget $g$ satisfies
    \begin{align*}
        \log \Cov_0(f \circ g^{n^2}) \ge \Omega(n^2).
    \end{align*}
    Furthermore, it holds that
    \begin{align*}
        \log \Par_1(f \circ g^{n^2}) \le O(\UC_1(f)),
    \end{align*}
    since for any $n$-variate function $f$ and any $k$-sized gadget, it holds that $\log \Par_1(f \circ g^{n}) \le O(k \cdot \UC_1(f))$. Combining the two bounds, we obtain a function $h:\{0,1\}^{\Theta(n^2)} \times \{0,1\}^{\Theta(n^2)} \to \{0,1\}$ for which
    \begin{align*}
        \log \Cov_0(h) \ge \Omega(n^2), \qquad \log \Par_1(h) \le O(n).
    \end{align*}
    We can now directly apply Lemma 3.5 from \cite{cheung2023online} to the function $h$, which converts $h$ into a graph $G$ having $2^{\Theta(n^2)}$ vertices, and satisfying 
    \begin{align*}
        \bp(G) \le \Par_1(h)^2 \le 2^{O(n)}, \qquad \chi(G) \ge \sqrt{\Cov_0(h)} \ge 2^{\Omega(n^2)}.
    \end{align*}
\end{proof}

\section{Proof of \Cref{corollary:sensitivity}}
\label{sec:sensitivity-proof}

We restate and give the necessary proof details for \Cref{corollary:sensitivity}.

\Sensitivity*
\begin{proof}
    The proof is essentially contained in the proofs of Lemma 12 and Theorem 1 in \cite{ben2017low}; we flesh out the main steps here. Let $f$ be the unambiguous DNF from \Cref{thm:unambigous-dnf}, for which
    \begin{align*}
        \C_0(f) \ge \Omega(n^2), \qquad \UC_1(f) \le O(n).
    \end{align*}
    In fact, since $\C_0(f) \le \UC_1(f)^2$ always, we have that $\UC_1(f)=\Theta(n)$. Let $u := \UC_1(f)$; we have that $u=\Theta(n)$.

    Now, we apply the \textit{desensitizing transformation} introduced in Definition 11 in \cite{ben2017low} to $f$. Namely, consider the function $f':\{0,1\}^{3n^2} \to \{0,1\}$, which takes three inputs $x,y,z \in \{0,1\}^{n^2}$ and outputs
    \begin{align*}
        f'(x,y,z)=1 \text{ iff } &f(x)=f(y)=f(z)=1 \text{ and the unique term that } \\ &\text{evaluates to 1 is the same for $x,y$ and $z$.}
    \end{align*}
    Lemma 12 in \cite{ben2017low} shows that
    \begin{align*}
        \C_0(f') \ge \C_0(f).
    \end{align*}
    Now define the function $F := \mathrm{OR}_{3u} \circ f'$, i.e., $F$ takes $3u$ inputs to $f'$ and outputs their OR. \cite{ben2017low} show (in the proof of their Theorem 1) that 
    \begin{align}
        \label{eqn:sensitivity-ub}
        \s(F) \le 3u.
    \end{align}
    Now consider the 0-certificate complexity of $F$. To certify that $F=0$, one must certify that each of the $3u$ evaluations of $f'$ is $0$. Thus,
    \begin{align}
        \label{eqn:0-complexity-lb}
        \C_0(F) = 3u \cdot \C_0(f') \ge 3u \cdot \C_0(f) = \Omega(u^3).
    \end{align}
    Combining \eqref{eqn:sensitivity-ub} and \eqref{eqn:0-complexity-lb}, we get that $\C_0(F) \ge \Omega(\s(F)^3)$. This completes the proof.
\end{proof}

\end{document}